\documentclass{article}[12pt]
\usepackage[english]{babel}
	
\usepackage{lmodern}

\usepackage[top=2cm, bottom=1.8cm,left=2.5cm,right=2.5cm,head=2cm,headsep=0.5cm]{geometry} 

\usepackage{mdframed} 
\usepackage{amsmath}		
\usepackage{amssymb}		
\usepackage{mathrsfs}
\usepackage{latexsym}	
\usepackage{array}
\usepackage{multirow}
\usepackage{amsthm, array, braket, color}
\usepackage{tikz,pgf}
\usepackage{hyperref}
\usepackage{enumitem}
\usepackage{authblk}

\usepackage[toc,page]{appendix} 
\usepackage{fullpage}

\newcommand{\R}{\mathbb{R}}

\newcommand{\C}{\mathbb{C}}
\newcommand{\N}{\mathbb{N}}

\newcommand{\ap}{\rightarrow}
\newcommand{\seg}{\geqslant}

\newcommand{\re}{\text{Re }}
\newcommand{\im}{\text{Im }}
\newcommand{\ran}{\text{Ran }}

\newcommand{\norm}[1]{{\left\|{#1}\right\|}}
\newcommand{\ent}[1]{{\left[{#1}\right]}}

\newcommand{\scal}[1]{{\left\langle{#1}\right\rangle}}

\newcommand{\cala}{\mathcal{A}}

\newcommand{\ff}{\mathcal{F}}

\newcommand{\hh}{\mathcal{H}}

\newcommand{\mm}{\mathcal{M}}
\newcommand{\nn}{\mathcal{N}}
\newcommand{\call}{\mathcal{L}}
\newcommand{\bb}{\mathcal{B}}
\newcommand{\cals}{\mathcal{S}}
\newcommand{\caly}{\mathcal{Y}}

\newcommand{\tr}[1]{\text{Tr}\left( {#1} \right) }
\newcommand{\tra}{\text{Tr} }

\renewcommand{\Re}{\text{Re }}

\newtheorem{prop}{Proposition}
\newtheorem{defi}{Definition}
\newtheorem{theo}{Theorem}
\newtheorem{lemma}{Lemma}

\bibliographystyle{plain}

\newmdenv[linecolor=black
          ,topline=false
          ,bottomline=false
          ,rightline=false
          ,leftline=true
          ,leftmargin=0.1cm
          ,linewidth=0.02cm
          ,skipabove=0cm
          ,innerbottommargin=0.05cm
          ,skipbelow=0.05cm
          ]{subproof}

\title{Repeated interaction processes in the continuous-time limit, applied to quadratic fermionic systems\footnote{Work supported by ANR project ``StoQ" N${}^\circ$ANR-14-CE25-0003}}
\author[1]{Simon Andr\'eys}
\affil[1]{Institut Camille Jordan, U.M.R. 5208, Universit\'e Lyon 1, 21 av Claude Bernard, 69622 Villeurbanne cedex, France}

\begin{document}
\maketitle
\vskip -0.5cm
\centerline{ }
\vskip -1mm
\centerline{ }
\vskip -1mm
\centerline{ }
\vskip -1mm
\centerline{ }

\begin{abstract}
We study a class of Lindblad equation on finite-dimensional fermionic systems. The model is obtained as the continuous-time limit of a repeated interaction process between fermionic systems with quadratic Hamiltonians, a setup already used by Platini and Karevski for the one-dimensional XY model. We prove a necessary and sufficient condition for the convergence to a unique stationary state, which is similar to the Kalman criterion in control theory. Several examples are treated, including a spin chain with interactions at both ends.
\end{abstract}
\tableofcontents
\section{Introduction}

  Lindblad equations are one of the many ways to model the behavior of a quantum system in interaction with another one (that is, an open quantum system \cite{OQS} \cite{RivasHuelga}). They arise in the Markovian approach where the second system, called the bath, is not modified by the interaction. The effect of the bath is then taken into account by adding a term to the Schrödinger equation that governs the first system, making it a Lindblad equation; it is the quantum counterpart of the Fokker Plank equation corresponding to a classical Langevin equation (see \cite{briefHistory} for a history of the Lindblad equation). The evolution of the system is no more unitary, but described by a so-called quantum dynamical semigroup $(\Lambda_t)_{t\in [0, \infty)}$ which is made of completely positive trace-preserving maps. This approach has been notably successful in the realm of quantum optics \cite{Carmichael98}, \cite{BoutenGutaMaasen} and to study continuous-time measurement and decoherence. 

A central question in the study of quantum dynamical semigroups is whether the state of the system converges to a stationary state and whether this state is unique. In the case where the bath is at thermal equilibrium, it is expected that the system converges to a Gibbs state at the same temperature as the bath. This article considers the question of convergence and uniqueness on the special class of quadratic fermionic dynamical semigroups, where the stationary state can often be fully described. 

 Quantum dynamical semigroups can be derived in the context of the weak coupling limit, where the interaction between the bath and the system is vanishingly small (see for example \cite{Davies74} or Derezinski and De Roeck \cite{Derezinski2007}). In the case of a larger interaction, the explicit form of the Lindblad equation often needs to be obtained by phenomenological considerations. In \cite{AttalPautrat} Attal and Pautrat  proposed the setup of the continuous-time limit of repeated interactions, in which the bath is modeled as the tensor product of many subsystems that are interacting one after the other with the system. The Lindblad equations studied in this article are obtained by applying this setup to fermionic systems with quadratic Hamiltonians. The equation obtained is of the following form:
\[
\frac{d}{d_t}\rho(t)=-i\big[\, \sum_{i=1}^{2L} T_{i, j}\gamma_i\gamma_j, \, \rho\,\big]-\frac{1}{2}\sum_{1\leq i,j\leq 2L} \left[\Theta M_B \Theta^*\right]_{i, j}\left(\gamma_j \rho \gamma_i-\frac{1}{2}\{\gamma_i\gamma_j, \rho\}\right)
\]
where the $\gamma_i$ are the Majorana operators on the system, where $T$ is a matrix describing the Hamiltonian of the system, where $\Theta$ is a matrix describing the interaction between the system and the bath, and $M_B$ is the covariance matrix of the state on the bath. This kind of evolution is often called a quasi-free semigroup. An important example is the case of the one-dimensional XY model on the spin chain, which can be mapped to a fermionic system. In general, these equations can be thought as describing non-interacting fermions which may jump between the system and the bath. As such, it may seem simplistic and serves more as a toy model to investigate the behavior of open quantum system than as the study of a physical quantum system. These models have the advantage to be explicitely solvable in many case, and still exhibit non-trivial behavior. As an example, the study of the convergence properties on the XY spin chain by Dharhi \cite{Dharhi2008} inspired the improvement of a criterion of convergence for general quantum dynamical semigroups by Fagnola and Rebolledo \cite{FagnolaRebolledoAlgebraicCondition}.

The class of quadratic Lindblad equations on bosonic and fermionic spaces has long been studied, with important examples such as the damped quantum harmonic oscillator, the Dicke laser or the decay of unstable particles (see \cite{AlickiLendi} or \cite{FagnolaRebolledo2002}). Most of these models are on small bosonic systems. More recently, general quasi-free semigroups on fermionic spaces have been studied by Prosen \cite{Prosen2008} \cite{Prosen2010} who described a way to study the convergence and uniqueness properties; his analysis is restricted to the set of even states (i.e. states which commutes with the parity operator). The repeated interaction model on fermionic spaces was introduced by Platini and Karevski \cite{Platini2008} \cite{KarevskiPlatini2009}, with an emphasis on the XY model; they computed explicitly the stationary state in the isotropic case.   The convergence to a unique stationary state for the XY model has been shown in the aforementioned articles of Prosen but also by Dharhi \cite{Dharhi2008} with different methods. Dharhi used a general criterion for the convergence of quantum dynamical semigroup which was developed by Frigerio and Verri \cite{Frigerio1977} \cite{Frigerio1982}. 

The main result of this article is Theorem \ref{theo:uniqueness}. It is a necessary and sufficient condition for the convergence to a unique stationary state for the class of quadratic Lindblad equations on finite-dimensional fermionic spaces. The condition is phrased in terms of the matrices $T$ and $\Theta$ describing the Hamiltonian and the interaction between the bath and the system, and is similar to the Kalman criterion in control theory: there is convergence and uniqueness if and only if the range of $\Theta$ is cyclic by $T$. The theorem result from the general criterion of Fagnola and Rebolledo \cite{FagnolaRebolledoAlgebraicCondition}, but it is valid even if the semigroup admits no faithful invariant state. This is made possible by Proposition \ref{prop:support}, where we show that if $\cals$ is the support of the stationary state with maximal support, then the fermionic space $\hh$ describing the system can be decomposed as the tensor product of two fermionic spaces $\hh_A$ and $\hh_C$ such that $\cals=\{\ket{\Omega_A}\bra{\Omega_A}\}\otimes \hh_C$ where $\ket{\Omega_A}$ is the null state of $\hh_A$.

The criterion is applied to several examples, some of them exhibiting strange behavior: for some specific baths, the stationary state is independent of the Hamiltonian and is the Gibbs state for the number operator. The case of the gauge-invariant spin chain is solved in a slightly more general case than in Karevski and Platini \cite{KarevskiPlatini2009}, and another strange behavior is observed. 
\newline

The article is organised as follows: in a first part, we introduce the basic definitions and theorems on fermionic systems in finite dimension. We introduce the covariance matrix of a state, which is the fermionic analogue of the covariance matrix of a random vector, and the class of quasi-free states, with similar properties as Gaussian vectors.

	In the second part, we describe the continuous-time limit of repeated interaction model, and apply it to quadratic fermionic systems. The properties of the resulting semigroup are exposed: it is shown that the covariance matrix of the system satisfies its own master equation. The condition for the convergence to a unique state is proved.

	Examples in the gauge-invariant case are studied in the fourth part. We show that thermalisation may occur, but not in every case: notably, if the state of the bath is the Gibbs state $e^{-\beta N_B}/Z_B$ where $N_B$ is the number operator on the bath, then the semigroup admits as a stationary state the Gibbs state $e^{-\beta N_S}/Z_S$, where $N_S$ is the number operator on the system, even if the Hamiltonian of the system is not $N_S$. We then study the Gauge-invariant fermionic chain with interactions at both ends, and show that modifying the coupling constant between the bath and the system may change the stationary state in an unexpected way. 

	The last part is about the important example of the XY model. We describe the Jordan-Wigner transform, which allows to map this model to a quadratic fermionic system, and we prove the convergence thanks to our criterion.
\vspace{0.5cm}

{\bf Acknowledgment :} I thank my advisor St\'ephan Attal for introducing this subject to me and helping me throughout the redaction of this article and my co-advisor Claude-Alain Pillet for interesting remarks and discussions about fermionic systems.

\section{Fermionic systems and quasi-free states}

In this section, we recall the classical properties of fermionic spaces, with an emphasis on quadratic Hamiltonians and quasi-free states. The reader may find more details in Araki \cite{Araki1968} or Derezinski and Gerard \cite{DerezinskiGerard}.

\subsection{Basic notations}

For a vector $u$ in a Hilbert space $\hh$, we may use the bra-ket notation : $\ket{u}$ also design the vector $u$, while $\bra{u}$ design the corresponding linear form. The scalar product of two vectors will be written $\scal{u\,,\,v}$. If a conjugation is chosen on $\hh$, the conjugate of a vector is written $\overline{u}$, its real part is $\Re(u)$ and its imaginary part is $\im(u)$.

For an operator $A$ on $\hh$, we write $\ran(A)$ its image, its adjoint is $A^*$, and if there is a conjugation on $\hh$, the conjugate of $A$ is written $\overline{A}$ and its transpose is $A^T=\overline{A}^*$. An operator is self-adjoint if $A^*=A$, symmetric if $A^T=A$, anti-symmetric if $A^T=-A$.

\subsection{Creation and annihilation operators, field operators}

Let $\hh_0$ be a Hilbert space of finite dimension $L$, describing the state of a single particle. The state of an indefinite number of fermions identical to this particle is described by the fermionic space $\hh=\Gamma(\hh_0)$. It is defined that way: let $A(\hh_0)=\bigoplus_{k=0}^L \hh_0^{\otimes k}$, where $\hh_0^{\otimes 0}=\C$. Define $P_{as}$ as the projection over antisymmetric vectors: for any $u_1, ..., u_k \in \hh_0$, we have
\[
P_{a_s} u_1\otimes u_2\otimes ...\otimes u_k= \frac{1}{k!} \sum _{\sigma \in \mathfrak{S}_K} (-1)^{\varepsilon(\sigma)} u_{\sigma(1)}\otimes u_{\sigma(2)}\otimes...\otimes u_{\sigma(k)}
\]
where $\mathfrak{S}_k$ is the group of permutation of $k$ elements and $\varepsilon(\sigma)$ is the signature of the permutation $\sigma$. Then the fermionic space is $\hh=P_{as}A(\hh_0)$. We define the bilinear form $\wedge$ on $\hh$ the following way: for $a \in P_{as} \hh_0^{\otimes k}$ and $b \in P_{as} \hh_0^{\otimes l}$, take 
\[
{a}\wedge{b}=\sqrt{k+l}\, P_{as} {a}\otimes {b}\,.
\]

Any operator $T$ on $\hh_0$ is lifted to an operator $\Gamma(T)=\bigoplus_{n=0}^L T\otimes T\otimes...\otimes T\big|_{\Gamma(\hh_0)}$ on $\hh$. We can also define $d\Gamma(T)=\bigoplus_{n=0}^L \sum_{i=1}^n Id\otimes...\otimes T\otimes ...\otimes Id\big|_{\Gamma(\hh_0)}$ with the property $\exp\left(d\,\Gamma(T)\right)=\Gamma\left(\exp(T)\right)$.

 Given a pure state ${u} \in \hh_0$, we write $c_{{u}}: \hh \ap \hh$ the operator of annihilation of a particle in the state $\ket{u}$ and its adjoint $c_{{u}}^*$ the creation operator. Recall that  for any ${v}={v_1}\wedge...\wedge{v_k} \in\hh$, we have
 \begin{align}
c_{{u}}^* {v}&={u}\wedge{v} \\
c_{{u}}{v}&=\sum_{l=1}^{k}(1)^{l-1} \scal{u,v_l} {v_1}\wedge...\wedge{v_{l-1}}\wedge{v_{l+1}}\wedge...\wedge{v_k}\,.
\end{align}
 The map ${u} \rightarrow c^*_{{u}}$ is $\C$-linear and the map ${u} \rightarrow c_{{u}}$ is $\C$-antilinear. 
 
   In what follows, we fix an orthonormal basis $\ket{1}, \cdots, \ket{L}$ of $\hh_0$. The annihilation and creation operators related to $\ket{i}$ are written $c_i$ and $c_i^*$ for simplicity. They satisfy the anticommutation relations:
\begin{align}
\set{c_i, c_j}&=\set{c_i^*, c_j^*}=0 \\
\set{c_i^*, c_j}&=\delta_{i,j} I
\end{align}
where $\set{A,B}=AB+BA$. Note that the set of $N_i=c_i^* c_i$ for $i=1, \cdots, L$ form a complete set of commuting orthogonal projections. The operator $N=\sum_i N_i$ is called the number operator, it does not depend on the choice of basis of $\hh_S^0$. 

The operators which commute with $(-1)^N$ are called the even operators, they can be written as sum of products of an even number of operators $c_i, c_i^*$. The ones which anti-commute with $(-1)^N$ are called the odd operators, they are sums of products of an odd number of the $c_i$'s and $c_i^*$'s. Odd operators are always of trace $0$.
\newline

We construct an orthonormal basis of $\hh$ corresponding to the basis of $\hh_0$ the following way: for $u_1, ..., u_L \in \{0,1\}$, let $i_1<...<i_k$ be the indexes of $u_i$ with $u_i=1$ and let 
\[
\ket{u_1, ..., u_k}=\ket{i_1}\wedge ...\wedge\ket{i_L}\,.
\]
Then $\set{\ket{u_1, ..., u_L}|, u \in \set{0,1}^L}$ is an orthonormal basis of $\hh$. The expression of the creation and annihilation operators in this basis is
\begin{align*}
c_i \ket{u_1, ..., u_L}&=\delta_{u_i, 1}\, \Pi_{k=1}^{i-1} (-1)^{u_k} \ket{u_1, ..., u_i-1, ..., u_L} \\
c_i^* \ket{u_1, ..., u_L}&=\delta_{u_i, 0}\, \Pi_{k=1}^{i-1} (-1)^{u_k} \ket{u_1, ..., u_i+1, ..., u_L}\,.
\end{align*} 
\newline
 
In order to consider linear combinations of creation and annihilation operators, we would like to consider the map
\[
\begin{array}{ccc}
\hh_0\oplus \hh_0 &\rightarrow &\bb(\hh)\\
{u}\oplus{v}&\mapsto& c^*_{{u}}+c_{{v}}
\end{array}
\]
However this map is antilinear in the second variable, which is not practical. To overcome this, we consider the Hilbert space $\overline{\hh_0}$ which has the same structure of real vector space as $\hh_0$, but the scalar multiplication by $i$ has been replaced by the scalar multiplication by $-i$. In other words, $\hh_0$ is a vector space endowed with an antilinear isomorphism $s: \hh_0 \rightarrow \overline{\hh_0}$ (many authors simply write $s(x)=\overline{x}$ for $x \in \hh_0$, which is natural but may be confusing). The scalar product on $ \overline{\hh_0}$ is $\scal{v\,,\,w}=\scal{s^{-1} (w)\,,\,s^{-1}(v)}$.

Now, we define the complex phase space as the Hilbert space $\caly=\hh_0\oplus \overline{\hh_0}$ and the field operator $\varphi$ by
\[
\begin{array}{cccc}
\varphi:& \caly & \rightarrow & \bb(\hh) \\
& {u}\oplus {v} & \mapsto & c^*_{{u}}+c_{s^{-1}({v})}\,.
\end{array}
\]
It is a $\C$-linear map. 

The complex phase space $\caly$ is endowed with the anti-linear involution $\xi({u}\oplus s({v}))={v}\oplus s({u})$. The "real space" defined using $\xi$ as a conjugation is $\Re \caly=\set{x \in \caly, \xi(x)=x}=\{{u}\oplus s({u}) | u\in \hh_0\}$. Obviously, 
\[
\varphi(\xi(x))=\big(\varphi(x)\big)^*
\]
and the anticommutation relations write 
\[
\{\varphi(x), \varphi(y)\}=\scal{x\,,\,\xi(y)}I\,.
\]

If we fix a Hilbert basis $\ket{1}, ..., \ket{L}$ of $\hh_0$, it allows to define two interesting basis of $\caly$. First, there is the basis with elements $e_i=\ket{i}\oplus0$ and $e_{i+L}=0\oplus s(\ket{i})$, so that 
\begin{align*}
\varphi(e_i)&=c_i^*& \varphi(e_{i+L})&= c_i\,.
\end{align*}
Secondly, consider another basis of $\caly$, which is contained in $\Re \caly$: take $f_i=\, (e_i+e_{i+L})$,  $f_{i+L}=\,(ie_i-ie_{i+L})$ (it is an orthogonal basis, but not a normal one. We choose not to normalise it to be more in phase with conventions in the litterature). Then the operators $\gamma_i=\varphi(f_i)$ (for $i =1, ..., 2L$) are called Majorana operators (or Clifford operators). They are self-adjoint and satisfy the following anticommutation relations: 
\[
\{ \gamma_i, \gamma_j\}= 2\delta_{i, j}. 
\]
The relations between the $c_i$'s and the $\gamma_i$ are: for $1 \leq i \leq L$, 
\begin{align*}
\gamma_i&=c_i+c_i^* & \gamma_{i+L}&=-i (c_i -c_i^*)\\\
c_i&=\frac{1}{2}(\gamma_i+i\gamma_{i+L}) & c_i^*&=\frac{1}{2}(\gamma_i-i\gamma_{i+L})\,.
\end{align*}

We will call the basis $\{e_1, ..., e_{2L}\}$ the creation/annihilation basis and the basis $\{f_1, ..., f_L\}$ the Majorana basis. For an operator $A$ on $\caly$, we will generally write $A_{c}$ and $A_{f}$ its matrices in the creation/annihilation and the Majorana basis respectively. When a basis is chosen and if it does not cause confusion, we identify the matrix with the operator and simply write it $A$.  It is practical to choose which basis to use depending on the context;  to swap between the basis, note that 
\[
A_{f}=\frac{1}{2}\begin{pmatrix} I & I \\ -iI& iI\end{pmatrix} A_{c} \begin{pmatrix} I & iI \\ 1 & -iI \end{pmatrix}\,.
\]

\subsection{Hamiltonian for non-interacting Fermions}

Let $H_0$ be a Hamiltonian on the one-particle space $\hh_0$, with associated matrix $T^0$ in the basis $\ket{1}, \cdots, \ket{L}$. If we make the physical hypothesis that the fermions are non-interacting, then the corresponding unitary evolution on the fermionic space $\hh$ is $\Gamma\left(e^{-itH_0}\right)$ so the corresponding Hamiltonian is $d\Gamma(H_0)$.  It is easily seen that 
\[
d\Gamma(H_0)= \sum_{i,j}  T^0_{i,j} c_i^* c_j\,.
\]
This is a first motivation to study operators that are quadratic in the creation and annihilation operators. We will also encounter more general quadratic operators, with terms of the form $c_i c_j$ and $c_i^* c_j^*$. They arise for example in the case of the $XY$ model for a chain of spin (see Section \ref{sec:XYchain}). The study of quadratic operators is simplified by the introduction of the column operator, as follows.

\subsection{Column operator}

Let us consider the map  $C^*$ from $  \hh \otimes \hh_0$ to $\hh$ defined by
\[
C^*(\ket{u_1, \cdots, u_L}\otimes {v})=c^*_{{v}}\ket{u_1, \cdots, u_L}\,.
\]
 The matrix $C_b^*$ of $C^*$ in a chosen basis $\ket{1}, ..., \ket{L}$ of $\hh_0$ is a line of operators: 
\[
C_b^*=\begin{pmatrix}
c_1^* & \cdots & c_L^*
\end{pmatrix}\,.
\]
Its adjoint $C: \hh_0\rightarrow \hh\otimes \hh_0$ is a column of operators.
Thus the previous formula about the quantization of one-particle Hamiltonian becomes 
\begin{align}
d\Gamma(H_0)=C^* H_0 C. \label{formula:quadratic_Gamma}
\end{align}

The column operator $C$ is not sufficient to treat general quadratic operators because it does not allow to mix the creation and annihilation operators;  we will need the larger vector of operators $F $ from $\hh\otimes \caly$ to $\hh$ defined by 
\[
F^*(\ket{u_1, \cdots, u_L}\otimes y)= \varphi(y)\ket{u_1\cdots, u_L}\,.
\]
In the creation/annihilation basis, the matrix $F_{c}$ of $F$ is
\begin{align}
F_{c}=
\begin{pmatrix}
c_1 \\
\vdots\\
c_L\\
c_1^*\\
\vdots\\
c_L^*
\end{pmatrix}. \label{formula:F_ca}
\end{align}
 
and in the Majorana basis, its matrix $F_{m}$ is 

\begin{align}
F_c=
\begin{pmatrix}
\gamma_1 \\
\vdots\\
\gamma_{2L }
\end{pmatrix}. \label{formula:F_m}
\end{align}

\subsection{Change of basis and Bogoliubov transforms}

Bogoliubov transforms are a way of swapping between different representations of $\hh$ as the second quantization of a space $\hh_0$. They are used notably to simplify quadratic operators.

\begin{defi}
A unitary transform $U$ of $\caly=\hh_0\oplus \hh_0$ is called a Bogoliubov transform if it commutes with the anti-linear involution $\xi$.
\end{defi}

 Bogoliubov transforms are easily characterized by their matrix: 
\begin{prop}
In the creation/annihilation basis, the operator $U$ is a Bogoliubov transform if and only if $U_{ca}$ it is unitary and the form 
\[
\begin{pmatrix}
\gamma & \mu\\
\overline{\mu} & \overline{\gamma}
\end{pmatrix}\,.
\]
where $\gamma$ and $\mu$ are $L\times L $ matrices. 

In the Majorana basis, the operator $U$ is a Bogoliubov transform if and only if $U_{m}$ it is unitary and real. 
\end{prop}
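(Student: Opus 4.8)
The plan is to compute the matrix of the antilinear involution $\xi$ in each of the two bases, turn the commutation relation $U\xi=\xi U$ into a matrix identity, and read off the resulting constraint on $U$.

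First I would work in the creation/annihilation basis. Writing a generic element of $\caly$ as $u\oplus s(v)$ with $u=\sum_i a_i\ket{i}$ and $v=\sum_i b_i\ket{i}$, its coordinate vector is $(a_1,\dots,a_L,\overline{b_1},\dots,\overline{b_L})^T$, the conjugation on the second block coming from the antilinearity of $s$. The definition $\xi(u\oplus s(v))=v\oplus s(u)$ then sends this to $(b_1,\dots,b_L,\overline{a_1},\dots,\overline{a_L})^T$. Hence, denoting by $\overline{x}$ the entrywise complex conjugate and setting $J=\begin{pmatrix} 0 & I \\ I & 0\end{pmatrix}$, one gets $\xi(x)=J\overline{x}$. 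The point to keep in mind throughout is that $\xi$ is antilinear, so its matrix always acts on the conjugate of the coordinate vector.

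Next I would impose commutation. Since $U$ is $\C$-linear, $\xi U(x)=J\overline{U_{ca}x}=J\overline{U_{ca}}\,\overline{x}$, whereas $U\xi(x)=U_{ca}J\overline{x}$; equality for every $x$ forces $U_{ca}J=J\overline{U_{ca}}$, that is $\overline{U_{ca}}=JU_{ca}J$ using $J^2=I$. Writing $U_{ca}=\begin{pmatrix}\gamma & \mu \\ \nu & \delta\end{pmatrix}$ in $L\times L$ blocks and computing $JU_{ca}J=\begin{pmatrix}\delta & \nu \\ \mu & \gamma\end{pmatrix}$, the identity $\overline{U_{ca}}=JU_{ca}J$ yields $\delta=\overline{\gamma}$ and $\nu=\overline{\mu}$, which is exactly the announced block form; conversely any unitary of this shape satisfies the relation and hence commutes with $\xi$.

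For the Majorana basis the argument is shorter. By construction the $f_j$ belong to $\Re\caly$, so each is fixed by $\xi$; since $\xi$ is antilinear, $\xi\big(\sum_j x_j f_j\big)=\sum_j\overline{x_j}\,f_j$, which means that in this basis $\xi$ is nothing but entrywise complex conjugation. The commutation condition then reads $U_m\overline{x}=\overline{U_m}\,\overline{x}$ for all $x$, i.e. $U_m=\overline{U_m}$, which says precisely that $U_m$ is real; together with unitarity this gives the second equivalence. One could also deduce it from the first part through the change-of-basis formula, but verifying the action of $\xi$ directly is cleaner. The only genuinely delicate step in the whole proof is the bookkeeping of the conjugations produced by the antilinearity of $s$ and of $\xi$; once $\xi(x)=J\overline{x}$ is correctly pinned down, everything reduces to elementary block-matrix algebra.
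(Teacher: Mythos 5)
Your proof is correct and follows essentially the same route as the paper: identify the matrix of the antilinear involution $\xi$ in each basis and read off the commutation constraint. Your bookkeeping of the conjugations is in fact slightly more careful than the paper's (whose displayed formula for $\xi_{c}$ omits a conjugation bar on the second block), and your formula $\xi(x)=J\overline{x}$ is the correct one, consistent with $\xi$ being an involution.
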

\begin{proof}
Since the Majorana basis is contained in $\Re \caly$, the involution $\xi$ corresponds to the conjugation, so the matrices that commute with $\xi$ are the real matrices. 

Now, for the creation/annihilation basis, note that if ${u}$ and ${v}$ are vectors of $\hh_0$ of matrices $u_b$ and $v_b$ in the choosen basis of $\hh_0$, then the matrix of ${u}\oplus s({v})$ in the creation/annihilation basis is 
\[
\begin{pmatrix} u_b \\ \overline{v_b} \end{pmatrix}
\]
(indeed ${v}=\sum_i (v_b)_i \ket{i}$ so $s({v})=\sum_i \overline{(v_b)_i} s(\ket{i})$ ). Thus, $\xi$ corresponds to the application 
\[
\xi_{c}\begin{pmatrix} u \\ v \end{pmatrix} =\begin{pmatrix} \overline{v} \\ u \end{pmatrix}\,.
\]
It is easily checked that matrices commuting with the map $\xi_{c}$ are of the announced form. 
\end{proof}

Bogoliubov transforms are interesting because of the following property: 
\begin{prop}
Let $U$ be an operator on $\caly$ and let $\tilde{\varphi}=\varphi U$. Then $U$ is a Bogoliubov transform if and only if $\tilde{\varphi}$ satisfies the same adjoint and anticommutations properties of $\varphi$~: 
\begin{align}
\tilde{\varphi}(\xi(y))&=\big(\tilde{\varphi}(y)\big)^* \\
\{\varphi(x), \varphi(y)\}&=\scal{x, \xi(y)}\,.
\end{align}  
\end{prop}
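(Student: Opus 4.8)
The plan is to route both implications through the single structural fact packaged in the canonical anticommutation relations: the map $(z,z')\mapsto \beta(z,z'):=\scal{z\,,\,\xi(z')}$ is a nondegenerate \emph{bilinear} form on $\caly$ (the antilinearity of $\xi$ exactly cancels that of the scalar product), and it is symmetric because $\xi$ is an antiunitary involution. In this language the relation to be reproduced reads $\{\varphi(x),\varphi(y)\}=\beta(x,y)\,I$, and the claim is that $\tilde\varphi=\varphi U$ obeys the same adjoint and anticommutation rules if and only if $U$ is a Bogoliubov transform (the second displayed identity being read for $\tilde\varphi$). As a preliminary I would record that $\varphi$ is injective: if $\varphi(z)=0$, then $\beta(z,z')=\{\varphi(z),\varphi(z')\}=0$ for every $z'$, and nondegeneracy of $\beta$ forces $z=0$.

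For the forward direction, assume $U$ is unitary with $U\xi=\xi U$. The adjoint rule is immediate: $\tilde\varphi(\xi(y))=\varphi(U\xi(y))=\varphi(\xi(Uy))=\big(\varphi(Uy)\big)^*=\big(\tilde\varphi(y)\big)^*$, using only the commutation with $\xi$ and the corresponding property of $\varphi$. For the anticommutation I would compute $\{\tilde\varphi(x),\tilde\varphi(y)\}=\{\varphi(Ux),\varphi(Uy)\}=\scal{Ux\,,\,\xi(Uy)}\,I=\scal{Ux\,,\,U\xi(y)}\,I=\scal{x\,,\,\xi(y)}\,I$, where the third equality uses $\xi U=U\xi$ and the last uses that $U$ is an isometry.

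The converse is where the order of steps matters, and is the part I expect to demand the most care. Assuming $\tilde\varphi$ satisfies both rules, I would \emph{first} extract the commutation with $\xi$ from the adjoint rule: it gives $\varphi(U\xi(y))=\big(\tilde\varphi(y)\big)^*=\big(\varphi(Uy)\big)^*=\varphi(\xi(Uy))$ for all $y$, whence injectivity of $\varphi$ yields $U\xi=\xi U$. \emph{Then} I would feed this into the anticommutation rule: $\scal{x\,,\,\xi(y)}=\beta(x,y)=\beta(Ux,Uy)=\scal{Ux\,,\,\xi(Uy)}=\scal{Ux\,,\,U\xi(y)}$. Since $\xi$ is a bijection, as $y$ ranges over $\caly$ so does $w:=\xi(y)$, and the identity becomes $\scal{Ux\,,\,Uw}=\scal{x\,,\,w}$ for all $x,w$; hence $U$ is an isometry, and unitary in finite dimension. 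Having established both $U\xi=\xi U$ and $U$ unitary, $U$ is a Bogoliubov transform. The subtlety worth flagging is that the unitarity step genuinely needs the $\xi$-commutation already in hand: without it the anticommutation rule yields only $U^*(\xi U\xi)=I$ rather than $U^*U=I$, which is precisely why the adjoint rule must be exploited \emph{before} the anticommutation rule rather than independently of it.
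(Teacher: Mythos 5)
Your proof is correct and complete; the paper states this proposition without any proof, and your argument is exactly the standard one that would be supplied (forward direction by direct computation, converse by first extracting $U\xi=\xi U$ from the adjoint rule via injectivity of $\varphi$, then deducing unitarity from the anticommutation rule after substituting $w=\xi(y)$). Your observation that the two steps of the converse must be taken in that order — since the anticommutation rule alone only yields $(\xi U\xi)^*U=I$ rather than $U^*U=I$ — is the one genuinely non-mechanical point, and you handle it correctly.
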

As a consequence, the elements $\tilde{c_i}, \tilde{c_i}^*$ of the column operator $\tilde{F_{c}}=U_{c}^* F_{c}$ satisfies the anti-commutation properties.

The new operators $\tilde{c_i}$ define another fermionic structure on $\hh$; 
but in general, a Bogoliubov transform does not correspond to a $\C$-linear change of basis on the one-particle space $\hh_0$, but only to a change of basis on $\caly$.

 We will show at the end of the next subsection that a Bogoliubov transforms can always be implemented as the action of a unitary on $\hh$: there exists a unitary $V \in \bb(\hh)$ with $\varphi(Uy)=V \varphi(y) V^*$ for all $y \in \caly$.

\subsection{Quadratic Hamiltonian}\label{subsec:quadratic_hamiltonian}

Quadratic Hamiltonians are an important subclass of Hamiltonians; they appear in the context of the second quantization of a one-particle Hamiltonian as well as in the context of the Jordan-Wigner transform. Every Hamiltonians considered in this article are be quadratic. 
\begin{defi}
A \emph{quadratic Hamiltonian} on $\hh$ is a self-adjoint operator of the form
\[
H=F^* T F=\sum_{1 \leq i, j \leq L} (T_{f})_{i, j}\, \gamma_i \gamma_j
\]
where $T$ is an operator of $\caly$.  In other words, it is a homogeneous polynomial of order two in the creation and annihilation operators.  
\end{defi}

The map $ T \mapsto F^* T F$ is not one-to-one and the condition that $H$ is self-adjoint makes possible to impose that $T$ is self-adjoint. But we can impose even more on $T$.  
\begin{prop}
Up to the addition of a constant to $H$, we can assume that $T$ is self-adjoint and that $\xi T \xi =-T$. in the Majorana basis, it means that $T_{f}$ is the form $i R$ where $R$ is a real anti-symmetric matrix of size $2L \times 2L$. In the creation/annihilation basis, it means that $T_{c}$ is the form
\[
T_{c}=\begin{pmatrix}
A & B\\
-\overline{B} & -\overline{A}
\end{pmatrix}
\]
where $A$ is a self-adjoint $L\times L$ matrix and $B$ is an antisymmetric $L\times L$ matrix (recall that antisymmetric means $B^T=-B$ and not $B^*=-B$). \emph{ We write $QF(L)$ the set of such matrices $T_{c}$}.
\end{prop}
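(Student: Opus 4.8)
The plan is to argue in the Majorana basis, where everything is transparent, and only at the end transport the conclusion to the creation/annihilation basis. I keep in mind two facts about the Majorana basis $f_1,\dots,f_{2L}$: it is orthogonal with all its vectors of the common norm $\sqrt2$, hence a uniform rescaling of an orthonormal basis, so the adjoint $T^*$ is still represented by the conjugate transpose $T_f^*$; and since $f_i\in\Re\caly$, the involution $\xi$ acts on Majorana coordinates as entrywise complex conjugation, so $\xi T\xi$ is represented by $\overline{T_f}$. Consequently the two desired conditions read, in this basis, $T_f=T_f^*$ (Hermitian) and $\overline{T_f}=-T_f$ (purely imaginary); together they say $T_f=iR$ with $R$ real, and I will also see $R$ antisymmetric.

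First I would dispose of the symmetric part. Writing $H=\sum_{i,j}(T_f)_{i,j}\gamma_i\gamma_j$ and using $\gamma_i^2=I$ together with $\gamma_j\gamma_i=-\gamma_i\gamma_j$ for $i\neq j$, one gets $H=\tr{T_f}\,I+\sum_{i<j}\big((T_f)_{i,j}-(T_f)_{j,i}\big)\gamma_i\gamma_j$. Thus the non-constant part of $H$ depends only on the antisymmetric part of $T_f$, and replacing $T_f$ by $\tfrac12(T_f-T_f^{\mathrm T})$ alters $H$ only by the scalar $\tr{T_f}$; equivalently $F^*\tfrac12(T_f-T_f^{\mathrm T})F=H-\tr{T_f}I$. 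Up to this additive constant I may therefore assume $T_f^{\mathrm T}=-T_f$.

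Next I would impose self-adjointness. Taking adjoints in $H=\sum_{i,j}(T_f)_{i,j}\gamma_i\gamma_j$ and using $\gamma_i^*=\gamma_i$ gives $H^*=F^*T_f^*F$ (and likewise $(F^*MF)^*=F^*M^*F$ for any $M$). Because the adjoint of an antisymmetric matrix is again antisymmetric, the Hermitian average $T_f':=\tfrac12(T_f+T_f^*)$ of the antisymmetric matrix produced above is simultaneously self-adjoint and antisymmetric, and $F^*T_f'F$ differs from $H$ only by a real additive scalar (the real part of the original trace, since $H^*=H$). A one-line entrywise check shows that a matrix which is both Hermitian and antisymmetric has purely imaginary entries, so $T_f'=iR$ with $R$ real and, by antisymmetry, $R^{\mathrm T}=-R$; this is precisely the pair of conditions $T=T^*$ and $\xi T\xi=-T$ read in the Majorana basis.

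Finally I would read the statement in the creation/annihilation basis. There $e_1,\dots,e_{2L}$ is orthonormal, so $T=T^*$ is again Hermiticity of $T_c$; and from the explicit involution $\xi_c\binom{u}{v}=\binom{\overline v}{\overline u}$, i.e.\ $\xi$ acting as $w\mapsto J\overline w$ with $J=\left(\begin{smallmatrix}0&I\\ I&0\end{smallmatrix}\right)$, the condition $\xi T\xi=-T$ becomes $J\overline{T_c}J=-T_c$. Writing $T_c=\left(\begin{smallmatrix}A&B\\ C&D\end{smallmatrix}\right)$ and combining Hermiticity with this identity forces $A=A^*$, $D=-\overline A$, and $C=B^*=-\overline B$, whence $B^{\mathrm T}=-B$; this is the announced block form defining $QF(L)$. (Alternatively one substitutes $T_f=iR$ into the stated change-of-basis relation and reads off the blocks.) The only delicate point is the first reduction: the map $T\mapsto F^*TF$ is far from injective, so one must identify precisely which part of $T$ survives in $H$ and check that antisymmetrization followed by Hermitization is internally consistent, with each step changing $H$ by at most a real additive scalar.
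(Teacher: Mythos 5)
Your proof is correct, and in fact the paper states this proposition without any proof at all, so there is nothing to compare it against. Your argument is the natural one: extract the antisymmetric part of $T_f$ using $\gamma_i^2=I$ and $\gamma_i\gamma_j=-\gamma_j\gamma_i$ (losing only $\tr{T_f}\,I$), then Hermitize using $(F^*MF)^*=F^*M^*F$ and self-adjointness of $H$ (losing only $\Re(\tr{T_f})\,I$), and observe that Hermitian plus antisymmetric forces purely imaginary entries; the translation of the two conditions into the two bases via the explicit form of $\xi$ in each is also handled correctly, including the slightly delicate point that the Majorana basis is only orthogonal with uniform norm $\sqrt{2}$, which still makes the adjoint the conjugate transpose.
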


The Hamiltonian corresponding to $T$ is
\begin{align*}
H&=\sum_{1\leq i, j \leq 2L} i R_{i, j} \gamma_i \gamma_j \\
&=\sum_{1\leq i, j \leq L} A_{i, j} c_i^* c_j-\overline{A_{i, j}} c_i c_j^*+B_{i, j} c^*_i c^*_j-\overline{B_{i, j}}c_i c_j\,.
\end{align*}

It will be useful to change from the creation/annihilation basis to the Majorana basis; if $H=F^* T_{c} F$ in the creation/annihilation basis, then in the Majorana basis $T$ writes
\[
T_f=\frac{1}{2} \begin{pmatrix}
1 & 1 \\
-i  & i 
\end{pmatrix}
T_{c}
\begin{pmatrix}
1 & i \\
1 & -i
\end{pmatrix}
=i\begin{pmatrix}
\im A+\im B & \re A+\re B \\
-\re A +\re B & \im A-\im B
\end{pmatrix}\,.
\]

The commutator of quadratic operators and of field operators are simple to describe: for any operator $T: \caly \rightarrow \caly$ and for any $x \in \caly$, we have
\[
[F^*TF, \varphi(x)]=\frac{1}{2}\varphi((T-\xi T \xi)x)\,.
\]
Thus, if $\xi T \xi =-T$, we have simply $[F^*TF, \varphi(x)]=\varphi(Tx)$.
\newline

The next proposition expresses the fact that we can reduce any quadratic Hamiltonian by a Bogoliubov transform.
\begin{prop}\label{prop:reduction}
Let $T_{c}  \in QF(L)$ be the matrix of a quadratic Hamiltonian; then there exists a Bogoliubov transform $U$ such that in the creation/annihilation basis $U T U^*$ is of the form 
\[
\begin{pmatrix}
\Lambda & 0 \\
0 & -\Lambda 
\end{pmatrix}
 \]
 where $\Lambda$ is a diagonal $L \times L$ matrix.  As a consequence, if we let $\tilde{c_i}, \tilde{c_i}^*$ be the coefficients of $\tilde{F}_{c}=U_{c} F_{c}$ and $\lambda_i$ the eigenvalues of $T$, then 
\[
H=F^* T F=\tilde{F}^* UTU^* \tilde{F}=\sum_i \lambda_i \left(\tilde{c_i}^*\tilde{c_i}-\tilde{c_i}\tilde{c_i}^*\right)=2\sum_i \lambda_i \tilde{c_i}^* \tilde{c_i}-\sum_i \lambda_i Id\,.
\]
\end{prop}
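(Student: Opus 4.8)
The plan is to diagonalize $T$ by the ordinary spectral theorem and then to organize the eigenbasis so that the diagonalizing unitary is automatically a Bogoliubov transform. Since $T_{c} \in QF(L)$ is self-adjoint, it admits an orthonormal basis of eigenvectors with real eigenvalues. The extra structural input is the relation $\xi T \xi = -T$: because $\xi$ is an anti-unitary involution, $Tv = \lambda v$ gives $T(\xi v) = -\lambda(\xi v)$, so $\xi$ restricts to an anti-unitary isomorphism between the eigenspaces $E_\lambda$ and $E_{-\lambda}$. In particular the spectrum is symmetric about $0$, one has $\dim E_\lambda = \dim E_{-\lambda}$, and since $\dim \caly = 2L$ while the nonzero eigenspaces occur in equidimensional $\xi$-pairs, the kernel $E_0 = \ker T$ has even dimension, say $2m$. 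Writing $p = \sum_{\lambda > 0}\dim E_\lambda$, this gives $p + m = L$.

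I would then build an orthonormal eigenbasis that is $\xi$-symmetric, of the form $v_1, \dots, v_L, \xi v_1, \dots, \xi v_L$. For the strictly positive eigenvalues, choose orthonormal bases of the spaces $E_\lambda$ with $\lambda > 0$, list these $p$ vectors as $v_1, \dots, v_p$, and take their $\xi$-images (which lie in the orthogonal spaces $E_{-\lambda}$) as the matching second-half vectors. The delicate point, and the main obstacle, is the kernel, where $\xi$ acts as an anti-unitary involution and a vector together with its image need not be independent. Here I would pass to the associated real structure: pick a real orthonormal basis $x_1, y_1, \dots, x_m, y_m$ of the real form $\{w \in E_0 : \xi w = w\}$, whose real dimension equals $\dim_{\C} E_0 = 2m$, and set $u_j = \tfrac{1}{\sqrt 2}(x_j + i y_j)$. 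Since $\xi(x_j + i y_j) = x_j - i y_j$, a short computation shows that $u_1, \dots, u_m, \xi u_1, \dots, \xi u_m$ is an orthonormal basis of $E_0$ with $u_j \perp \xi u_k$ for all $j,k$. Appending $u_1, \dots, u_m$ after $v_1, \dots, v_p$ completes an orthonormal $\xi$-symmetric eigenbasis, with $\Lambda = \mathrm{diag}(\lambda_1, \dots, \lambda_p, 0, \dots, 0)$ collecting the eigenvalues of the first half.

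Finally, let $W$ be the unitary whose columns are $v_1, \dots, v_L, \xi v_1, \dots, \xi v_L$ in this order and put $U = W^*$, so that $U T U^* = W^* T W = \begin{pmatrix} \Lambda & 0 \\ 0 & -\Lambda \end{pmatrix}$. The symmetric ordering is exactly what makes $U$ a Bogoliubov transform: in the creation/annihilation basis $\xi_{c}$ swaps $e_k \leftrightarrow e_{k+L}$, and by construction $W e_{k+L} = \xi(W e_k)$ for $k \leq L$, from which $W\xi = \xi W$ and hence $U\xi = \xi U$ follow directly; so $U$ commutes with $\xi$ and is a Bogoliubov transform by the earlier characterization. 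The stated consequence then follows by substituting into $H = F^* T F = \tilde F^* (U T U^*) \tilde F$ with $\tilde F = U F$: reading off the block-diagonal form against $F_{c} = (c_1, \dots, c_L, c_1^*, \dots, c_L^*)^T$ yields $\sum_i \lambda_i(\tilde c_i^* \tilde c_i - \tilde c_i \tilde c_i^*)$, and the anticommutation relation $\tilde c_i \tilde c_i^* = Id - \tilde c_i^* \tilde c_i$ rewrites this as $2\sum_i \lambda_i \tilde c_i^* \tilde c_i - \sum_i \lambda_i\, Id$. The only genuinely non-formal ingredients are the even-dimensionality and real-structure treatment of $\ker T$; the rest is the spectral theorem together with bookkeeping on the ordering of the basis.
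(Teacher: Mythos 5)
Your argument is correct, but it takes a genuinely different route from the paper's. The paper proves this proposition entirely in the Majorana basis: there $T_f = iR$ with $R$ real antisymmetric, and it invokes the classical normal form for real antisymmetric matrices under real orthogonal conjugation, $O^* R O = \left(\begin{smallmatrix} 0 & \Lambda \\ -\Lambda & 0\end{smallmatrix}\right)$; the Bogoliubov property of the resulting $U$ is then immediate because a Bogoliubov transform is exactly a real unitary in that basis, and a short two-line matrix computation converts the block form back to $\mathrm{diag}(\Lambda, -\Lambda)$ in the creation/annihilation basis. You instead work directly in the complex creation/annihilation basis, diagonalize the self-adjoint $T$ by the spectral theorem, and use the anti-unitary symmetry $\xi T \xi = -T$ to pair $E_\lambda$ with $E_{-\lambda}$ and to build a $\xi$-symmetric eigenbasis, with the real-form construction handling $\ker T$. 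Both proofs are sound; the paper's is shorter because it delegates the real work to a quoted classical fact, while yours is self-contained and makes visible exactly where the structure of $QF(L)$ enters (spectral symmetry, even-dimensionality of the kernel, the $\xi$-compatible ordering of the eigenbasis) --- indeed your argument is essentially a proof of that classical fact transported to the complex picture. The only points worth polishing are bookkeeping: the harmless $U$ versus $U^*$ ambiguity (present in the paper's own statement versus its proof), and a one-line remark that the Hermitian inner product restricted to the real form $\{w \in E_0 : \xi w = w\}$ is real-valued, which is what justifies your orthogonality computations for the $u_j$ and $\xi u_k$.
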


\begin{proof}
Let $iR$ be the matrix of $T$ in the Majorana basis, where $R$ is an antisymmetric real matrix. It is a classical fact that antisymmetric real matrix of even size can be block-reduced by a real unitary transform: there exists a real unitary $O$ and a diagonal matrix $\Lambda$ of size $L\times L$ with 
\[
O^* R O=\begin{pmatrix}
0 & \Lambda \\
-\Lambda & 0
\end{pmatrix}. 
\]
Let $U$ be the matrix associated to $O$ in the creation/annihilation basis: 
\[
U=\frac{1}{2}
\begin{pmatrix}
1 & i \\
1 & i
\end{pmatrix} O\begin{pmatrix}
1 & 1 \\
-i  & i 
\end{pmatrix}\,.
\] 
Then $O$ is a real unitary matrix so $U$ is a Bogoliubov transform. Moreover, 
\begin{align*}
U^* T U&=\frac{1}{2}
\begin{pmatrix}
1 & i \\
1 & i
\end{pmatrix} O^* \, iR O
 \begin{pmatrix}
1 & 1 \\
-i  & i 
\end{pmatrix} \\
&= \frac{1}{2}
\begin{pmatrix}
1 & i \\
1 & i
\end{pmatrix}
\begin{pmatrix}
0 & i\Lambda \\
-i\Lambda & 0
\end{pmatrix}
  \begin{pmatrix}
1 & 1 \\
-i  1 & i 
\end{pmatrix}  \\
&=\begin{pmatrix}
\Lambda & 0 \\
0 & -\Lambda
\end{pmatrix}\,.
\end{align*}
Hence $U$ diagonalize $T$ as required. 
\end{proof}

{\bf Remark:}  This proposition shows the advantages of using both the creation/annihilation basis and the Majorana basis: on the one hand, in the Majorana basis matrices are generally of a simpler form and we can use classical theorems to reduce them; on the other hand, the creation and annihilation operators are easier to interpret and manipulate. Indeed, since the $c_i^* c_i$ form a family of mutually commuting projectors, this theorem allows to effectively diagonalize $H$. Moreover it makes easy to compute $\exp( H)$:
\[
e^H=e^{-\sum_i\lambda_i}\Pi_i e^{2\lambda_i c_i^* c_i}=e^{-\sum_i \lambda_i} \Pi_i \left(1-(e^{2\lambda_i}-1)c_i^*c_i\right). 
\]

The following proposition allows to compute the effect of $\exp(H)$ on the creation and annihilation operators.
\begin{prop}\label{prop:commutation_H}
Let $H=F^* T F$ be a quadratic Hamiltonian. Let $\alpha \in \C$, then 
\[
\left(e^{-\alpha H}\otimes Id_{\caly} \right)\, F \,\left(e^{\alpha H}\otimes Id_{\caly} \right)=\left(Id_{\hh}\otimes e^{2\alpha T}\right) \, F\,.
\]
If $\alpha \in i\R$, then $e^{2\alpha T}$ is a Bogoliubov transform. Any Bogoliubov transform can be obtained that way. 
\end{prop}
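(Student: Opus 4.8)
The plan is to reduce the stated identity between columns of operators to a scalar identity on field operators, and then to obtain that identity by exponentiating the commutation relation of Subsection~\ref{subsec:quadratic_hamiltonian}. Both sides are maps $\hh\to\hh\otimes\caly$, and $F$ is nothing but the field map: its components are the operators $\varphi(y)$ as $y$ runs over a basis of $\caly$. Since the two maps $y\mapsto(e^{-\alpha H}\otimes Id_\caly)\,F\,(e^{\alpha H}\otimes Id_\caly)$ and $y\mapsto(Id_\hh\otimes e^{2\alpha T})\,F$ are both $\C$-linear in the $\caly$-slot, it suffices to prove, for every $y\in\caly$,
\[
e^{-\alpha H}\,\varphi(y)\,e^{\alpha H}=\varphi\big(e^{2\alpha T}\,y\big)\,.
\]
By the reduction preceding Proposition~\ref{prop:reduction} we may assume $T$ self-adjoint with $\xi T\xi=-T$, which is exactly the hypothesis under which the commutator of $H$ with a field operator is again a field operator.

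Fix $y\in\caly$ and set $\Psi_y(\alpha)=e^{-\alpha H}\varphi(y)e^{\alpha H}$. Differentiating and using that $H$ commutes with $e^{\pm\alpha H}$ gives $\tfrac{d}{d\alpha}\Psi_y(\alpha)=-\,e^{-\alpha H}[H,\varphi(y)]e^{\alpha H}$. The commutation relation of Subsection~\ref{subsec:quadratic_hamiltonian} identifies $\mathrm{ad}_H=[H,\,\cdot\,]$ on field operators with the action of a fixed multiple of $T$ on $\caly$; tracking that constant carefully (see below) gives $[H,\varphi(y)]=-\varphi(2Ty)$, hence $\tfrac{d}{d\alpha}\Psi_y(\alpha)=\Psi_{2Ty}(\alpha)$. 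Writing $\Psi_y(\alpha)=\varphi(M(\alpha)y)$ this is the linear matrix equation $M'(\alpha)=2\,M(\alpha)\,T$ with $M(0)=Id_\caly$, whose unique solution is $M(\alpha)=e^{2\alpha T}$; equivalently one sums the Hadamard series $e^{-\alpha H}\varphi(y)e^{\alpha H}=\sum_{n\ge0}\tfrac{(-\alpha)^n}{n!}\,\mathrm{ad}_H^{\,n}(\varphi(y))$ with $\mathrm{ad}_H^{\,n}(\varphi(y))=\varphi((-2T)^n y)$. This yields the displayed formula. The one genuinely delicate point is the exact constant and sign in front of $T$: it is dictated entirely by the commutation relation, and verifying it requires care because the Majorana basis is orthogonal but not normalised, so that the matrix $T_f$ and the operator $T$ differ by the Gram factor of that basis.

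It remains to treat the last two assertions. For $\alpha\in i\R$ the operator $2\alpha T$ is skew-adjoint, since $T=T^*$ and $2\alpha$ is purely imaginary, so $e^{2\alpha T}$ is unitary on $\caly$; moreover $\xi\,e^{2\alpha T}\,\xi=e^{\,\overline{2\alpha}\,\xi T\xi}=e^{-2\alpha(-T)}=e^{2\alpha T}$, using $\overline{2\alpha}=-2\alpha$ and $\xi T\xi=-T$, so $e^{2\alpha T}$ commutes with $\xi$. By the proposition characterising Bogoliubov transforms through their matrix form, $e^{2\alpha T}$ is therefore a Bogoliubov transform. Conversely, a Bogoliubov transform is, in the Majorana basis, a real orthogonal matrix; if it lies in the identity component $SO(2L)$ it equals $e^{R}$ for some real antisymmetric $R$ (surjectivity of the exponential onto the compact connected group $SO(2L)$), and taking $T$ with Majorana matrix $T_f=iR$, a legitimate quadratic Hamiltonian, together with $\alpha=\tfrac{1}{2i}\in i\R$ recovers it as $e^{2\alpha T}$. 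The obstacle here is precisely the component of determinant $-1$, which is not in the range of the exponential; this is the only point where ``any Bogoliubov transform'' must be understood as ``any Bogoliubov transform in the connected component of the identity'', consistent with the implementability statement announced at the end of the subsection on Bogoliubov transforms.
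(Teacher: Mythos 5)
Your overall strategy is the same as the paper's: identify the action of $\mathrm{ad}_H$ on the field operators and exponentiate it. The difference is that the paper first diagonalizes $T$ by a Bogoliubov transform (Proposition \ref{prop:reduction}) and computes $[H,\tilde c_i]=-2\lambda_i\tilde c_i$, $[H,\tilde c_i^*]=+2\lambda_i\tilde c_i^*$ explicitly in the reduced basis, which pins down the constant with no ambiguity, whereas you invoke the general commutation relation of Subsection \ref{subsec:quadratic_hamiltonian} and solve the resulting ODE. That route is perfectly viable in principle, but the one step you yourself flag as ``the genuinely delicate point'' --- the constant and sign in $[H,\varphi(y)]$ --- is exactly where the argument is not actually carried out, and as asserted it is wrong. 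You claim $[H,\varphi(y)]=-\varphi(2Ty)$; the relation you cite reads $[F^*TF,\varphi(x)]=\varphi(Tx)$ (a factor $-2$ away), and a direct check on one mode ($H=\lambda(c^*c-cc^*)$, $c=\varphi(e_{1+L})$, $Te_{1+L}=-\lambda e_{1+L}$, $[H,c]=-2\lambda c$) gives $[H,\varphi(y)]=+2\varphi(Ty)$, the opposite sign to yours. With the correct sign your ODE yields $e^{-\alpha H}\varphi(y)e^{\alpha H}=\varphi(e^{-2\alpha T}y)$, not $\varphi(e^{2\alpha T}y)$.

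This does not mean the proposition fails; it means your reduction to a scalar identity silently dropped a transpose. The matrix identity $(\cdots)F(\cdots)=(Id\otimes e^{2\alpha T})F$ reads componentwise as $e^{-\alpha H}\varphi(y_i)e^{\alpha H}=\varphi\big((e^{2\alpha T})^{T}y_i\big)$, because a matrix multiplying a column of basis-labelled operators acts by its transpose on phase-space vectors; since $T_f=iR$ with $R$ real antisymmetric, $(e^{2\alpha T})^T=e^{-2\alpha T}$, which matches the correctly signed commutator. Your two sign slips offset, so the displayed formula comes out right, but neither step is justified as written. To close the gap, either do the explicit computation the paper does after reducing $T$, or prove the commutation relation with the correct constant (it follows from $\{\varphi(x),\varphi(y)\}=\scal{x,\xi(y)}I$ and $\xi T\xi=-T$) and keep careful track of the transpose when passing between the column $F$ and the map $\varphi$. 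Separately, your observation on the final claim is correct and is in fact a correction to the paper: the exponential of real antisymmetric matrices covers only $SO(2L)$, so only Bogoliubov transforms in the identity component arise as $e^{2\alpha T}$; those of determinant $-1$ (e.g. $\gamma_1\mapsto-\gamma_1$) do not.
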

This proves that any Bogoliubov transform can be implemented by a unitary on $\hh$. 

\begin{proof}
This proof could be written without using the reduction of $T$, but it is more convenient with it. 

Let $U$ be a Bogoliubov transform that reduce $T_{c.a}$ as in proposition \ref{prop:reduction} and $\tilde{c_i}, \tilde{c_i}^*$ the elements of $\tilde{F}_{c}=U_{c}^* F_{c}$ and let $\tilde{N_k}=\tilde{c_k}^* \tilde{c_k}$. Write $\lambda=\sum_i \lambda_i$, then 
\[
H=2\sum_k \lambda_k \tilde{N_k}-\lambda\,.
\]
But for any $i, k$, $\tilde{N_k} \tilde{c_i}=(1-\delta_{i, j}) \tilde{c_i}\tilde{N_k}$ so 
\[
H \tilde{c_i}=\tilde{c_i}\left(2 \sum_{k\neq i} \lambda_k\tilde{N_k}-\lambda \right)
\]
and $\tilde{c_i}\tilde{N_i}=\tilde{c_i}$ so
\[
\tilde{c_i}H=\tilde{c_i}\left(2 \sum_{k\neq i} \lambda_k\tilde{N_k}-\lambda \right)+2\lambda_i \tilde{c_i}\,.
\]
For any $A \in \bb(\hh)$, write $L_A$ and $R_A$ the left and right multiplication by $A$. Then we just proved that for all $i$,
\[
\left((L_H-R_H\right)(\tilde{c_i})=-2 \lambda_i \tilde{c_i}\,.
\]
Likewise, we claim that $\left((L_H-R_H\right)(\tilde{c_i}^*)=-2 \lambda_i \tilde{c_i}$. This means that
\[
\left((L_H-R_H)\otimes Id_{\caly}\right) \tilde{F} =-\left(Id_{\bb(\hh)}\otimes 2 U^* T U \right) \tilde{F}
\]
and since $Id \otimes U^*$ commutes with $\left(L_H-R_H\right)\otimes Id_{\caly}$, it implies that 
\[
\left((L_H-R_H)\otimes Id_{\C^{2L}}\right) F= -\left(Id_{\bb(\hh)}\otimes 2  T \right) F\,.
\]
But $\exp \big(-\alpha (L_H-R_H) \big)=L_{\exp(-\alpha H)} R_{\exp(\alpha H)}$, so exponentiating the previous formula gives the announced result. 

For the last part of the proposition, any Bogoliubov transform is a real unitary in the Majorana basis, hence it is the form $e^{2i R}$ where $R$ is a real antisymmetric matrix, that is, a matrix of $QF(L)$ in the creation/annihilation basis. 
\end{proof}

\subsection{Covariance matrix}

Let $\rho$ be a state on $\hh$ (i.e. a positive operator of trace 1). It is represented by a $2^L \times 2^L$ matrix (or an operator on $\caly$); but many of its interesting properties are described by a smaller matrix, called the covariance matrix: it is an $L \times L$ matrix (or an operator of $\hh_0$) with a definition similar to the one of the covariance matrix in probabilities.

\begin{defi}
The \emph{covariance matrix} $\mm$ of a state $\rho$ is the operator on $\caly$ defined by
\[
\mm=\tra_\hh\,\left(\big(\rho\otimes Id_{\caly}\big)\, F F^* \right)\,.
\]
where $\tra_{\hh}$ is the partial trace on $\hh\otimes \caly$.

In the creation/annihilation basis, the covariance $\mm$ is a $2L\times 2L$ matrix and for $1 \leq i, j\leq L$, we have $\mm_{i, j}=\tr{\rho c_i c_j^*}$,  $\mm_{i,j+L}=\tr{ \rho c_i c_j}$ and so on. 

 The \emph{small covariance matrix} of $\rho$ is the operator $\mm^0$ on $\hh_0$ defined by $\mm^0_{i,j}=\tr{\rho c_i c_j^*}$ in the choosen basis of $\hh_0$, or in operator notation
\[
\mm^0=\tra_\hh\,\left(\big(\rho\otimes Id_{\hh_0}\big) C C^*\right)\,.
\]
\end{defi}
The diagonal terms of $\mm$ are easily interpreted: the quantity $\mm_{i, i}=\tr{\rho c_i c_i^*}=1-\tr{ \rho c_i^* c_i}$ is the probability of absence of a particle in the mode $i$. We choose the convention $c_i c_j^*$ and not $c_i^* c_j$ because $C$ is anti-linear under change of basis.  

Due to the anticommutation relations, the covariance matrix is always of the following form in the creation/annihilation basis:
\[
\frac{1}{2} Id_{2L}+Q
\]
 where $Q \in QF(L)$. In the Majorana basis, it is the form 
\[
\frac{1}{2} Id_{2L}+i R
\]
where $R$ is a real antisymmetric matrix. 

Because of Proposition \ref{prop:commutation_H}, it is easy to compute the evolution of the covariance matrix under the evolution generated by a quadratic Hamiltonian. 
\begin{prop}\label{prop:covariance_evolution}
Let $H=F^*TF$ be a quadratic Hamiltonian; let $\rho_t=e^{-itH} \rho e^{itH}$ and let $\mm_t$ be the covariance matrix of $\rho_t$. Then 
\[
\mm_t=e^{2it T} \mm_0 e^{-2it T}\,.
\]
\end{prop}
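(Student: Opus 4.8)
The plan is to work directly from the definition $\mm_t=\tra_\hh\big((\rho_t\otimes Id_\caly)\,FF^*\big)$, read entrywise in a fixed basis of $\caly$ as $(\mm_t)_{k\ell}=\tra(\rho_t\,F_k F_\ell^*)$, and to transfer the unitary conjugation carried by $\rho_t=e^{-itH}\rho\,e^{itH}$ off the state and onto the field column $F$. This is exactly the situation Proposition \ref{prop:commutation_H} is designed for: it says that conjugating the operator column $F$ by the evolution generated by the quadratic Hamiltonian $H=F^*TF$ amounts to multiplying $F$ by a single Bogoliubov matrix built from $T$. So the whole computation should reduce to bookkeeping of one matrix exponential acting on the $\caly$-index, applied once on each side of $\mm_0$.

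Concretely, I would first use the cyclicity of the trace, together with $\rho_t=e^{-itH}\rho\,e^{itH}$ and the insertion of $e^{-itH}e^{itH}=Id$ between the two field factors, to write
\[
(\mm_t)_{k\ell}=\tra\!\big(\rho\,(e^{itH}F_k e^{-itH})(e^{itH}F_\ell^* e^{-itH})\big).
\]
Next I would apply Proposition \ref{prop:commutation_H}, read entrywise on the column $F$ (the factor $Id_\hh\otimes e^{2\alpha T}$ acts on the $\caly$-index as the matrix $e^{2\alpha T}$), to evaluate the conjugated column as $e^{itH}F e^{-itH}=e^{2itT}F$. Taking adjoints of this identity and using that $T$ is self-adjoint, so that $(e^{2itT})^*=e^{-2itT}$, gives the companion identity $e^{itH}F^* e^{-itH}=F^* e^{-2itT}$ for the row $F^*$. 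Substituting both into the displayed expression, pulling the scalar matrix entries out of the trace, and recognising $\tra(\rho\,F_p F_q^*)=(\mm_0)_{pq}$ collapses the resulting double sum to
\[
(\mm_t)_{k\ell}=\sum_{p,q}(e^{2itT})_{kp}\,(\mm_0)_{pq}\,(e^{-2itT})_{q\ell}=(e^{2itT}\,\mm_0\,e^{-2itT})_{k\ell},
\]
which is the claimed identity $\mm_t=e^{2itT}\mm_0\,e^{-2itT}$.

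The step I expect to be the main obstacle is the sign- and side-bookkeeping: one must make sure that the exponential produced by Proposition \ref{prop:commutation_H} ends up multiplying $\mm_0$ as $e^{2itT}$ on the left, while the adjoint factor—converted through the self-adjointness $T^*=T$—supplies $e^{-2itT}$ on the right. It is precisely the self-adjointness of $T$ that lets $(e^{2itT})^*$ be rewritten as a genuine matrix exponential rather than an exponential of $T^*$, and this is what guarantees the two factors are mutually inverse so that $\mm_t$ is a similarity transform of $\mm_0$. As a consistency check I would note that $e^{2itT}$ is itself a Bogoliubov transform (again by Proposition \ref{prop:commutation_H}), so conjugating by it preserves the admissible form $\tfrac12 Id_{2L}+iR$ of a covariance matrix, confirming that $\mm_t$ is again a legitimate covariance matrix.
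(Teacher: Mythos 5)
Your proof is correct and follows essentially the same route as the paper's: both use cyclicity of the trace to move the unitary conjugation from $\rho_t$ onto the field column $F$, then invoke Proposition \ref{prop:commutation_H} to replace the conjugated column by a matrix exponential in $T$ acting on the $\caly$-index, so the partial trace collapses to the similarity transform $e^{2itT}\mm_0 e^{-2itT}$. One bookkeeping caveat: Proposition \ref{prop:commutation_H} as literally stated gives $e^{itH}Fe^{-itH}=e^{-2itT}F$ (take $\alpha=-it$), so under the convention $\rho_t=e^{-itH}\rho e^{itH}$ your intermediate identity carries a flipped sign; this is an inessential mismatch, since the paper's own proof exhibits the same inconsistency (its first line conjugates $\rho$ by $e^{itH}$ on the left, contradicting the statement's convention), and the argument is otherwise identical.
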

\begin{proof}
We have 
\begin{align*}
\mm_t&=\tra_{\hh} \left( e^{itH} \rho\, e^{-itH} F F^* \right) \\
&=\tra_{\hh}\left(\rho \left(e^{-itH} F e^{itH}\right)\,\left(e^{-itH} F e^{itH}\right)^* \right) \\
&=\tra_{\hh} \left(\rho\, e^{2itT} F F^* e^{-2itT} \right) \\
&= e^{2itT}\tra_{\hh}\left(\rho F F^*\right) e^{-2itT}/,.
\end{align*}
\end{proof}

This formula, together with the reduction of Proposition \ref{prop:reduction}, means that up to a Bogoliubov transform we can assume that $\mm$ is diagonal. Since $\mm_{i, i}=\tr{\rho c_i c_i^*}\, \in [0, 1]$, it implies that $0 \leq \mm \leq Id$. Moreover, if $\ker \mm \neq 0$ or $\ker (Id-\mm)\neq 0$ then $\rho$ is not faithful (i.e. its support is not $\hh$).

\subsection{Quasi-free state}
 
 Quasi-free states are an important class of states on fermionic spaces; they are Gibbs states for quadratic Hamiltonians and they behave in a similar way as the Gassian states in classical probabilities. In this section, we introduce quasi-free states and describe some of their fundamental properties.
 
 \begin{defi}
A state $\rho$ on $\hh$ is said to be a \emph{Gaussian state} if it has the form
\[
\rho=\frac{1}{Z} \exp(-\beta H)
\]
where $H=F^* T F$ is a quadratic Hamiltonian. A limit of Gaussian states will  be called a \emph{quasi-free state}. A quasi-free state is  a Gaussian state if and only if it is non-degenerate (i.e. of full support).

A quasi-free state is called a \emph{gauge-invariant quasi-free} state if $H$ is gauge-invariant (i.e. the form $C^* T^0 C$), or equivalently if $\rho$ commutes with the number-particle operator $N=\sum c_i^* c_i$. 
 \end{defi}
 A first property of quasi-free states is the invariance of the set of quasi-free states under the evolution generated by a quadratic Hamiltonian.
\begin{prop}\label{prop:invariance_unitary}
Let $\rho$ be a quasi-free state and $H=F^* T F$ be a quadratic Hamiltonian. Then for all $t \in \R$, $\rho_t=e^{itH} \rho e^{-itH}$ is a quasi-free state. 
\end{prop}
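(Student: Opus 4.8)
The plan is to exploit the definition of a quasi-free state as a limit of Gaussian states and thereby reduce the claim to the case where $\rho$ itself is Gaussian. If $\rho=\lim_n \rho_n$ with each $\rho_n=\frac{1}{Z_n}\exp(-\beta_n H_n)$ Gaussian, then since conjugation by the fixed unitary $e^{itH}$ is continuous on states (in finite dimension all the natural topologies on states agree and the map is an isometry for the trace norm), we have $\rho_t=\lim_n e^{itH}\rho_n e^{-itH}$. Hence it suffices to show that the conjugate of a Gaussian state is again Gaussian: the limit of such conjugates is then a limit of Gaussian states, which is quasi-free by definition.

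So I would assume $\rho=\frac{1}{Z}\exp(-\beta H')$ with $H'=F^* S F$ a quadratic Hamiltonian, and use the elementary identity $U e^A U^*=e^{UAU^*}$ valid for any unitary $U$ to write
\[
e^{itH}\rho\, e^{-itH}=\frac{1}{Z}\exp\!\left(-\beta\, e^{itH} H' e^{-itH}\right).
\]
The crux is then to prove that $e^{itH} H' e^{-itH}$ is itself a quadratic Hamiltonian. This is exactly where Proposition \ref{prop:commutation_H} is used: taking $\alpha=-it$, which lies in $i\R$ precisely because $t\in\R$, the proposition gives $e^{itH} F e^{-itH}=W F$ with $W=e^{-2itT}$ a Bogoliubov transform, that is, componentwise $e^{itH} F_k e^{-itH}=\sum_l W_{kl} F_l$.

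Taking adjoints and using that $e^{itH}$ is unitary yields the matching transformation of the row vector $F^*$, so that after collecting indices one obtains
\[
e^{itH}\big(F^* S F\big) e^{-itH}=F^* \big(W^* S W\big) F.
\]
Since $W^* S W$ is again an operator on $\caly$, the right-hand side is a quadratic Hamiltonian $F^* S' F$ with $S'=W^* S W$. Consequently $e^{itH}\rho\, e^{-itH}=\frac{1}{Z}\exp(-\beta F^* S' F)$ is Gaussian (the normalization is automatically preserved, as the conjugate is a state), which closes the reduction and, combined with the first paragraph, proves the proposition.

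I expect the main obstacle to be purely bookkeeping: carefully tracking the index and adjoint conventions in the identity $e^{itH}(F^* S F)e^{-itH}=F^*(W^* S W)F$, and verifying that $\alpha=-it$ sits in the regime $\alpha\in i\R$ of Proposition \ref{prop:commutation_H} so that $W$ is a genuine Bogoliubov transform. The only analytic point, the passage to the limit, is harmless in finite dimension since conjugation by a unitary is continuous and the set of quasi-free states is closed by definition.
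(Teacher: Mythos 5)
Your proof is correct and follows essentially the same route as the paper: reduce to Gaussian states by continuity of conjugation, push the unitary inside the exponential, and invoke Proposition \ref{prop:commutation_H} to rewrite $e^{itH}F^{*}SF\,e^{-itH}$ as $F^{*}(W^{*}SW)F$ with $W$ a Bogoliubov transform. The index bookkeeping you flag as the main obstacle is exactly the one-line computation the paper performs, so there is nothing further to add.
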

 \begin{proof}
By continuity of $e^{itH}$, it is sufficient to show it for Gaussian states. Let us assume $\rho=e^{-\beta F^* R F}/Z$ for some $R \in QF(L)$. Then 
\begin{align*}
\rho_t&= e^{itH} \rho e^{-itH}\\
&=\frac{1}{Z} \exp\left(-\beta e^{itH} F^* R F e^{-itH} \right) \\
&= \frac{1}{Z} \exp\left(-\beta e^{itH} F^* R F e^{-itH} \right)\\
&=\frac{1}{Z}\exp\left(-\beta \left(e^{-itH}F e^{itH}\right)^* R \left(e^{-itH} F e^{itH}\right)\right) & \text{by Proposition \ref{prop:commutation_H}}\\
&= \frac{1}{Z} \exp\left(-\beta F^*\left(e^{2itT} Re^{-2itT}\right) F \right).
 \end{align*}
Hence it is a Gaussian state, generated by the quadratic Hamiltonian $ F^*\left(e^{2itT} Re^{-2itT}\right) F$.
\end{proof}

A second useful property is that we can express the covariance matrix of a Gaussian state as a function of $T$.

\begin{prop}\label{prop:cov_matrix}
If $\rho$ is a non-degenerate quasi-free state of the form
\[
\rho=\frac{1}{Z}e^{-\beta F^* T F}
\]
 then 
\[
\mm=(I+e^{-2\beta T})^{-1}\,.
\]
 In particular, if $\rho=\frac{1}{Z} \exp(-\beta C^* T^0 C)$ is gauge-invariant then
\[
\mm^0=(I+e^{-\beta T^0})^{-1}\,.
\]
\end{prop}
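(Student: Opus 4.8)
The plan is to use the KMS (cyclicity) property of the Gibbs weight $e^{-\beta H}$ together with Proposition~\ref{prop:commutation_H} to convert the defining trace into a closed linear equation for $\mm$, and then solve it. Working in the creation/annihilation basis I write $\mm_{ij}=\tr{\rho F_i F_j^*}$ with $\rho=e^{-\beta F^* T F}/Z$, and read Proposition~\ref{prop:commutation_H} at $\alpha=\beta$ in components, $e^{-\beta H}F_i\,e^{\beta H}=\sum_k (e^{2\beta T})_{ik}F_k$, or equivalently $e^{-\beta H}F_i=\sum_k (e^{2\beta T})_{ik}F_k\,e^{-\beta H}$.

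The first manipulation is to move the Gibbs weight across $F_i$: cyclicity gives $\mm_{ij}=\tfrac1Z\tr{F_j^*\,e^{-\beta H}F_i}$, and inserting the commutation relation and cycling $e^{-\beta H}$ back yields $\mm_{ij}=\sum_k (e^{2\beta T})_{ik}\,\tr{\rho\,F_j^* F_k}$. The second manipulation uses the canonical anticommutation relation $\{F_k,\,F_j^*\}=\delta_{kj}\,I$ — which follows from $\{\varphi(x),\varphi(y)\}=\scal{x,\xi(y)}I$ applied to $F_k=\varphi(\xi e_k)$ and $F_j^*=\varphi(e_j)$ — to replace $\tr{\rho F_j^* F_k}$ by $\delta_{jk}-\mm_{kj}$. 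Substituting gives the matrix identity $\mm=e^{2\beta T}-e^{2\beta T}\mm$, hence $(I+e^{2\beta T})\mm=e^{2\beta T}$; since $\rho$ is non-degenerate, $e^{2\beta T}$ is invertible and I can solve $\mm=(I+e^{2\beta T})^{-1}e^{2\beta T}=(I+e^{-2\beta T})^{-1}$, as claimed. The gauge-invariant case is identical word for word with $c_i,\,c_j^*$ in place of $F_i,\,F_j^*$ and $H=C^* T^0 C$; there the relevant relation is $e^{-\beta H}c_i\,e^{\beta H}=\sum_k (e^{\beta T^0})_{ik}c_k$ (with $\beta T^0$, not $2\beta T$, because $C^* T^0 C$ carries coefficient one), so the linear equation reads $\mm^0=e^{\beta T^0}-e^{\beta T^0}\mm^0$ and yields $\mm^0=(I+e^{-\beta T^0})^{-1}$.

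The step I expect to require the most care is the bookkeeping rather than any deep idea: getting the index placement right in Proposition~\ref{prop:commutation_H} (so that $e^{2\beta T}$ multiplies $\mm$ on the correct side, with no stray transpose), and tracking the factor-of-two discrepancy between the full covariance, governed by $2\beta T$, and the small covariance, governed by $\beta T^0$. A convenient consistency check is $\beta=0$: then $\rho$ is maximally mixed, the anticommutation relations force $\mm_{ii}=\tr{\rho F_i F_i^*}=\tfrac12$ and $\mm_{ij}=0$ for $i\neq j$, matching $(I+I)^{-1}=\tfrac12 I$ and pinning down the normalisation.
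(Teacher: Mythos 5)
Your proof is correct, but it takes a genuinely different route from the paper. The paper's argument (given only as a one-line sketch) reduces $T$ to the diagonal form $\mathrm{diag}(\Lambda,-\Lambda)$ by a Bogoliubov transform as in Proposition \ref{prop:reduction}, factorises $\rho$ as a product over the modes $\tilde{N_i}=\tilde{c_i}^*\tilde{c_i}$, and evaluates the traces $\tr{\rho\, \tilde{c_i}\tilde{c_i}^*}=\left(1+e^{-2\beta\lambda_i}\right)^{-1}$ explicitly before transporting the result back with Proposition \ref{prop:covariance_evolution}-type covariance of the formula. You instead run a KMS-type argument: cyclicity of the trace plus the intertwining relation of Proposition \ref{prop:commutation_H} turns the definition of $\mm$ into the closed linear equation $\mm=e^{2\beta T}(I-\mm)$, which you solve directly. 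Your indices are placed consistently with the paper's conventions ($F_i=\varphi(\xi e_i)$, $\mm_{ij}=\tr{\rho F_iF_j^*}$, $e^{-\beta H}F_ie^{\beta H}=\sum_k(e^{2\beta T})_{ik}F_k$), the anticommutator step $\{F_k,F_j^*\}=\delta_{kj}I$ is justified from $\{\varphi(x),\varphi(y)\}=\scal{x,\xi(y)}I$, and the factor-of-two bookkeeping between the full and gauge-invariant cases is handled correctly. What your approach buys is that it is basis-free — no spectral reduction of $T$ is needed, and the whole matrix identity comes out at once rather than entry by entry in a diagonalising basis; what the paper's approach buys is that it stays entirely within elementary trace computations and makes the eigenvalue formula $\mm_{ii}=\left(1+e^{-2\beta\lambda_i}\right)^{-1}$ visible. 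Two trivial remarks: the invertibility of $I+e^{2\beta T}$ follows from positivity of $e^{2\beta T}$ (as $T$ is self-adjoint) rather than from non-degeneracy of $\rho$, which is only needed so that $\rho$ admits a Gibbs representation at all; and your argument implicitly uses that $T$ is normalised to lie in $QF(L)$ so that Proposition \ref{prop:commutation_H} applies, which is the paper's standing convention.
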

This formula is easily obtained by reducing $T$ by a Bogoliubov transform and using the fact that $\tr{\Pi_{k=1}^l N_{i_k}}=2^l$ where $i_1, ..., i_l$ are two-by-two distinct indexes. 

Note that some authors prefer to define the small covariance matrix as $\mm^0_{i,j}=\tr{\rho c_i^* c_j}$. In this case the formula is $\mm^0=\left(I+e^{\beta T^0}\right)^{-1}$.
\newline

The most characteristic property of quasi-free states is the Wick formula. It expresses the fact that a quasi-free state is fully described by its covariance matrix. This is the non-commutative analogue of the fact that a Gaussian random vector is characterized by its covariance matrix, which justifies the term "Gaussian state" for "quasi-free state".

\begin{theo}\label{theo:wick}
Let $\rho$ be a density matrix on $\hh$. The following assertions are equivalent:
\begin{enumerate}
\item The state $\rho$ is a quasi-free state.
\item The state $\rho$ verifies Wick's formula: for every $a_1, \cdots, a_n \in \text{Vect }(c_i, c_i^* \, | i=1, \cdots, L)$, 
\begin{align}
\tr{a_1 \cdots a_n\rho} 
& = 
0  \text{   if $n$ is odd} \\
\tr{a_1\cdots a_n \rho}
& =  \sum_{\sigma \in \mathcal{P}_2^n}  \varepsilon (\sigma) \underset{i=1}{\overset{n/2}{\prod}} \tr{a_{\sigma(2i)} a_{\sigma(2i+1)}\rho}
 \text{   if $n$ is even} \\
\end{align}
where $\mathcal{P}_2^n$ is the set of pairings of $\set{1, \cdots, n}$, i.e. the set of permutations $\sigma$ of $\set{1, \cdots, n}$ that satisfies $\sigma(2i) < \sigma(2i+1)$.
\end{enumerate}
\end{theo}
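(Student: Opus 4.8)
The plan is to prove the two implications separately, using the structural results already established. Throughout, write $\omega(A)=\tr{A\rho}$ and recall that every $a_k\in\vect(c_i,c_i^*)$ equals $\varphi(x_k)$ for a unique $x_k\in\caly$, so that Wick's formula is exactly the statement that $\omega(\varphi(x_1)\cdots\varphi(x_n))$ is the signed sum over pairings of the two-point functions $\omega(\varphi(x_i)\varphi(x_j))$. For $(1)\Rightarrow(2)$ I would first reduce to Gaussian states: both sides of Wick's formula are polynomials in the entries of $\rho$, hence continuous, so the set of states satisfying it is closed and it suffices to treat $\rho=e^{-\beta H}/Z$ with $H=F^*TF$. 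The odd case is then immediate, since $\rho$ is even and a product of an odd number of field operators is odd, hence of trace zero.

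For even $n$ I would argue by induction, avoiding the diagonalization of Proposition \ref{prop:reduction} and using instead two ingredients. The first is the anticommutation relation: moving $\varphi(x_1)$ to the far right through the telescoping identity produces the scalar anticommutators $\{\varphi(x_1),\varphi(x_k)\}=\scal{x_1,\xi(x_k)}$ times lower moments, with the sign $(-1)^k$ dictated by the CAR, plus a ``wrap-around'' term $(-1)^{n-1}\omega\big(\varphi(x_2)\cdots\varphi(x_n)\varphi(x_1)\big)$. The second is the KMS relation: since $\rho$ is invertible, Proposition \ref{prop:commutation_H} applied with $\alpha=\beta$ (allowed for complex $\alpha$) gives $\rho^{-1}\varphi(x_1)\rho=\varphi(e^{-2\beta T}x_1)$, so cyclicity of the trace turns the wrap-around term into $(-1)^{n-1}\omega\big(\varphi(e^{-2\beta T}x_1)\varphi(x_2)\cdots\varphi(x_n)\big)$.

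Introducing the linear functional $L(x)=\omega\big(\varphi(x)\varphi(x_2)\cdots\varphi(x_n)\big)$ and combining the two identities (with $(-1)^{n-1}=-1$ for even $n$, so that the two copies of $L$ add) yields
\[
L\big((I+e^{-2\beta T})x_1\big)=\sum_{k=2}^{n}(-1)^k\,\scal{x_1,\xi(x_k)}\ \tr{\varphi(x_2)\cdots\widehat{\varphi(x_k)}\cdots\varphi(x_n)\,\rho}.
\]
As $I+e^{-2\beta T}$ is positive, hence invertible and self-adjoint, I can solve for $L(x_1)$, and Proposition \ref{prop:cov_matrix} identifies $(I+e^{-2\beta T})^{-1}=\mm$ so that each coefficient is exactly the two-point function $\omega(\varphi(x_1)\varphi(x_k))$. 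This is precisely the Wick recursion: contracting $\varphi(x_1)$ against each $\varphi(x_k)$, with the sign from the CAR, times the Wick expansion of the remaining $n-2$ factors supplied by the induction hypothesis; summing over $k$ reconstructs the full signed sum over pairings and closes the induction. I expect the main obstacle here to be the sign bookkeeping, namely checking that the $(-1)^k$ from the pull-out, assembled over all levels of the recursion, reproduces the signature $\varepsilon(\sigma)$ of each pairing.

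For the converse $(2)\Rightarrow(1)$, the point is that Wick's formula expresses every moment through the covariance matrix, so two states that satisfy it and share the same covariance matrix must coincide. Let $\mm$ be the covariance matrix of $\rho$; as noted after Proposition \ref{prop:covariance_evolution} it satisfies $0\leq\mm\leq I$ and has the form $\tfrac12 I+Q$ with $Q\in QF(L)$. I would build a quasi-free reference state $\rho'$ with this same $\mm$: when $0<\mm<I$, set $T=-\tfrac12\log(\mm^{-1}-I)$ and use $\xi\mm\xi=I-\mm$ to verify $\xi T\xi=-T$, so $T\in QF(L)$ and $\rho'=e^{-F^*TF}/Z$ has covariance $\mm$ by Proposition \ref{prop:cov_matrix}; the degenerate case, where $\mm$ has eigenvalues $0$ or $1$, is reached by approximating $\mm$ with $(1-\varepsilon)\mm+\tfrac\varepsilon2 I$ and passing to the limit, which is quasi-free by definition. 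By the direction already proved, $\rho'$ satisfies Wick's formula, and it has the same two-point functions as $\rho$; hence Wick's formula forces $\tr{a_1\cdots a_n\rho}=\tr{a_1\cdots a_n\rho'}$ for every monomial in the $c_i,c_i^*$. Since these monomials span $\bb(\hh)$, we conclude $\rho=\rho'$, so $\rho$ is quasi-free. Here the delicate points are confirming that the constructed $T$ really lies in $QF(L)$ and that degenerate covariance matrices are genuine limits of Gaussian states.
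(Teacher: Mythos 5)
Your direction $(2)\Rightarrow(1)$ is essentially the paper's own argument: construct a Gaussian reference state with the same covariance matrix via a logarithm of $\mm^{-1}-I$, invoke the already-proved direction to see that both states have identical moments, conclude equality, and handle degenerate $\mm$ by approximation. (Your normalisation $T=-\tfrac12\log(\mm^{-1}-I)$ is in fact the one consistent with Proposition \ref{prop:cov_matrix}, and your check that $\xi\mm\xi=I-\mm$ forces $T\in QF(L)$ is a detail the paper leaves implicit.) Where you genuinely diverge is $(1)\Rightarrow(2)$: the paper does not prove this direction at all, citing Gaudin and remarking only that it "may be obtained by a repeated use of the commutation relation of Proposition \ref{prop:commutation_H}." Your KMS recursion --- pulling $\varphi(x_1)$ through the product to generate anticommutators, converting the wrap-around term via $\rho^{-1}\varphi(x_1)\rho=\varphi(e^{-2\beta T}x_1)$ and cyclicity, and inverting $I+e^{-2\beta T}$ to recover the two-point function $\mm$ --- is precisely the argument the paper gestures at, and its structure is sound; the reduction to Gaussian states by continuity and the parity argument for odd $n$ are also correct. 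The one piece you explicitly leave unverified is the combinatorial heart of that direction: that the signs $(-1)^k$ accumulated across the levels of the recursion assemble into the signature $\varepsilon(\sigma)$ of each pairing. Since this is the only nontrivial content of the even case once the recursion is set up, a complete write-up would need that induction on the sign carried out; as a proof strategy, however, nothing in it would fail.
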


A proof that $1$ implies the formula $2$ can  be found for example in an article from Gaudin \cite{Gaudin1960}. It may be obtained by a repeated use of the commutation relation of Proposition \ref{prop:commutation_H}. The fact that $2$ implies $1$ is often omitted, it can be proved as follows. 
\begin{proof}
Assume that $\rho$ satisfy the Wick's formula $2$. Let $\mm$ be its covariance matrix and assume that $0 < \mm < Id$. Let $T=\ln(\mm^{-1}-I)$ and write 
\[
\rho_\mm=\frac{1}{\tr{e^{F^* T F}}} e^{F^* T F}\,.
\]
Then $\rho$ and $\rho_\mm$ have the same covariance matrix and are both satisfying the Wick's formula. Hence $\tr{\rho a_1...a_n}=\tr{\rho_\mm a_1...a_n}$ for every  $a_1, ..., a_n \in\text{Vect }(c_i, c_i^* \, | i=1, \cdots, L) $, which implies that $\tr{\rho A}=\tr{\rho_\mm A}$ for any observable $A$ and so $\rho=\rho_\mm$. Thus, $\rho$ is quasi-free. 

To prove it when $\mm$ has 0 or 1 as eigenvalue, let $0 < \mm_n < Id$ with $\mm_n \ap \mm$. Then for any $a_1, ..., a_n\in \\text{Vect }(c_i, c_i^* \, | i=1, \cdots, L)$, $\tr{\rho_{\mm_n} a_1... a_n} \rightarrow \tr{\rho a_1... a_n}$ because the left-hand side of the Wick formula is continuous in the covariance matrix, hence $\rho_{\mm_n} \rightarrow \rho$. 
\end{proof}

\subsection{Tensor product of fermionic spaces}\label{subsec:tensorProduct}

Let $\hh_S$ and $\hh_B$ be two fermionic space, with one-particle spaces $\hh_S^0$ and $\hh_B^0$ of dimensions $L$ and $K$. The ``exponential formula'' states that
 \[
\hh_S\otimes \hh_B\simeq \Gamma\left(\hh_S^0 \oplus \hh_B^0\right)\,.
\]
 This formula is more subtle than it seems, because there are many possible ``good'' isomorphisms between $\hh_S\otimes \hh_B$ and $\hh_{SB}=\Gamma\left(\hh_S^0 \oplus \hh_B^0\right)$.

 First, note that $\hh_{SB}$ represents fermions that can be in the modes of $\hh_S^0$ or in the modes of $\hh_B^0$. Hence, there are canonical injections of $\hh_S$ and $\hh_B$ into $\hh_{SB}$. Likewise, $\bb(\hh_S)$ and $\bb(\hh_B)$ are naturally isomorphic to unital $C^*$-subalgebras $\cala_S$ and $\cala_B$ of $\bb(\hh_{SB})$. As a consequence, there exists partial traces $\tra_S:  \bb(\hh_{SB}) \rightarrow \bb(\hh_B)$ and $\tra_B:\bb(\hh_{SB}) \rightarrow \bb(\hh_S)$, which are unambiguously defined. However, $\cala_S$ and $\cala_B$ are not commuting subalgebra of $\bb(\hh_{SB})$, so there exists no isomorphisms of $C^*$-algebras $E:\bb(\hh_{SB})\rightarrow \bb(\hh_S)\otimes \bb(\hh_B)$ with the property that $E(\cala_S)=\bb(\hh_S)\otimes \set{Id_B}$ and $E(\cala_B)=\set{Id_S}\otimes \bb(\hh_B)$.

 Instead, we have to make a choice. The two most natural isomorphisms are $E_{SB}$ and $E_{BS}$, defined as follows: write $c_i^S, i=1...L$ the creation operators of $\hh_S$, $c_i^B, i=1...K$ the creation operators of $\hh_B$, $N_S=\sum_i (c_i^S)^*c_i^S$ the number operator of $\hh_S$ and $N_B=\sum_i (c_i^B)^* c_i^B$ the number operator of $\hh_B$. Then $E_{SB}$ and $E_{BS}$ are the $C^*$-algebra morphisms defined by:
 \begin{align*}
 E_{SB}\left(c_i^S\right)&=c_i^S\otimes Id_B  &  E_{BS}\left(c_i^S\right)&=c_i^S\otimes (-1)^{N_B} \\
 E_{SB}\left(c_i^B\right)&=(-1)^{N_S}\otimes c_i^B  &  E_{BS}\left(c_i^B\right)&=Id_S\otimes c_i^B.
 \end{align*}

The covariance matrix and quasi-free states behave well with respect to these decompositions, as shown in the following proposition:
\begin{prop}
Let $E$ be the isomorphism $E_{SB}$ or $E_{BS}$ and let $\rho_{SB}$ be a state on $\hh_{SB}$. Its covariance matrix $\mm_{SB}$ is an operator on the phase space $\caly_{SB}=\caly_{S}\oplus\caly_{B}$. 

Then the covariance matrix $\mm_S$ of $\rho_S=\tra_{\hh_B}\left(E(\rho_{SB})\right)$ is the restriction of $\mm_{SB}$ to $\caly_{S}$. Moreover, if $\rho_{SB}$ is quasi-free, then $\rho_S$ is also quasi-free

 Now, if $\rho_S$ and $\rho_B$ are two states on $\hh_S$ and $\hh_B$, then $\rho_{SB}=E^{-1}(\rho_S\otimes \rho_B)$ has $\mm_S\oplus \mm_B$ as a covariance matrix, and if $\rho_S$ and $\rho_B$ are quasi-free then $\rho_{SB}$ is a quasi-free state on $\hh_{SB}$.
\end{prop}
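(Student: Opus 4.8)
The plan is to reduce each assertion to the computation of expectations of products of a few field operators, using two facts valid for $E\in\{E_{SB},E_{BS}\}$: since $E$ is a trace-preserving $*$-isomorphism, $\tr{\rho_{SB}X}=\tr{E(\rho_{SB})E(X)}$ for every $X$; and by definition of the partial trace, $\tr{\rho_S A}=\tr{E(\rho_{SB})(A\otimes Id_B)}$ for $A\in\bb(\hh_S)$. Writing $a_i,a_i^*$ and $b_j,b_j^*$ for the annihilation/creation operators of the $S$- and $B$-modes on $\hh_{SB}$, the only recurring subtlety is tracking the twisting factors $(-1)^{N_S},(-1)^{N_B}$ produced by $E$; in every covariance entry they occur an even number of times and hence square to the identity.

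For the restriction statement, the block of $\mm_{SB}$ indexed by $\caly_S$ collects the entries $\tr{\rho_{SB}X}$ with $X$ a product of two operators among the $a_i,a_i^*$. Applying $E$, each $a_i$ becomes $c_i^S\otimes Id_B$ (for $E_{SB}$) or $c_i^S\otimes(-1)^{N_B}$ (for $E_{BS}$); the two twisting factors cancel, so $E(X)=Y\otimes Id_B$ with $Y$ the matching quadratic in $c^S,(c^S)^*$. The partial-trace identity then gives $\tr{\rho_{SB}X}=\tr{\rho_S Y}$, and running over the four sub-blocks ($aa^*,aa,a^*a^*,a^*a$) shows that the $\caly_S$-block of $\mm_{SB}$ is exactly $\mm_S$. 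Note this part needs no evenness assumption, since only products of two $S$-operators are involved.

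For preservation of quasi-freeness under restriction I would invoke Theorem \ref{theo:wick} and check that $\rho_S$ satisfies Wick's formula. Given $A_1,\dots,A_n\in\vect(c_i^S,(c_i^S)^*)$, lift them to operators $\tilde A_1,\dots,\tilde A_n$ among the $a_i,a_i^*$. For $n$ even the twisting cancels and $\tr{\rho_S A_1\cdots A_n}=\tr{\rho_{SB}\tilde A_1\cdots\tilde A_n}$; since $\rho_{SB}$ is quasi-free this is the Wick sum over pairings of two-point functions, each of which equals the corresponding two-point function of $\rho_S$ by the restriction already proved, so $\rho_S$ satisfies the even case of Wick's formula. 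For $n$ odd the same manipulation leaves the expectation of an odd product of field operators against $\rho_{SB}$ (times the even $B$-parity operator in the $E_{BS}$ case), which vanishes because quasi-free states have zero odd moments. Hence $\rho_S$ satisfies Wick's formula and is quasi-free.

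For the tensor-product statement with $\rho_{SB}=E^{-1}(\rho_S\otimes\rho_B)$, the same computation gives that the diagonal blocks of $\mm_{SB}$ factor through $E$ into $\tr{\rho_S\,\cdot\,}\tr{\rho_B Id_B}=\mm_S$ and $\tr{\rho_S Id_S}\tr{\rho_B\,\cdot\,}=\mm_B$, while each cross block factors into the product of one odd-degree $S$-expectation and one odd-degree $B$-expectation; these vanish precisely because the states are even, giving $\mm_{SB}=\mm_S\oplus\mm_B$. For the quasi-free assertion I would argue on Gaussian approximants: if $\rho_S=\lim\frac{1}{Z_S}e^{-H_S}$ and $\rho_B=\lim\frac{1}{Z_B}e^{-H_B}$ with $H_S,H_B$ quadratic, then $E^{-1}(H_S\otimes Id_B)$ and $E^{-1}(Id_S\otimes H_B)$ are commuting quadratic Hamiltonians on $\hh_{SB}$ (again the twisting factors collect into even powers on each monomial), so $E^{-1}(e^{-H_S}\otimes e^{-H_B})=e^{-(E^{-1}(H_S\otimes Id_B)+E^{-1}(Id_S\otimes H_B))}$ is Gaussian; passing to the limit, $\rho_{SB}$ is quasi-free. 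The step I expect to be most delicate is exactly this bookkeeping of the parity factors $(-1)^{N_S},(-1)^{N_B}$: they are what separate $E_{SB}$ from $E_{BS}$, and their vanishing odd moments are what force the cross covariance blocks to vanish—so the block-diagonal identity genuinely rests on the states being even (as quasi-free states always are), not on them being arbitrary density matrices.
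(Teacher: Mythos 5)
The paper states this proposition without proof, so there is nothing to compare against; on its own terms your argument is correct and complete. The reduction of every assertion to two- and $n$-point functions via the trace-preserving property of $E$ and the partial-trace identity is the natural route, the parity bookkeeping ($(-1)^{N_S}$, $(-1)^{N_B}$ squaring away in the diagonal blocks and in even-length products) is handled correctly, and the two quasi-freeness claims are properly dispatched: the restriction via Wick's formula (Theorem \ref{theo:wick}), including the odd-moment case where the leftover factor $(-1)^{N_B}$ keeps the product an odd operator of $\hh_{SB}$ and hence traceless against the even state $\rho_{SB}$; and the tensor product via commuting quadratic Gaussian approximants, where $E^{-1}(H_S\otimes Id)+E^{-1}(Id\otimes H_B)$ is again of the form $F_{SB}^*\,(T_S\oplus T_B)\,F_{SB}$. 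Your closing observation is a genuine and worthwhile catch: the vanishing of the cross blocks of $\mm_{SB}$ for $\rho_{SB}=E^{-1}(\rho_S\otimes\rho_B)$ rests on at least one of the two states being even, since e.g.\ under $E_{SB}$ a cross entry factors as $\tr{\rho_S\, c_i^S(-1)^{N_S}}\,\tr{\rho_B\, (c_j^B)^*}$, and both factors can be nonzero for a non-even state (already for a single mode with $\rho=\frac{1}{2}(\ket{0}+\ket{1})(\bra{0}+\bra{1})$). So the direct-sum formula as literally stated for arbitrary states is slightly too strong, and the two isomorphisms can then even disagree; it is correct whenever one of $\rho_S,\rho_B$ is even, which covers the quasi-free case used throughout the paper.
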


The difference between $E_{SB}$ and $E_{BS}$ is often of no consequence; for example, if $C$ and $D$ are even operators, then $C$ and $D$ commute as elements of $\bb(\hh_{SB})$ and $E_{SB}(CD)=E_{BS}(DC)=C\otimes D$. In particular, quasi-free states and the density matrix behaves well under tensor products and partial traces, whatever is the isomorphism we choose: for $\rho_S$ and $\rho_B$ quasi-free states, $E_{SB}^{-1}(\rho_S\otimes \rho_B)=E_{BS}^{-1}(\rho_S\otimes \rho_B)=\rho_S \rho_B$ is a quasi-free state. Moreover, if $\rho_{SB}$ is a quasi-free state on $\hh_{SB}$, then $\tra_B(\rho_{SB})$ is a quasi free state. The covariance matrix also behave in a way that is independent of the chosen isomorphism: for any states $\rho_{S}$ and $\rho_B$, the covariance matrix of $E_{SB}^{-1}\left(\rho_S\otimes \rho_B\right)$ is the same as the one of $E_{BS}^{-1}\left(\rho_S\otimes \rho_B\right)$ and is simply the direct sum of the covariance matrices of $\rho_S$ and $\rho_B$.  
\newline

{\bf Warning on more exotic isomorphisms: } The isomorphisms $E_{SB}$ and $E_{BS}$ are not the only possibilities. For example, if we divide $\hh_B^0$ as the direct sum of two subspaces, $\hh_B^0=\hh_{B_L}^0\oplus \hh_{B_R}^0$, we could consider $E_{B_L S B_R}: \bb(\hh_{SB})\rightarrow \bb(\hh_{S})\otimes \bb(\hh_{B})$ defined by $E(B_L A B_R)=A\otimes B_LB_R$ for any even operators $B_L \in \bb(\hh_{B1})$, $B_R \in \bb(\hh_{B2}), A \in \bb(\hh_S)$. This type of isomorphism will be used in the context of spin chains (see Section \ref{sec:XYchain}). Note that if $\rho_B$ is bipartite, $\rho_B=\rho_{B1}\otimes \rho_{B2}$ with $\rho_{B_1}, \rho_{B2}$ quasi-free state, then if $\rho_S$ is a quasi-free state of $\hh_S$, $E_{B_1 S B_2}^{-1}\left(\rho_{B}\otimes \rho_S\right)$ is a quasi free state; but if $\rho_B$ is a non-bipartite quasi-free state, $E_{B_1 S B_2}^{-1}\left(\rho_{B}\otimes \rho_S\right)$ might not be quasi-free.

\section{Repeated interactions processes and fermions}\label{sec:repeated_interaction}

In this section, we describe a model for a continuous-time dissipative dynamical evolution on a fermionic system. First, we introduce the general repeated interaction model, which we then apply to fermionic systems. Some important properties are derived, namely the fact that quasi-free states remain quasi-free under the evolution and that the covariance matrix follows a closed master equation. Finally, we derive a necessary and sufficient criterion for the convergence to a unique stationary state. Examples and applications are described in the following section. 

\subsection{Quantum dynamical semigroup}

Quantum dynamical semigroups (or Lindblad semigroups) are a general model for quantum systems in interaction with a large bath. It is the quantum counterpart of a classical Markov semigroup, and it corresponds to the situation where the bath is left unperturbed by the system: its effect on the system is the same at all times. 
\begin{defi}
Let $\hh_S$ be a finite-dimensional Hilbert space. A quantum dynamical semigroup on $\hh_S$ is a semigroup $(\Lambda^t)_{t\in \R_+}$ where for all $t$, the super-operator $\Lambda^t$ is a quantum channel (i.e. a completely positive, trace preserving map on $\bb(\hh_S)$) and $t\rightarrow \Lambda^t$ is continuous,  $\Lambda_0=Id_{\bb(\hh_S)}$ and $\Lambda^{t+s}=\Lambda^t \Lambda^s$ for all $t, s \geq 0$. 
\end{defi}
The generators of quantum dynamical semigroups are called Lindbladians. Gorini, Kossakowski, Sudarshan and Lindblad established that there exists a self-adjoint operator $H_S \in \bb(\hh_S)$ and a completely positive map $\Phi$ on $\bb(\hh_S)$ so that 
\[
\frac{d}{dt} \Lambda^t \rho= -i[H_S, \rho]+\Phi^*(\rho)-\frac{1}{2} \set{\Phi(Id_S), \rho}. 
\]

There are several ways of obtaining a quantum dynamical semigroup from an explicit model of interaction with a quantum bath. A classical one is Van Hove's weak coupling limit (see \cite{Davies74} and \cite{Derezinski2007}). The one we consider is the continuous-time limit of a repeated interaction process, a model introduced by Attal and Pautrat \cite{AttalPautrat}.

\subsection{Continuous limit of a repeated interaction process}\label{subsec:continuous_limit}

A repeated interaction process describes a system in interaction with a large bath the following way:

\begin{enumerate}
\item The system is described by a Hilbert space $\hh_S$ and a Hamiltonian $H_S$.

\item The bath is composed of an infinite number of copies of the same model $\hh_B$, with Hamiltonian $H_B$  and initial density matrix $\omega$: the total bath space is $\hh_{B, tot}=\bigotimes_{n \in \N^*} \,\hh_n$ where $\hh_n=\hh_B$. The initial state of the bath is $\omega_{tot}=\omega \otimes \omega \otimes \cdots$ and the Hamiltonian of the bath is $H_{B,tot}=\sum H_n$ where $H_1=H_B \otimes Id\otimes \cdots$, $H_2=Id \otimes H_B \otimes Id\otimes \cdots$, $H_3=Id\otimes Id \otimes H_B \otimes Id \otimes \cdots$ and so on. 

\item The system $\hh_S$ interact repeatedly with the subsystems $\hh_n$ of the bath, during the same time length $\tau$ and following the same interaction $\lambda V \in \bb(\hh_S\otimes \hh_B)$. Hence, the global Hamiltonian is constant in the time intervals $[n \tau, (n+1)\tau ]$ and is equal to $H_S+\lambda V_n+ H_n$. 

\item We also assume that $\tra_B (V I \otimes \omega)=0$. This is equivalent to the fact that for all $\rho$, $\tra_B(V \rho \otimes \omega)=0$. This hypothesis will be needed to ensure the convergence when the time scale $\tau$ goes to $0$.  \label{assumption:V_interaction}
\end{enumerate}

This picture means that a quantum channel is repeatedly applied onto the system. Indeed, define the quantum channel $L_\tau$ by
\[
L_\tau(\rho_S)=\tra_{\hh_B} \left( e^{-i\tau(H_S+\lambda V + H_B)} \, (\rho_S \otimes \omega) \, e^{i\tau (H_S+ \lambda V + H_B)} \right)\,.
\]
Then the state of the system at time $n\tau$ is 
\[
\Lambda_\tau (n\tau) (\rho_S)=L_\tau^n(\rho_S)\,.
\]
For general $t \geq 0$, we put $\Lambda_\tau (t)=L_\tau^{\ent{t/\tau}}$ for any $t \geq 0$. 

We want to take the continuous limit of such a process when $\tau \ap 0$ under suitable renormalization.

\begin{prop}\label{prop:limite_continue}
For every $\tau>0$, $t\geq 0$, consider the completely positive map $\Lambda_\tau(t)$ as above. Suppose $H_S$, $H_B$ and $V$ does not depend on $\tau$ and $\lambda=1/\sqrt{\tau}$. Then  the following limit exists for all $t\geq 0$:
\[
\Lambda^t=\lim_{\tau \ap 0} \Lambda_\tau(t)\,.
\]
 The family $\big(\Lambda^t\big)_{t\in [0, \infty[}$ form a quantum dynamical semigroup, with generator:
 \[
 \call(\rho)=-i[H_S, \rho]+\Phi^*(\rho)-\frac{1}{2}\set{\Phi(I), \rho}
 \]
 where $\Phi^*(\rho)=\tra_B\left(V(\rho\otimes \omega) V\right)$. 
\end{prop}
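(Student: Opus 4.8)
The plan is to compute the short-time expansion of the single-step channel $L_\tau$ to second order in $\sqrt{\tau}$, identify the generator of the limiting semigroup, and then invoke a Trotter-type argument to conclude convergence of $\Lambda_\tau(t) = L_\tau^{\lfloor t/\tau\rfloor}$ to $e^{t\call}$. The key bookkeeping device is that with $\lambda = 1/\sqrt{\tau}$, the interaction term $\lambda V_n$ scales like $\tau^{-1/2}$ while it acts over a time window of length $\tau$, so the relevant contributions enter at the same order as the Hamiltonian and dissipative terms.

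First I would write the total one-step Hamiltonian as $K_\tau = H_S + \tau^{-1/2} V + H_B$ (suppressing the tensor factors and the copy index) and Taylor-expand the unitary conjugation $U_\tau (\,\cdot\,) U_\tau^*$ with $U_\tau = e^{-i\tau K_\tau}$ to order $\tau$, being careful that the large term $\tau^{-1/2}V$ forces me to keep terms up to the \emph{second} order in the Dyson/Duhamel expansion. Explicitly, expanding $U_\tau = I - i\tau K_\tau - \tfrac{\tau^2}{2} K_\tau^2 + o(\tau^2)$ and substituting $K_\tau$, the conjugation $U_\tau (\rho_S\otimes\omega) U_\tau^*$ produces: a zeroth-order term $\rho_S\otimes\omega$; a term $-i\tau[H_S+H_B, \rho_S\otimes\omega]$ of order $\tau$; a term $-i\sqrt{\tau}\,[V,\rho_S\otimes\omega]$ of order $\sqrt{\tau}$; and the crucial order-$\tau$ contribution $\tau\big(V(\rho_S\otimes\omega)V - \tfrac12\{V^2,\rho_S\otimes\omega\}\big)$ coming from the square of the $\tau^{-1/2}V$ term against the $\tau^2 K_\tau^2$ factor. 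The next step is to apply $\tra_{\hh_B}$ to each term. The assumption~\ref{assumption:V_interaction} that $\tra_B(V\,I\otimes\omega)=0$ is exactly what kills the dangerous $\sqrt{\tau}$ term $-i\sqrt{\tau}\,\tra_B([V,\rho_S\otimes\omega])$, so that after tracing out the bath the expansion begins at order $\tau$. Collecting the surviving terms and writing $\Phi^*(\rho_S) = \tra_B(V(\rho_S\otimes\omega)V)$ and $\Phi(I) = \tra_B(V^2\,I\otimes\omega)$ yields
\[
L_\tau(\rho_S) = \rho_S + \tau\Big(-i[H_S,\rho_S] + \Phi^*(\rho_S) - \tfrac12\{\Phi(I),\rho_S\}\Big) + o(\tau) = \rho_S + \tau\,\call(\rho_S) + o(\tau),
\]
where the $H_B$ commutator vanishes under the partial trace since $\tra_B([H_B, \rho_S\otimes\omega]) = \tra_B(H_B\,\rho_S\otimes\omega) - \tra_B(\rho_S\otimes\omega\,H_B)=0$ by cyclicity.

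Having shown $L_\tau = Id + \tau\call + o(\tau)$, I would finish by passing from the discrete iteration to the continuous semigroup. Since $\Lambda_\tau(t) = L_\tau^{\lfloor t/\tau\rfloor}$ and $\lfloor t/\tau\rfloor \sim t/\tau$, the standard estimate $\norm{L_\tau^{n} - e^{n\tau\call}} \to 0$ as $\tau\to 0$ (a finite-dimensional Trotter--Kato argument, controlling $\norm{L_\tau - e^{\tau\call}} = o(\tau)$ and summing $n\sim t/\tau$ such errors) gives $\Lambda^t = \lim_{\tau\to0}\Lambda_\tau(t) = e^{t\call}$. The limit is automatically a quantum dynamical semigroup: each $L_\tau$ is completely positive and trace-preserving (being a partial trace of a unitary conjugation of $\rho_S\otimes\omega$), these properties are preserved under composition and under the limit, and the generator $\call$ is manifestly of Lindblad form, with $H_S$ self-adjoint and $\Phi$ completely positive.

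I expect the main obstacle to be the careful justification of the order-$\tau$ term and the uniformity of the remainder estimate. The subtlety is that the generator $K_\tau$ itself blows up like $\tau^{-1/2}$, so one cannot naively apply a single Taylor bound; the correct approach is to organize the expansion by the \emph{number of occurrences} of $V$ rather than by naive powers of $\tau$, using the Duhamel expansion and noting that each factor of $V$ carries $\tau^{-1/2}$ while each integration over $[0,\tau]$ carries $\tau$, so a monomial with $k$ factors of $V$ and the ambient free evolution contributes at order $\tau^{k/2}\cdot\tau^{(\text{integrations})}$. Verifying that all contributions beyond those listed above are genuinely $o(\tau)$ after the partial trace (in particular that the leftover $\sqrt{\tau}$ terms all vanish by assumption~\ref{assumption:V_interaction}, and that no $\tau\cdot\tau^{-1/2}$ cross-terms survive) is where the real care is needed, and uniformity in the remainder over the $\sim t/\tau$ iterations is what legitimizes the final passage to the limit.
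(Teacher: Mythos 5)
Your proposal is correct and follows essentially the same route as the paper: a second-order expansion of the one-step unitary conjugation in powers of $\sqrt{\tau}$, with the order-$\sqrt{\tau}$ term killed by the assumption $\tra_B(V\, I\otimes\omega)=0$, yielding $L_\tau=Id+\tau\call+o(\tau)$ and hence $\Lambda_\tau(t)=(Id+\tau\call)^{\ent{t/\tau}}+o(\tau)\to e^{t\call}$. The paper's proof is just a terser version of yours; your additional remarks on organizing the Duhamel expansion by the number of $V$-factors and on the preservation of complete positivity and trace preservation in the limit are the details the paper leaves implicit.
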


 This result is analogous to Theorem 21 in Attal and Pautrat \cite{AttalPautrat}.
 \begin{proof}
 Write the second order expansion of $e^{i(\tau H_S+  \sqrt{\tau} V + \tau H_B)}$ as $\tau \rightarrow 0$. We obtain
 \[
 L_\tau(\rho)=\rho-i\sqrt{\tau} [\tra_B (V I \otimes \omega), \rho]+\tau \call(\rho)+o(\tau)
 \]
 where $\call$ is defined in the proposition. Now the term of order $\sqrt{\tau}$ is zero because of the assumption \ref{assumption:V_interaction}, thus
 \[
 \Lambda_\tau(t)=\left(Id+\tau \call\right)^{\ent{t/\tau}}+o(\tau)
 \]
 which converges towards $e^{t\call}$ as $\tau \rightarrow 0$.  
 \end{proof}

 Now that we described this general way to construct a quantum dynamical semigroup from a repeated interaction model, we will apply this to the case of fermionic systems whose evolution is generated by quadratic Hamiltonians.

\subsection{Fermionic repeated interaction process} \label{subsec:fermionic_repeated_process}

From now on, the spaces $\hh_S$ and $\hh_B$ are fermionic systems of respective lengths $L$ and $K$, with field operators $\gamma_{1}^S, \cdots, \gamma_{L}^S$ for $\hh_S$ and $\gamma^B_1, \cdots, \gamma^B_K$ for $\hh_B$. We write
$F_S$, $F_B$ and $F_{SB}$ the column operator for $\hh_S$, $\hh_B$ and $\hh_{SB}=\hh_S\otimes \hh_B$ respectively.

 We use one of the isomorphisms $E_{SB}$ and $E_{BS}$ to identify $\hh_{S}\otimes \hh_B$ with $\Gamma(\hh_S^0 \oplus \hh_B^0)$. 
\newline

 {\bf In what follows, we will use mainly the Majorana basis and not the creation/annihilation basis, and we identify an operator on $\caly$ and its matrix in the Majorana basis.  The creation/annihilation basis will only be used in the (important) case of Gauge-Invariant Hamiltonians. }
\newline
 
 To separate the bath and the system in the notation, we will order the elements of $F_{SB}$ this way: 
 \[
 F_{SB}=\left( \gamma_1^S, ..., \gamma_{2L}^S, \gamma_1^B, ..., \gamma_{2K}^B\right)^T.
 \]

 The Hamiltonians used in the repeated interaction process are quadratic Hamiltonians: 
 \begin{align}
 H_S&=\frac{1}{2}F_S^*\,T_S \,F_S \\
 H_B&=\frac{1}{2}F_B^*\, T_B \,F_B\\
 V&=F_S^* \Theta F_B=\frac{1}{2} \left(F_S^* \Theta F_B+F_B^* \Theta^* F_S\right)
 \end{align}
 where, in the Majorana basis, $T_S=iR_S$ and $T_B=iR_B$ with $R_S$, $R_B$ real antisymmetric matrices of size $2L\times 2L$ and $2K \times 2K$ and $\Theta=iW$ where $W$ is a real matrix of size $2L\times 2K$.  We also assume that the state $\omega$ of $\hh_B$ is a quasi-free state and we write $M_B$ its (full) covariance matrix. 
\newline
 
 We claim that $\tra_B(V I \otimes \omega)=0$ (Assumption \ref{assumption:V_interaction} of Subsection \ref{subsec:continuous_limit}). Indeed, since $\omega$ is gaussian, it is an even operator. Moreover, $V$ is a sum of terms the form $\gamma^S_i \gamma^B_j$ or $\gamma^B_i \gamma^S_j$. But 
 \begin{align*}
 \tra_B(\gamma^S_i \gamma^B_j I \otimes \omega)&=\gamma^S_i \tr{\gamma^B_j \omega}
 \end{align*}
and $\gamma^B_j \omega$ is an odd operator, since $\omega$ is even. Thus, it is of trace $0$, so  $\tra_B(\gamma^S_i \gamma^B_j I \otimes \omega)=0$.
 \newline
 
 The total Hamiltonian during one interaction is
 \[
 H_{tot}=\frac{1}{2} F_{BS}^* 
 \begin{pmatrix}
 T_S & \lambda \Theta \\
 \lambda \Theta^*  & T_B
 \end{pmatrix}
 F_{BS}.
 \]
 
With $\lambda=1/\sqrt{\tau}$, it is ensured that the repeated interaction process converges to a quantum dynamical semigroup as $\tau \ap 0$. Let $\big(\Lambda^t\big)_{t\geq 0}$ be the completely positive semigroup describing this evolution.

\subsection{The Lindblad operator for a general state}
By Proposition \ref{prop:limite_continue}, we know that the generator $\call$ of  $(\Lambda^t)_{t\geq 0}$ is the form 
\[
 \call(\rho)=-i[H_S, \rho]+\Phi^*(\rho)-\frac{1}{2}\set{\Phi(I), \rho}
 \]
 where $\Phi^*(\rho)=\tra_B\left(V\rho(\otimes \omega) V\right)$. We will show in the next subsection that knowing the exact form of $\call$ is actually not necessary when we are only interested in the covariance matrix of $\rho$; still, for it will be useful to write $\Phi^*$ as a function of the $c_i, c_i^*$ for the study of states that are not quasi-free. The expression of $\Phi^*$ depends on which of the isomorphisms $E_{SB}$ and $E_{BS}$ we used to identify $\hh_S\otimes \hh_B$ and $\Gamma(\hh_S^0\oplus \hh_B^0)$.  
 
 \begin{prop}\label{prop:evolutionRho}
 Assume we used the isomorphism $E_{SB}$. Then for any state $\rho$, 
 \[
 \Phi^*(\rho)=\sum_{1\leq i, j \leq 2L}  \left(\Theta M_B \Theta^*\right)_{i, j} \gamma_j (-1)^{N_S} \rho (-1)^{N_S} \gamma_i. \label{eq:rho}
 \]
 If we used the isomorphism $E_{BS}$ instead, we obtain 
 \[
  \Phi^*(\rho)=\sum_{1\leq i, j \leq 2L}  \left(\Theta M_B \Theta^*\right)_{i, j} \gamma_j \rho\, \gamma_i.
 \]
 \end{prop}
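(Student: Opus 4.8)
The plan is to compute $\Phi^*(\rho)=\tra_B\big(V (\rho\otimes\omega) V\big)$ directly by writing everything in the Majorana basis and then pushing the interaction $V$ through whichever isomorphism ($E_{SB}$ or $E_{BS}$) is used to identify $\hh_{SB}$ with $\Gamma(\hh_S^0\oplus\hh_B^0)$. In the Majorana basis the interaction reads $V=F_S^*\Theta F_B=\sum_{1\le i\le 2L,\,1\le k\le 2K}\Theta_{ik}\,\gamma_i^S\gamma_k^B$, where the $\gamma_i^S,\gamma_k^B$ are understood as the joint Majorana operators on $\hh_{SB}$. Since $E_{SB}$ and $E_{BS}$ are $*$-algebra morphisms and the partial trace $\tra_B$ is intrinsic (so that $\tra_B(X)=\widetilde{\mathrm{Tr}}_B(E(X))$ with $\widetilde{\mathrm{Tr}}_B$ the ordinary partial trace on $\bb(\hh_S)\otimes\bb(\hh_B)$), it suffices to work inside $\bb(\hh_S)\otimes\bb(\hh_B)$ after applying $E$.

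Treating first $E_{SB}$: because $\gamma_i^S$ is built from $c^S$ and $\gamma_k^B$ from $c^B$, the morphism gives $E_{SB}(\gamma_i^S)=\gamma_i^S\otimes Id_B$ and $E_{SB}(\gamma_k^B)=(-1)^{N_S}\otimes\gamma_k^B$, so that $E_{SB}(V)=\sum_{i,k}\Theta_{ik}\,\gamma_i^S(-1)^{N_S}\otimes\gamma_k^B$. Substituting into $\Phi^*(\rho)=\widetilde{\mathrm{Tr}}_B\big(E_{SB}(V)(\rho\otimes\omega)E_{SB}(V)\big)$ and using $(A_1\otimes B_1)(A_2\otimes B_2)(A_3\otimes B_3)=A_1A_2A_3\otimes B_1B_2B_3$, the expression factorizes as
\[
\Phi^*(\rho)=\sum_{i,j,k,l}\Theta_{ik}\Theta_{jl}\;\gamma_i^S(-1)^{N_S}\rho\,\gamma_j^S(-1)^{N_S}\;\tr{\gamma_k^B\,\omega\,\gamma_l^B}.
\]
The next step is to recognise the bath factor: by cyclicity $\tr{\gamma_k^B\omega\gamma_l^B}=\tr{\omega\gamma_l^B\gamma_k^B}=(M_B)_{lk}$. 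I would then clean up the system operator by moving the right-hand $(-1)^{N_S}$ next to $\rho$ via the anticommutation $\gamma_j^S(-1)^{N_S}=-(-1)^{N_S}\gamma_j^S$, and relabel $i\leftrightarrow j$, $k\leftrightarrow l$; the reality of $\Theta=iW$ (equivalently $\Theta^*=-\Theta^T$, so $\overline{\Theta_{jl}}=-\Theta_{jl}$) converts the remaining coefficient $-\sum_{k,l}\Theta_{ik}(M_B)_{kl}\Theta_{jl}$ into exactly $(\Theta M_B\Theta^*)_{ij}$, yielding the claimed formula with the factors $\gamma_j(-1)^{N_S}\rho(-1)^{N_S}\gamma_i$.

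For $E_{BS}$ the computation runs in parallel, the only change being $E_{BS}(\gamma_i^S)=\gamma_i^S\otimes(-1)^{N_B}$ and $E_{BS}(\gamma_k^B)=Id_S\otimes\gamma_k^B$. Now there are no $(-1)^{N_S}$ factors on the system side, while the bath trace becomes $\tr{(-1)^{N_B}\gamma_k^B\omega(-1)^{N_B}\gamma_l^B}$; using that $\omega$ is even (hence commutes with $(-1)^{N_B}$) together with $(-1)^{N_B}\gamma_k^B(-1)^{N_B}=-\gamma_k^B$ gives $-(M_B)_{lk}$. This extra minus sign precisely compensates the sign that was previously produced when commuting $(-1)^{N_S}$ past $\gamma_j^S$, so the same coefficient $(\Theta M_B\Theta^*)_{ij}$ reappears and one obtains $\Phi^*(\rho)=\sum_{i,j}(\Theta M_B\Theta^*)_{ij}\,\gamma_j\rho\,\gamma_i$.

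I expect the only real difficulty to be the sign bookkeeping: keeping track of the anticommutations of the $\gamma$'s with the parity operators $(-1)^{N_S}$ and $(-1)^{N_B}$, the transposition introduced by cyclicity of the trace when identifying $M_B$, and the use of $\Theta^*=-\Theta^T$ to assemble $\Theta M_B\Theta^*$ rather than $\Theta M_B^T\Theta^T$. A secondary point worth stating carefully is the compatibility $\tra_B(X)=\widetilde{\mathrm{Tr}}_B(E(X))$, which is exactly what lets us replace the intrinsic partial trace on $\hh_{SB}$ by the ordinary partial trace on the tensor product once $E$ has been applied.
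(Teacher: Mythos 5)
Your proposal is correct and follows essentially the same route as the paper: expand $V=\sum\Theta_{ik}\gamma_i^S\gamma_k^B$, push it through the chosen isomorphism so that $E_{SB}(\gamma_i^S\gamma_k^B)=\gamma_i^S(-1)^{N_S}\otimes\gamma_k^B$, identify the bath trace with entries of $M_B$, and assemble $\Theta M_B\Theta^*$ using $\Theta^*=-\Theta^T$ together with the parity anticommutation. The only difference is that you spell out the $E_{BS}$ case (where the sign from $(-1)^{N_B}$ replaces the one from $(-1)^{N_S}$), which the paper dismisses as ``similar.''
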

 
 Note that if $\rho$ is even, it commutes with $(-1)^{N_S}$ and the choice of isomorphism does not make any difference.
 
 \begin{proof}
Let us show it for the $E_{SB}$ isomorphism.

Replace $V=F_S^* \Theta F_B$ in $\Phi^*(\rho)=\tra_B(V\, (\rho\otimes \omega) \,V)$. We have $V=\sum_{i, j} \Theta_{i, j} \gamma_i^S \gamma_j^B=\sum_{i, j} \Theta_{i, j} \gamma_i^S (-1)^{N_S} \otimes \gamma_j^B$ so 
\begin{align*}
\Phi^*(\rho)&=\sum_{i, j, k, l}\Theta_{i, j}\Theta_{k,l} \gamma_i^S(-1)^{N_S} \rho \gamma_k^S (-1)^{N_S} \, \tr{\gamma_j^B \omega \gamma_l^B} \\
&=\sum_{i, j, k, l} \gamma_i^S(-1)^{N_S} \rho (-1)^{N_S} \gamma_k^S \,\,\Theta_{k, l}\,(M_B)_{l, j}\,\Theta^*_{j, i} & \text{because $\Theta^*=-\Theta^T$ and $ (-1)^{N_S} \gamma_k^S=- \gamma_k^S(-1)^{N_S}$} \\
&= \sum_{i, k} \gamma_i^S(-1)^{N_S} \rho (-1)^{N_S} \gamma_k^S \,\, \left(\Theta M_B \Theta^*\right)_{k, i}
\end{align*}
which is the announced formula. 

The proof in the case of the $E_{BS}$ isomorphism is similar.
 \end{proof}
 
 Since $\Theta M_B \Theta^*$ can be diagonalized by a Bogoliubov transform, this formula allows to write $\Phi^*(\rho)$ in the form $\sum L_i \rho L_i^*$, which will be useful to study the convergence property of $\Lambda^t$.

\subsection{Evolution of the covariance matrix} \label{subsec:evolution_covariance_matrix}
Given an initial state $\rho(0)$, the state of the system at the time $t$ is $\rho(t)=\Lambda^t \rho(0)$. Let us study the evolution of the covariance matrix $\mm(t)$. We will show that $\mm(t)$ follows its own master equation. 

\begin{theo}
Let $\mm(t)$ be the covariance matrix of $\rho(t)$ and $\mm_B$ the covariance matrix of $\omega$, the state of each subsystem of the bath. Then
\begin{align} \label{eq:mm}
\frac{\partial \mm(t)}{\partial t} = -i [T_S, \mm(t) ]- \frac{1}{2} \set{\Theta \Theta^*, \mm(t) } +\Theta \mm_B \Theta^*\,.
\end{align}
\end{theo}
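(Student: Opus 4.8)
The plan is to differentiate the entries of $\mm(t)$ directly. Writing $\mm_{ij}(t)=\tr{\rho(t)\,\gamma_i\gamma_j}$ in the Majorana basis and using $\frac{d}{dt}\rho(t)=\call(\rho(t))$ together with cyclicity of the trace, one gets
\[
\frac{d}{dt}\mm_{ij}(t)=\tr{\rho(t)\,\call^*(\gamma_i\gamma_j)},
\]
where $\call^*$ is the adjoint (Heisenberg) generator. Since $\call$ splits into a Hamiltonian part $-i[H_S,\cdot]$ and a dissipative part built from $\Phi^*$ and $\Phi(I)$, it suffices to compute $\call^*(\gamma_i\gamma_j)$ on each part and then re-express the result through the $\tr{\rho\,\gamma_k\gamma_l}=\mm_{kl}$.

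For the Hamiltonian part I would use the commutation relation of Proposition \ref{prop:commutation_H}: since $H_S=\tfrac12 F_S^*T_SF_S$ and $\xi T_S\xi=-T_S$, the bracket $[H_S,\gamma_i]$ is again \emph{linear} in the Majorana operators, so $[H_S,\gamma_i\gamma_j]=[H_S,\gamma_i]\gamma_j+\gamma_i[H_S,\gamma_j]$ remains quadratic. Taking the expectation in $\rho$ and collecting the resulting matrix products yields the commutator term $-i[T_S,\mm]$; equivalently, this is the infinitesimal form of the conjugation $\mm_t=e^{2itT}\mm_0e^{-2itT}$ of Proposition \ref{prop:covariance_evolution} applied with $T=\tfrac12 T_S$.

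The dissipative part is the heart of the matter. By Proposition \ref{prop:evolutionRho} the jump operators are \emph{linear} in the $\gamma_i$, with the Hermitian matrix $A=\Theta \mm_B\Theta^*$ carrying the coefficients, so that $\Phi^*$ acts as $\sum_{k,l}A_{kl}\,\gamma_l(\,\cdot\,)\gamma_k$ and $\Phi(I)=\sum_{k,l}A_{kl}\gamma_k\gamma_l$. I would then invoke the algebraic identity
\[
L^*XL-\tfrac12 L^*LX-\tfrac12 XL^*L=\tfrac12 L^*[X,L]+\tfrac12[L^*,X]L,
\]
which, with $X=\gamma_i\gamma_j$ and $L$ linear, shows that the dissipator sends the quadratic $\gamma_i\gamma_j$ to another \emph{quadratic}: the potentially quartic contributions cancel. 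This is the crucial structural point, and it is exactly what makes the covariance equation \emph{closed} for an arbitrary, not necessarily quasi-free, initial state. Concretely $[\gamma_i\gamma_j,\gamma_l]=2\delta_{jl}\gamma_i-2\delta_{il}\gamma_j$ reduces everything to quadratics, and contracting the Kronecker deltas against $A$ produces two kinds of terms: those in which $A$ is contracted through the anticommutator $\{\gamma_k,\gamma_l\}=2\delta_{kl}$, which assemble into $-\tfrac12\{\Theta\Theta^*,\mm\}$ via the identity $A+A^T=\Theta\Theta^*$ (this uses $\mm_B+\mm_B^T=I$), and a genuine source term coming from $A$ itself, namely $\Theta\mm_B\Theta^*$. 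Splitting $\mm_B=\tfrac12 I+iR_B$ into symmetric and antisymmetric parts makes this separation transparent.

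The main obstacle I anticipate is not conceptual but the careful bookkeeping of constants: the Majorana basis is orthogonal but not normalized ($\langle f_i,f_j\rangle=2\delta_{ij}$), so the factors of $2$ relating $\tr{\rho\,\gamma_i\gamma_j}$ to the action matrix of $\mm$ on $\caly$, the $\tfrac12$ in $H_S=\tfrac12 F_S^*T_SF_S$, and the $2\delta_{ij}$ in the anticommutators must all be tracked consistently to land on the stated coefficients $-i$, $-\tfrac12$ and $1$. One should also verify the quartic cancellation explicitly, and note that although $\Phi^*$ depends a priori on the choice of isomorphism $E_{SB}$ versus $E_{BS}$ (Proposition \ref{prop:evolutionRho}), the observables $\gamma_i\gamma_j$ are even, so $(-1)^{N_S}$ commutes through and both choices produce the same evolution for $\mm$.
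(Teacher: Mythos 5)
Your proof is correct in structure, but it takes a genuinely different route from the paper's. The paper does \emph{not} differentiate $\tr{\rho(t)\gamma_i\gamma_j}$ against the Lindblad generator: it goes back to the discrete repeated-interaction picture, applies the exact conjugation law for covariance matrices under a quadratic unitary (Proposition \ref{prop:covariance_evolution}) to the joint state $\rho_S\otimes\omega$ with the block Hamiltonian containing $\Theta/\sqrt{\tau}$, expands $R(\tau)$ to order $\tau$, reads off the upper-left $2L\times 2L$ block of $\mm_{SB}(\tau)$, and lets $\tau\to 0$. That route buys three things cheaply: it never needs the explicit Kraus form of $\Phi^*$ (Proposition \ref{prop:evolutionRho}), the quartic terms you must cancel by hand never appear, and the $E_{SB}$ versus $E_{BS}$ ambiguity is invisible because everything happens at the level of the joint unitary. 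Your route is the one the author explicitly mentions and declines (``it can be proved by using Formula \ref{eq:rho} and the commutation relations, but we choose a more direct way''); its compensating advantage is that it is intrinsic to the semigroup rather than to its derivation as a scaling limit, and it isolates the real structural reason the equation closes for arbitrary (non-quasi-free) initial states, namely that linear jump operators map the span of $\set{I,\gamma_i\gamma_j}$ into itself via $[\gamma_i\gamma_j,\gamma_l]=2\delta_{jl}\gamma_i-2\delta_{il}\gamma_j$. Your identification of the two hazards is accurate: the $(-1)^{N_S}$ factors indeed drop out because $\gamma_i\gamma_j$ is even, and the normalization of the Majorana covariance matrix (whether $\mm+\mm^T$ equals $I$ or $2I$, i.e.\ whether a $\tfrac12$ is absorbed into the definition) is exactly where a stray global factor of $2$ will appear if not pinned down; a clean write-up should fix that convention at the outset before contracting the Kronecker deltas against $A=\Theta\mm_B\Theta^*$.
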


\begin{proof}
It is sufficient to prove Formula \ref{eq:mm} at $t=0$ since the evolution of $\rho$ is given by a semi-group. It can be proved by using Formula \ref{eq:rho} and the commutation relations, but we choose a more direct way. 

We come back to the description of the evolution as the limit of a repeated interaction process. Let $\mm_{SB}(\tau)$ be the covariance matrix of 
\[
\rho_SB(\tau)=e^{i\tau H_{tot}}\, (\rho_S(0) \otimes \omega) \, e^{i\tau H_{tot}}\,.
\]

Then $L_\tau(\rho_S)=\tra_{\hh_B}(\rho_SB(\tau))$, so the covariance matrix of $L_\tau(\rho_S)$, written $\mm_\tau(\tau)$ is simply the first $2L \times 2L$-block of $\mm_{SB}(\tau)$. Let 
\[
R(t)=e^{-i\tau \begin{pmatrix}
 T_S & \frac{1}{\sqrt{\tau}} \Theta \\
 \frac{1}{\sqrt{\tau}} \Theta^*  & T_B
 \end{pmatrix} }\,.
\]
By Proposition \ref{prop:covariance_evolution}, 
\begin{align*}
\mm_{SB}(\tau)&=R(\tau) \mm_{SB}(0) R^* (\tau) \\
&=R(\tau) \begin{pmatrix}
\mm(0) & 0 \\
0 & \mm_B
\end{pmatrix} R^*(\tau)\,.
\end{align*}
But
\[
R(\tau)=I-i\sqrt{\tau} 
\begin{pmatrix}
0 & \Theta\\
\Theta^* & 0
\end{pmatrix}
-i \tau \begin{pmatrix}
T_S & 0 \\
0 & T_B
\end{pmatrix}
-\frac{\tau}{2}
\begin{pmatrix}
\Theta \Theta^* & 0 \\
0 &\Theta^* \Theta
\end{pmatrix}
+ o(\tau)
\]
so the first block of $\mm_{SB}(\tau)$ is
\[
\mm_\tau(\tau)=
 \mm+\tau \left(   -i [T_S, \mm(t) ]- \frac{1}{2} \set{\Theta \Theta^*, \mm_S(t) } + \Theta \mm_B \Theta^*  \right) + o(\tau)\,.
\]
The proposition is obtained by taking the limit $\tau \ap 0$.
\end{proof}

 \subsection{Evolution of a quasi-free state}\label{subsec:evolution_gaussian_matrix}
We are now able to describe the evolution of the covariance matrix, but it doesn't fully describe the state of the system, unless if $\rho_S(t)$ is quasi-free. Luckily, quasi-free states remain quasi-free under the evolution:  
\begin{prop}
If $\rho_S(0)$ is a quasi-free states then $\rho_S(t)$ is a quasi-free states at all $t\seg 0$. 
If moreover the state of the bath $\omega$ and $\rho_S(0)$ are nondegenerate, then $\rho_S(t)$ is nondegenerate for all $t>0$. 
\end{prop}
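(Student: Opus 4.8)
The plan is to reduce everything to the continuous-time limit of the repeated interaction process, exploiting the stability properties of quasi-free states established earlier in the excerpt. First I would recall that $\rho_S(t) = \Lambda^t \rho_S(0) = \lim_{\tau \to 0} L_\tau^{\lfloor t/\tau \rfloor}(\rho_S(0))$ by Proposition \ref{prop:limite_continue}, so it suffices to show that each single application of $L_\tau$ preserves the class of quasi-free states, and then invoke the closedness of that class under limits (which is built into the very definition of quasi-free states in the excerpt, as limits of Gaussian states).

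The key step is then to analyze one step $L_\tau(\rho_S) = \tra_{\hh_B}\bigl( e^{i\tau H_{tot}} (\rho_S \otimes \omega) e^{-i\tau H_{tot}} \bigr)$. I would argue in three moves. First, if $\rho_S$ is quasi-free and $\omega$ is quasi-free, then $\rho_S \otimes \omega$ is a quasi-free state on $\hh_{SB}$: this is exactly the content of the tensor-product proposition in Subsection \ref{subsec:tensorProduct}, which states that $E^{-1}(\rho_S \otimes \rho_B)$ is quasi-free with covariance matrix $\mm_S \oplus \mm_B$. Second, since $H_{tot}$ is a quadratic Hamiltonian on $\hh_{SB}$, Proposition \ref{prop:invariance_unitary} guarantees that $e^{i\tau H_{tot}} (\rho_S \otimes \omega) e^{-i\tau H_{tot}}$ remains quasi-free. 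Third, taking the partial trace over $\hh_B$ of a quasi-free state yields a quasi-free state on $\hh_S$, which is again asserted in the tensor-product proposition. Composing these three facts shows $L_\tau$ maps quasi-free states to quasi-free states; iterating and passing to the limit gives that $\rho_S(t)$ is quasi-free for every $t \geq 0$.

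For the nondegeneracy claim, I would work through the covariance matrix rather than the state itself. By the remark following Proposition \ref{prop:covariance_evolution}, a quasi-free state is nondegenerate precisely when its covariance matrix $\mm$ satisfies $0 < \mm < Id$, i.e. $\ker \mm = \ker(Id - \mm) = 0$. So I would show that, under the master equation \eqref{eq:mm}, if $\mm(0)$ and $\mm_B$ both lie strictly between $0$ and $Id$, then $\mm(t)$ stays strictly inside $[0, Id]$ for all $t > 0$. The natural approach is to examine the drift term $\Theta \mm_B \Theta^*$ together with the dissipative term $-\tfrac12 \{\Theta \Theta^*, \mm\}$: the Hamiltonian commutator $-i[T_S, \mm]$ is a pure rotation preserving the spectrum, so any tendency of eigenvalues of $\mm(t)$ to reach $0$ or $1$ must be prevented by the dissipative and source terms.

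The hard part will be making the strict-positivity argument for $\mm(t)$ rigorous. One clean way is to write the solution of the linear master equation explicitly: setting $G = iT_S + \tfrac12 \Theta \Theta^*$, the equation \eqref{eq:mm} has the form $\dot{\mm} = -G\mm - \mm G^* + \Theta \mm_B \Theta^*$, whose solution is $\mm(t) = e^{-tG} \mm(0) e^{-tG^*} + \int_0^t e^{-(t-s)G} \Theta \mm_B \Theta^* e^{-(t-s)G^*}\, ds$. Since $\mm(0) > 0$ and $e^{-tG}$ is invertible, the first term is strictly positive, giving $\mm(t) > 0$; the analogous equation for $Id - \mm(t)$ (using that the same master equation with $\mm_B$ replaced by $Id - \mm_B$ governs $Id - \mm$, because $\Theta(Id)\Theta^* = \Theta\Theta^*$ matches the anticommutator structure) gives $Id - \mm(t) > 0$. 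I expect the main subtlety to be verifying that $Id - \mm$ indeed solves the same type of equation with source $\Theta(Id - \mm_B)\Theta^*$, which rests on the identity $\tfrac12\{\Theta\Theta^*, Id\} = \Theta\Theta^*$ and the fact that $[T_S, Id] = 0$; once that is checked the two positivity arguments are symmetric.
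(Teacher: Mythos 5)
Your proof of the first claim (preservation of quasi-freeness) is essentially the paper's own argument: one step of $L_\tau$ preserves quasi-free states by combining the tensor-product proposition of Subsection \ref{subsec:tensorProduct}, Proposition \ref{prop:invariance_unitary} applied to the quadratic Hamiltonian $H_{tot}$, and the stability of quasi-freeness under partial trace, after which one iterates and passes to the limit $\tau\to 0$ using closedness of the class (the paper phrases this as preservation of Wick's formula under limits). Nothing to add there.

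For the nondegeneracy claim you take a genuinely different, and somewhat heavier, route than the paper. The paper stays at the discrete level: the joint covariance matrix of $\rho_S\otimes\omega$ is $\mm\oplus\mm_B$, which satisfies $\varepsilon\, Id\leq \mm_{SB}\leq (1-\varepsilon)Id$; by Proposition \ref{prop:covariance_evolution} the interaction conjugates $\mm_{SB}$ by a unitary, which preserves these two-sided bounds, and compressing to the system block preserves them as well, so $\varepsilon\, Id\leq\mm_\tau(n\tau)\leq(1-\varepsilon)Id$ for all $n$ and hence $\varepsilon\, Id\leq\mm(t)\leq(1-\varepsilon)Id$ in the limit. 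Your argument instead integrates the continuous master equation \eqref{eq:mm} via the Duhamel formula with $G=iT_S+\tfrac12\Theta\Theta^*$, and it is correct: the computation $G+G^*=\Theta\Theta^*$ (using $T_S^*=T_S$) does show that $Id-\mm$ obeys the same equation with source $\Theta(Id-\mm_B)\Theta^*\geq 0$, and invertibility of $e^{-tG}$ gives strict positivity of both $\mm(t)$ and $Id-\mm(t)$. What the paper's version buys is uniformity in $t$ of the bounds $\varepsilon\, Id\leq\mm(t)\leq(1-\varepsilon)Id$ with no computation at all; what your version buys is that it works directly at the level of the Lindblad equation (no recourse to the discrete approximation) and, via the integral term, would also let you conclude nondegeneracy of $\mm(t)$ for $t>0$ under weaker hypotheses (e.g.\ when only the source term is strictly positive). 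One small point to make explicit: strict bounds $0<\mm(t)<Id$ imply nondegeneracy of the \emph{state} only because you already know $\rho_S(t)$ is quasi-free, so that it is in fact a Gaussian state $e^{-\beta F^*TF}/Z$ with full support; for a general state the remark after Proposition \ref{prop:covariance_evolution} only gives the converse implication.
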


\begin{proof}
Again, we come back to the description as the limit of a repeated interaction process.
Let $\rho$ be a quasi-free state. Then $\rho\otimes \omega$ is also a limit of quasi-free states and by Proposition \ref{prop:invariance_unitary}, the matrix
\[
\rho_{SB}(\tau)=e^{-i\tau H_{tot}}\, \rho(0) \otimes \omega \, e^{i\tau H_{tot}}
\]
is also a quasi-free state. Thus its partial trace $L_\tau(\rho(0))=\tra_{hh_B}(\rho_{SB}(\tau)$ is a quasi-free state. Hence $L_\tau^n(\rho)$ is a quasi-free states for all $n \in \N$. Wick's formula is also preserved by the limit $\tau \ap 0$, so in the continuous-time evolution, $\rho(t)$ is a quasi-free state.

Now, if $\rho_S(0)$ and $\omega$ are nondegenerate, then their covariance matrix satisfy $\varepsilon Id \leq \mm \leq (1-\varepsilon) Id$ for some $\varepsilon >0$. Then by construction $\varepsilon Id \leq\mm_\tau(n\tau) \leq (1-\varepsilon) Id$ for all $n \in \N$. So $\varepsilon Id \leq\mm(t) \leq (1-\varepsilon) Id$ for all $t$, so $\rho_S(t)$ is nondegenerate.
\end{proof}

\subsection{On the support of a stationary state}

A stationary state is a state $\sigma$ which is invariant under $\lambda^t$.  Semigroups that admits a stationary state with full support have several nice properties, and some authors concentrate on this case. It is motivated by the following facts: 
\begin{itemize}
\item For any CPTP semigroup, the support of a stationary state is stable by the semigroup
\item There exists a stationary state $\sigma$ with maximal support (i.e. for any stationary state $\rho$, we have $\text{Supp } \rho \subset \text{Supp } \sigma$).
\end{itemize} 
Thus we can hope to reduce the study of $(\Lambda^t)_t$ to the study of its restriction to the support of a stationary state. The following proposition shows that the restricted semigroup can still be seen as a semigroup on a fermionic space.

\begin{prop}\label{prop:support}
Let $\sigma$ be a stationary state for $(\Lambda^t)_t$ with maximal support $\cals \subset \hh$. Then there exists a Bogoliubov transform $U$ implemented by a unitary $V$ on $\hh$ and a decomposition of $\hh_0$ as $\hh_{A}^0\oplus \hh_{C}^0$ with the following property: identify $\hh$ with $\Gamma(\hh_A^0)\otimes \Gamma(\hh_C^0$, and let $\ket{\Omega_A}$ be the empty state on $\Gamma(\hh_A^0)$. Then 
\[
V\cals=\set{\ket{\Omega_A}\bra{\Omega_A}}\otimes \Gamma(\hh_C^0)\,.
\]
Moreover, let $(\Lambda_C^t)_t$ be the restriction to $\Gamma(\hh_C^0)$ of the semigroup $\rho\mapsto V \Lambda_t(V^*\rho V)V^*$. Then $(\Lambda_C^t)_t$ can be obtained as the continuous time limit of a repeated interaction process on quasi-free fermionic systems.
\end{prop}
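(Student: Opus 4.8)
I would attack the two halves of Proposition~\ref{prop:support} separately. The first half locates the support $\cals$ as a ``vacuum sector'' after a Bogoliubov change of basis; the second half shows the restricted dynamics is again of the repeated-interaction type. The key structural input is the master equation~\eqref{eq:mm} for the covariance matrix together with the earlier observation that $\ker\mm\neq 0$ or $\ker(Id-\mm)\neq 0$ signals non-faithfulness of $\rho$. So my first move is to pass to the covariance picture: let $\mm_\sigma$ be the covariance matrix of the maximal-support stationary state $\sigma$. Stationarity of $\sigma$ forces $\mm_\sigma$ to be a fixed point of~\eqref{eq:mm}, i.e.
\[
-i[T_S,\mm_\sigma]-\tfrac{1}{2}\{\Theta\Theta^*,\mm_\sigma\}+\Theta\mm_B\Theta^*=0\,.
\]
I expect the support geometry to be encoded in the eigenspaces of $\mm_\sigma$ for the extreme eigenvalues $0$ and $1$: modes where $\mm_\sigma$ equals $0$ or $Id$ are exactly the ``frozen'' modes that cannot be in superposition in the stationary state.

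\textbf{First half: locating the support.} I would diagonalize $\mm_\sigma$ by a Bogoliubov transform $U$ (using Proposition~\ref{prop:reduction} applied to the antisymmetric part, as in the remark after Proposition~\ref{prop:covariance_evolution}), implemented on $\hh$ by a unitary $V$ via Proposition~\ref{prop:commutation_H}. In the new basis $\mm_\sigma$ is diagonal with entries in $[0,1]$; I would split $\hh_0^0$ into the subspace $\hh_A^0$ of modes where the eigenvalue is extreme ($0$ or $1$) and the complement $\hh_C^0$ where $0<\mm_\sigma<Id$. After possibly composing with a further trivial Bogoliubov transform swapping $c$ and $c^*$ on the eigenvalue-$1$ modes, I can arrange that all $\hh_A^0$ modes have covariance-matrix eigenvalue corresponding to \emph{absence} of a particle, so that the marginal of $\sigma$ on $\Gamma(\hh_A^0)$ is exactly $\ket{\Omega_A}\bra{\Omega_A}$. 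Since $\sigma$ is quasi-free (being stationary for a semigroup preserving quasi-freeness, and taken of maximal support), the Wick formula (Theorem~\ref{theo:wick}) guarantees that $\sigma$ factorizes across this $A/C$ splitting and that its support is precisely $\{\ket{\Omega_A}\bra{\Omega_A}\}\otimes\Gamma(\hh_C^0)$; this gives $V\cals$ in the stated form. The point requiring care is that $\sigma$ has \emph{maximal} support: I must argue that no stationary state can put weight on the $A$-modes, which follows because any such state would have a covariance matrix whose extreme-eigenvalue eigenspace is strictly smaller, and maximality of $\text{Supp}\,\sigma$ among stationary states translates into $\mm_\sigma$ having the \emph{largest} possible spread $0<\mm<Id$ on $\hh_C^0$.

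\textbf{Second half: the restricted semigroup.} Here I would show that restricting the conjugated dynamics $\rho\mapsto V\Lambda^t(V^*\rho V)V^*$ to $\Gamma(\hh_C^0)$ again yields~\eqref{eq:mm}-type dynamics, hence by the converse direction of the construction in Section~\ref{sec:repeated_interaction} arises from a repeated-interaction process. The key is that the master equation~\eqref{eq:mm} block-decomposes along $\caly_S=\caly_A\oplus\caly_C$: since the $A$-block of $\mm_\sigma$ sits at a boundary fixed point, the invariance of $\cals$ under the semigroup (the first bulleted fact about supports) forces the dynamics to leave the $A$-modes frozen, so the conjugated matrices $V T_S V^*$, $V\Theta$ must respect this block structure in the sense that the $C$-block evolves autonomously with some effective $T_S^C$, $\Theta^C$, $\mm_B^C$. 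Once I identify these effective matrices, I read off a new real antisymmetric $R_S^C$, a new coupling $W^C$, and a bath covariance $\mm_B^C$ (which I can realize as the covariance of a quasi-free bath state), and the construction of Subsection~\ref{subsec:fermionic_repeated_process} runs verbatim to exhibit $(\Lambda_C^t)_t$ as a continuous-time limit of a fermionic repeated-interaction process.

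\textbf{Main obstacle.} The hard part will be the passage between the \emph{algebraic} statement about supports (a projection-valued, non-commutative object) and the \emph{linear-algebraic} statement about the eigenspaces of $\mm_\sigma$. Establishing that maximality of $\text{Supp}\,\sigma$ corresponds exactly to maximality of the non-extreme eigenspace of $\mm_\sigma$---and that the $A$-block is genuinely dynamically decoupled rather than merely frozen at the fixed point---requires combining the invariance of supports under CPTP semigroups with the quasi-free structure; I would expect to lean on Wick's formula and on the fact that on the support the restricted semigroup admits a faithful stationary state, which is what ultimately licenses applying the standard repeated-interaction machinery to the $C$-subsystem.
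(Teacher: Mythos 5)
Your first half follows the paper's route (diagonalize $\mm_\sigma$ by a Bogoliubov transform, flip the eigenvalue-$1$ modes, split off the extreme block), but it contains an unjustified step: you assume $\sigma$ is quasi-free in order to invoke Wick's formula for the factorization. A stationary state of maximal support need not be quasi-free a priori --- this proposition is used inside the proof of Theorem \ref{theo:uniqueness} \emph{before} uniqueness is known, and when uniqueness fails there exist stationary states that are not quasi-free, so the assumption is circular. The paper avoids it entirely: once the $A$-block of the covariance matrix equals $Id_a$, the marginal $\sigma_A=\tra_{\Gamma(\C^c)}\sigma$ has small covariance matrix $Id_a$ and is therefore the pure vacuum $\ket{\Omega_A}\bra{\Omega_A}$ for \emph{any} state $\sigma$; purity of a marginal forces $\sigma=\ket{\Omega_A}\bra{\Omega_A}\otimes\sigma_C$ with no Gaussianity needed. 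Your treatment of maximality is also not yet an argument: ``maximality translates into the largest possible spread'' begs the question. The paper's trick is to observe that $\mm_\sigma$ satisfies the fixed-point equation of \eqref{eq:mm}, hence the \emph{quasi-free} state with covariance matrix $\mm_\sigma$ is itself stationary and has support exactly $\set{\ket{\Omega_A}\bra{\Omega_A}}\otimes\Gamma(\hh_C^0)$ (since $0<\Lambda_C<Id$); maximality of $\text{Supp}\,\sigma$ then yields the reverse inclusion. You should incorporate that comparison state explicitly.

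The second half has a more serious gap. You claim that invariance of $\cals$ forces $VT_SV^*$ and $V\Theta$ to respect the block structure $\caly_A\oplus\caly_C$ so that the $C$-block evolves autonomously with some effective $T_S^C$, $\Theta^C$. Invariance of the support under a CPTP semigroup does not imply that the Hamiltonian matrix is block-diagonal: $T_S$ may well couple the $A$- and $C$-modes, and that coupling contributes genuine dissipative terms to the restricted generator rather than disappearing. Dropping the off-diagonal blocks would produce the wrong restricted semigroup. The paper's resolution is not to decouple but to \emph{enlarge the bath}: the restricted semigroup $(\Lambda_C^t)_t$ is realized as the continuous-time limit of repeated interactions of $\Gamma(\hh_C^0)$ with the bath $\hh_{B'}=\Gamma(\C^a)\otimes\hh_B$, where the $A$-$C$ part of $T_S$ is reinterpreted as part of the system--bath coupling; since on the invariant support the $A$-modes sit in the vacuum at the start of every interaction step, they behave exactly like a fresh bath copy. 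Your construction of ``effective'' matrices acting only on $\caly_C$ would need to be replaced by (or shown equivalent to) this enlargement.
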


\begin{proof}
Let $\mm$ be the covariance matrix of $\sigma$. There exists a Bogoliubov transform such that in the creation/annihilation basis, $\mm$ is diagonal, and we can also single out its eigenspaces for $0$ and $1$: let $a=\dim \ker (\mm)$ and $c=L-a$, then up to a Boboliubov transform $U$, we can assume that in the creation/annihilation basis we have
 \[
 \mm =\begin{pmatrix}
  Id_a & 0&0 &0\\
 0&  \Lambda_C&0&0\\
 0&0& 0 & 0 \\
0& 0 & 0 & Id_c-\Lambda_C
 \end{pmatrix}
 \]
where $\Lambda_C$ is a $c\times c$ covariance matrix with $0<\Lambda_C<Id$. We can decompose $\hh_S=\Gamma(\C^L)$ as $\Gamma(\C^a)\otimes \Gamma(\C^c)$. Let $\ket{\Omega_A}$ be the empty state of $\Gamma(\C^a)$ and let $\sigma_A=\tra_{\,\Gamma(\C^c)} \,\sigma$. The small covariance matrix of $\sigma_A$ is $Id_a$, hence $\sigma_A$ is the pure state $\ket{\Omega_A}\bra{\Omega_A}$. This implies that $\sigma$ is the  form $ \ket{\Omega_A }\bra{\Omega_A}\otimes \sigma_C$. 
 \begin{subproof}
 Indeed, let $P=\ket{\Omega_A} \bra{\Omega_A}\otimes Id_{\hh_C}$ be a projection of $\hh_S$. Then $\tr{\sigma P}=\tr{\bra{\Omega_A}\sigma_A \ket{\Omega_A}}=1$, so $\tr{\sigma-P\sigma P}=0$, but $P\sigma P \leq \sigma$ so $\sigma=P\sigma P= \ket{\Omega}\bra{\Omega}\otimes \sigma_C$.  
 \end{subproof} 
Hence, $Supp( \sigma) \subset  \set{\ket{\Omega_A}\bra{\Omega_A}}\otimes \Gamma(\hh_C^0)$.

 To prove the other inclusion, we use the fact that $\sigma$ is a stationary state with maximal support.  Note that since $\sigma$ is stationary, its covariance matrix satisfies
\[
-i [T_S, \mm]- \frac{1}{2} \set{\Theta \Theta^*, \mm} +\Theta \mm_B \Theta^*=0
\]
which implies that the quasi-free state $\rho$ of covariance matrix $\mm$ is also stationary and of support $\set{\ket{\Omega_A}\bra{\Omega_A}}\otimes \Gamma(\hh_C^0)$. Since the support of $\sigma$ is maximal this implies that $V \subset Supp(\sigma)$. 
\newline

Now, consider the bath space $\hh_B$, the Hamiltonian $H_S$ and the interaction $V$ used in the construction of $(\Lambda^t)_t$. Then $(\Lambda_C^t)_t$ arise from the continuous time limit of repeated interactions with $\hh_{B'}=\Gamma(\C^a)\otimes \hh_B$ with Hamiltonian $\tra_{\Gamma(\C^c)}(H_S)$ and interaction $V$. 
\end{proof}

We are now ready to study the ergodic properties of $(\Lambda^t)_t$.

\subsection{Convergence toward a unique stationary state}\label{subsec:unique_stationary_state}
We prove a necessary and sufficient condition for what is sometimes called the "return to equilibrium" property: there exists a unique stationary state $\rho_\infty$ and for any initial state $\rho$, $\rho(t)$ converges towards $\rho_\infty$. This property implies notably that the stationary state is quasi-free (since the set of quasi-free states is preserved by the evolution).

\begin{theo}\label{theo:uniqueness}
Consider the continuous process described above. Then the following assertions are equivalent:
\begin{enumerate}[label=(\arabic*)]
\item There exist a unique stationary state $\rho_\infty$; it is quasi-free and $\rho(t) \ap \rho_\infty$ for any initial condition $\rho(0)$. \label{as:uniq}
\item $\ker \Theta^*$ contains no eigenvectors of $T_S$. \label{crit:spectre}
\item The space $V(T_S, \Theta)=Vect\left( \cup_{k=0}^{2L}(\,(T_S)^k\, \Theta\,\right)$ is the whole $\C^{2L}$.\label{crit:kalman}
\end{enumerate}

\end{theo}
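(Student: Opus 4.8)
The plan is to route everything through the covariance matrix, whose evolution \eqref{eq:mm} is an \emph{affine} ODE, and to reduce all three conditions to a single Hurwitz stability statement. I would first rewrite the dissipative part of \eqref{eq:mm} as a Lyapunov operator: setting $G=-iT_S-\tfrac12\Theta\Theta^*$, one checks that $-i[T_S,\mm]-\tfrac12\{\Theta\Theta^*,\mm\}=G\mm+\mm G^*$, so the solution reads $\mm(t)=e^{tG}\mm(0)e^{tG^*}+\int_0^t e^{sG}\,\Theta\mm_B\Theta^*\,e^{sG^*}\,ds$. The Hermitian part of $G$ is $-\tfrac12\Theta\Theta^*\le 0$, whence $\mathrm{Re}\,\langle v,Gv\rangle=-\tfrac12\|\Theta^*v\|^2$; every eigenvalue of $G$ therefore has nonpositive real part, and one sits on the imaginary axis exactly when its eigenvector $v$ satisfies $\Theta^*v=0$ and (then) $T_S v$ is a real multiple of $v$. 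Thus $G$ is Hurwitz if and only if $\ker\Theta^*$ contains no eigenvector of $T_S$, which is precisely \ref{crit:spectre}.

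The equivalence \ref{crit:spectre}$\Leftrightarrow$\ref{crit:kalman} is the Hautus/Kalman dictionary, made clean by the fact that $T_S=iR_S$ is self-adjoint. The space $V(T_S,\Theta)$ is the smallest $T_S$-invariant subspace containing $\ran\Theta$ (Cayley--Hamilton truncates the union at $k=2L-1$). If \ref{crit:kalman} fails, then $V(T_S,\Theta)^\perp$ is a nonzero $T_S$-invariant subspace (invariance passes to the orthocomplement since $T_S=T_S^*$), hence contains an eigenvector $v$ of $T_S$; as $\ran\Theta\subseteq V(T_S,\Theta)$ we get $v\perp\ran\Theta$, i.e. $\Theta^*v=0$, contradicting \ref{crit:spectre}. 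Conversely an eigenvector $v$ of $T_S$ with $\Theta^*v=0$ gives $v^*T_S^k\Theta=\lambda^k v^*\Theta=0$ for all $k$, so $v\perp V(T_S,\Theta)$ and \ref{crit:kalman} fails.

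For \ref{as:uniq}$\Rightarrow$\ref{crit:spectre} I would argue by contraposition, exhibiting a conserved quantity. If $v$ is an eigenvector of $T_S$ with $\Theta^*v=0$, then $G^*v=i\lambda v$ with $\lambda\in\R$, so substituting $e^{sG^*}v=e^{i\lambda s}v$ in the integral formula gives $\langle v,\mm(t)v\rangle=\langle v,\mm(0)v\rangle$ for all $t$ (the driving term contributes $\|\mm_B^{1/2}\Theta^*v\|^2=0$). This quantity depends on the initial state, so no single $\rho_\infty$ can attract every initial condition, and \ref{as:uniq} fails.

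The main work, and the step I expect to be the genuine obstacle, is \ref{crit:kalman}$\Rightarrow$\ref{as:uniq}: promoting convergence of the covariance matrix to convergence of the \emph{whole} state for arbitrary, possibly non-quasi-free, initial conditions. When $G$ is Hurwitz the formula above yields $\mm(t)\to\mm_\infty$, the unique solution of $G\mm_\infty+\mm_\infty G^*+\Theta\mm_B\Theta^*=0$, for every initial state; the quasi-free state $\rho_\infty$ with covariance $\mm_\infty$ is then stationary, and for quasi-free initial data Wick's formula (Theorem \ref{theo:wick}) upgrades covariance convergence to state convergence at once. For general $\rho(0)$ I would control all correlators $\tr{\gamma_{i_1}\cdots\gamma_{i_n}\rho(t)}$: since $V$ is quadratic, one step of the repeated-interaction map sends a product of system Majorana operators to a Bogoliubov image whose bath expectation (Wick on the quasi-free $\omega$) closes the $n$-point functions among correlators of degree $\le n$. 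In the continuous limit this is a linear, filtration-upper-triangular ODE whose degree-$k$ diagonal block has spectrum given by sums of $k$ eigenvalues of $G$, hence strictly stable; a cascade argument then forces every correlator to its $\mm_\infty$-Wick value, i.e. $\rho(t)\to\rho_\infty$. The delicate point is that $\mm_\infty$ may have $0$ or $1$ as eigenvalues, so $\rho_\infty$ need not be faithful; the moment argument sidesteps this, whereas invoking the Fagnola--Rebolledo criterion instead requires first passing to the support of the maximal stationary state via Proposition \ref{prop:support} and verifying that \ref{crit:kalman} survives the restriction — the place where the identity $[H_S,\varphi(x)]=\varphi(T_S x)$ converts cyclicity of $\ran\Theta$ under $T_S$ into triviality of the commutant $\{L_k,L_k^*,H_S\}'$.
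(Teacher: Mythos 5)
Your reduction of the covariance dynamics to the Lyapunov operator $\mm\mapsto G\mm+\mm G^*$ with $G=-iT_S-\tfrac12\Theta\Theta^*$, the Hurwitz analysis of $\sigma(G)$, and the Hautus-style equivalence \ref{crit:spectre}$\Leftrightarrow$\ref{crit:kalman} coincide with the paper's first two lemmas (in particular Lemma \ref{lem:critMatrix}). Where you genuinely diverge is the hard implication \ref{crit:kalman}$\Rightarrow$\ref{as:uniq} for initial states with a nonzero odd part. The paper verifies the Fagnola--Rebolledo condition $\nn(\Lambda)=\C\,Id$ by diagonalising $\Theta M_B\Theta^*$ to get $\varphi(u)\in\cala(\Lambda)$ for $u\in\ran\Theta$ and propagating with $\delta_{H_S}(\varphi(u))=\varphi(T_Su)$; since that criterion needs a faithful invariant state, it then restricts to the support of the maximal stationary state via Proposition \ref{prop:support} and closes with a trace-norm estimate pushing an arbitrary state into that support. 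Your moment hierarchy bypasses Proposition \ref{prop:support} and the faithfulness issue altogether, which is a genuine structural gain; but the step carrying all the weight --- that the degree-$k$ diagonal block of the correlator ODE has spectrum contained in $k$-fold sums of $\sigma(G)$, hence is Hurwitz for $k\geq 1$ --- is asserted rather than proved. It is true (for $k=1$ it is essentially the same computation the paper performs inside the $\nn(\Lambda)$ lemma, and for general $k$ it is the third-quantization structure of Prosen), but writing it out costs about as much as the paper's support-reduction argument, so you have relocated the difficulty rather than removed it.

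One shared soft spot you should be aware of: your contrapositive for \ref{as:uniq}$\Rightarrow$\ref{crit:spectre} via the conserved quantity $\scal{v,\mm(t)v}$ is vacuous when the obstructing eigenvector can be taken real, i.e.\ when $v\in\ker T_S\cap\ker\Theta^*$ with eigenvalue $0$. Every covariance matrix is $\tfrac12 Id+iR$ with $R$ real antisymmetric in the Majorana basis, so for real $v$ one has $\scal{v,\mm v}\equiv\tfrac12\norm{v}^2$ independently of the state, and no two initial conditions are separated. (The paper's construction $R=\ket{a}\bra{b}-\ket{b}\bra{a}$ with $a=\Re u$, $b=\im u$ degenerates to $R=0$ in exactly the same situation.) This case does occur --- e.g.\ $L=K=1$, $T_S=0$, $\Theta\Theta^*$ of rank one --- and there the covariance matrix converges to a unique limit even though \ref{crit:spectre} fails; the failure of \ref{as:uniq} is witnessed only in the odd sector, by a field operator commuting with $H_S$ and all the $L_i$, which is then a fixed point of the dual semigroup. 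So this direction cannot be closed by covariance matrices alone, and you need an odd conserved observable to finish it.
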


The condition $V(T_S, \Theta)=\C^{2L}$ is called the Kalman condition in control theory. It is the condition that we will use in most concrete examples. 

  Prozen proved a similar convergence theorem for fermionic Lindblad evolutions in \cite{Prosen2008}, but it only works with initial conditions that are even operators. His methods are different than ours: Prosen's trick is to see $\bb(\hh_S)$ as $\Gamma(\hh_S^0 \oplus \hh_S^0)$. 
  
  In the following proof, the convergence for quasi-free initial state comes quite easily, but to deal with general states, we resort to the ergodic theory of quantum semigroup, with a few technicalities. 
  \newline

\begin{proof}

We break the proof in several lemmas. 

First, an easy lemma of linear algebra: 

\begin{lemma}
The assertions \ref{crit:spectre} and \ref{crit:kalman} in the theorem are equivalent. 
\end{lemma}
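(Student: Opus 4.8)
The plan is to show the equivalence of the two linear-algebra conditions
\ref{crit:spectre} and \ref{crit:kalman} by proving the contrapositive in both directions, which is the cleanest way to handle cyclicity statements. Recall that $V(T_S,\Theta)=\vect\left(\bigcup_{k=0}^{2L}(T_S)^k\ran\Theta\right)$ is the smallest $T_S$-invariant subspace containing $\ran\Theta$. Writing $W=V(T_S,\Theta)$, the content of \ref{crit:kalman} is $W=\C^{2L}$, and I would phrase the argument entirely in terms of $W$ and its orthogonal complement $W^\perp$.

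First I would observe that $W$ is $T_S$-invariant by its very construction (applying $T_S$ to any spanning vector $(T_S)^k x$ with $x\in\ran\Theta$ lands in the span of the $(T_S)^{k+1}x$, and the Cayley--Hamilton theorem guarantees that powers beyond $2L-1$ add nothing new). Consequently $W^\perp$ is invariant under $T_S^*$. The key translation is that $\ran\Theta\subset W$ is equivalent to $W^\perp\subset(\ran\Theta)^\perp=\ker\Theta^*$, using the standard identity $(\ran\Theta)^\perp=\ker\Theta^*$. So the statement $W^\perp\neq\{0\}$ together with $T_S^*$-invariance of $W^\perp$ gives a nonzero $T_S^*$-invariant subspace contained in $\ker\Theta^*$.

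Now I would run the two implications. For \ref{crit:kalman}$\imp$\ref{crit:spectre} by contraposition: if $\ker\Theta^*$ contains an eigenvector $v$ of $T_S$ with $T_S v=\lambda v$, then $\vect(v)$ is $T_S$-invariant and orthogonal-complement considerations are not even needed directly — instead I argue that $(\vect\{v\})^\perp$... more cleanly, I would work with $T_S^*$: since $T_S=iR_S$ with $R_S$ real antisymmetric, $T_S$ is skew-adjoint ($T_S^*=-T_S$), so $T_S$ and $T_S^*$ share eigenvectors and invariant subspaces coincide. Thus an eigenvector $v\in\ker\Theta^*$ of $T_S$ spans a one-dimensional $T_S^*$-invariant subspace inside $\ker\Theta^*$, whence $\vect\{v\}\subset W^\perp$, forcing $W\neq\C^{2L}$, contradicting \ref{crit:kalman}. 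For \ref{crit:spectre}$\imp$\ref{crit:kalman} by contraposition: if $W\neq\C^{2L}$, then $W^\perp\neq\{0\}$ is a nonzero $T_S^*$-invariant (hence, by skew-adjointness, $T_S$-invariant) subspace of $\ker\Theta^*$; any invariant subspace of an operator on a finite-dimensional complex space contains an eigenvector, so $\ker\Theta^*$ contains an eigenvector of $T_S$, negating \ref{crit:spectre}.

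The only subtle point — and the step I would flag as the main thing to get right — is the self-duality supplied by skew-adjointness of $T_S$: it lets me pass freely between $T_S$-invariance of $W$ and $T_S$-invariance of $W^\perp$, and between eigenvectors of $T_S$ and eigenvectors of $T_S^*$, so that ``no eigenvector of $T_S$ lies in $\ker\Theta^*$'' matches exactly ``$W^\perp$ has no nonzero invariant subspace inside $\ker\Theta^*$.'' If one did not use $T_S^*=-T_S$, one would have to phrase \ref{crit:spectre} with $T_S^*$ instead of $T_S$, so I would make the role of skew-adjointness explicit. Everything else is the routine existence of an eigenvector in a nonzero invariant subspace over $\C$ and the finite-dimensional Cayley--Hamilton truncation of the power series at $2L$.
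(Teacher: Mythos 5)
Your proof is correct and follows essentially the same route as the paper's: identify $V(T_S,\Theta)$ as the smallest $T_S$-invariant subspace containing $\ran(\Theta)=(\ker \Theta^*)^\perp$, pass to the orthogonal complement, and use the existence of an eigenvector in any nonzero invariant subspace over $\C$. One small correction: $T_S=iR_S$ with $R_S$ real antisymmetric is \emph{self}-adjoint rather than skew-adjoint, since $T_S^*=-iR_S^T=iR_S=T_S$; the property you actually invoke (that $T_S$ and $T_S^*$ share eigenvectors and invariant subspaces) holds all the same, so the argument is unaffected.
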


\textit{Proof of the lemma.}
\begin{subproof}
 Note that $V(T_S, \Theta)$ is the smallest subspace of $\C^{2L}$ which is stable by $T_S$ and contains $\ran \Theta=\left(\ker \Theta^*\right)^\perp$. Hence $V(T_S, \Theta)^\perp$ is the maximal subspace contained in $\ker \Theta^*$ and stable by $T_S$, so it is nonzero if and only if $\ker \Theta^*$ contains an eigenvalue of $T_S$.
\end{subproof}

Now, let us study the convergence of covariance matrices.
\begin{lemma}\label{lem:critMatrix}
The assertion \ref{crit:spectre} is equivalent to the following fact: there is convergence of $\mm(t)$ to a unique stationary covariance matrix $\mm_\infty$ for any initial covariance matrix $\mm(0)$. 
\end{lemma}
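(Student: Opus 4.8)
The plan is to recognise the right-hand side of \eqref{eq:mm} as an affine flow whose linear part is a Lyapunov (Sylvester) operator, and to read off convergence from the spectrum of a single $2L\times 2L$ matrix. Set $G=-iT_S-\tfrac12\Theta\Theta^*$. Since $T_S$ and $\Theta\Theta^*$ are self-adjoint, the linear part of \eqref{eq:mm} can be rewritten as $\call_{\mathrm{cov}}(\mm)=-i[T_S,\mm]-\tfrac12\{\Theta\Theta^*,\mm\}=G\mm+\mm G^*$, so the homogeneous flow is $e^{t\call_{\mathrm{cov}}}\mm=e^{tG}\mm\,e^{tG^*}$ and the full equation reads $\dot\mm=G\mm+\mm G^*+\Theta\mm_B\Theta^*$. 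The structural fact I will exploit is dissipativity: $G+G^*=-\Theta\Theta^*\le 0$, so $(e^{tG})_{t\ge 0}$ is a contraction semigroup and every eigenvalue $\lambda$ of $G$ has $\Re\lambda\le 0$. I will prove the lemma by showing that \ref{crit:spectre} is equivalent to $G$ being Hurwitz (all eigenvalues with $\Re\lambda<0$), and that the latter governs the convergence of $\mm(t)$.

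For sufficiency, suppose \ref{crit:spectre} holds. If $\lambda$ were an eigenvalue of $G$ with $\Re\lambda=0$ and eigenvector $v$, then $0=\Re\lambda\,\|v\|^2=\Re\scal{Gv\,,\,v}=-\tfrac12\|\Theta^* v\|^2$, forcing $v\in\ker\Theta^*$; but then $Gv=-iT_Sv=\lambda v$ gives $T_Sv=i\lambda v$, i.e. $v$ is an eigenvector of $T_S$ lying in $\ker\Theta^*$, contradicting \ref{crit:spectre}. Hence $G$ is Hurwitz, $e^{tG}\to 0$ exponentially, and $e^{t\call_{\mathrm{cov}}}\to 0$ on all matrices. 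In particular $\call_{\mathrm{cov}}$ is invertible, so $\mm_\infty=-\call_{\mathrm{cov}}^{-1}(\Theta\mm_B\Theta^*)$ is the unique fixed point and $\mm(t)-\mm_\infty=e^{t\call_{\mathrm{cov}}}(\mm(0)-\mm_\infty)\to 0$ for every initial $\mm(0)$. Because the set of covariance matrices is closed and invariant under the flow, the limit $\mm_\infty$ is itself a covariance matrix.

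For necessity I argue by contrapositive: assume \ref{crit:spectre} fails, so there is $v\in\ker\Theta^*$ with $T_Sv=\mu v$, $\mu\in\R$. Then $Gv=-i\mu v$ is a purely imaginary eigenvalue, so $G$ is not Hurwitz. To see that $\mm(t)$ cannot converge to a unique limit I pass to the Majorana picture, where $G=R_S-\tfrac12 WW^\top$ is real and a covariance matrix has the form $\tfrac12 I+iR$ with $R$ real antisymmetric, the flow on the antisymmetric part being $R(t)=e^{tG}R(0)e^{tG^\top}$. Writing $v=a+ib$ with $a,b$ real, the relations $Ga=\mu b$ and $Gb=-\mu a$ show that $N=ab^\top-ba^\top$ is a nonzero real antisymmetric matrix annihilated by $R\mapsto GR+RG^\top$. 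Thus $N$ is a non-decaying mode of the homogeneous flow, so covariance initial data differing by a multiple of $N$ stay at fixed distance, precluding a common limit; equivalently $\mm_\infty+sN$ furnishes a whole segment of stationary covariance matrices.

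The main obstacle is the degenerate case $\mu=0$ with a real eigenvector: then $b=0$, $N$ vanishes, and the construction collapses, because the persistent mode $vv^\top$ of $e^{tG}$ is symmetric and therefore lies outside the affine space of covariance matrices. Handling this case requires a finer analysis of the spectrum of $R\mapsto GR+RG^\top$ on the antisymmetric subspace: one must show that a zero eigenvalue of $G$ still produces either a genuine second antisymmetric stationary solution or some other obstruction to convergence, presumably using the conjugation symmetries $\overline{T_S}=-T_S$, $\overline{\Theta}=-\Theta$ and the resulting invariance of $\ker\Theta^*\cap\ker T_S$ under complex conjugation. This is the delicate point; the generic case $\mu\neq 0$ and the sufficiency direction are comparatively routine.
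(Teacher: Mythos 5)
Your argument reproduces the paper's own proof of this lemma almost step for step: the same dissipative matrix $G$ (the paper writes $G=-iT_S-\tfrac12\Theta^*\Theta$, a typo for your correct $-iT_S-\tfrac12\Theta\Theta^*$), the same reduction to the Lyapunov operator $R\mapsto GR+RG^*$ on real antisymmetric matrices, the same identification of purely imaginary eigenvalues of $G$ with eigenvectors of $T_S$ lying in $\ker\Theta^*$, and the same stationary mode $N=ab^{T}-ba^{T}$ for the converse. The sufficiency direction and the necessity direction for $\mu\neq0$ are correct as written; for $\mu\neq0$ the vectors $a,b$ are automatically independent (a real eigenvector of the real matrix $G$ cannot carry the nonreal eigenvalue $-i\mu$), so $N\neq0$, a point worth making explicit.

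The degenerate case you flag is a genuine gap, and it is equally a gap in the paper, which applies $R=\ket{a}\bra{b}-\ket{b}\bra{a}$ without noticing that it vanishes when $\lambda=0$ and $u$ can be chosen real. Moreover the ``finer analysis'' you hope for cannot close it: the lemma as stated fails there. Take $L=K=1$, $T_S=0$ and, in the Majorana basis, $\Theta=i\,\mathrm{diag}(1,0)$, i.e.\ $V=i\gamma_1^S\gamma_1^B$. Then $e_2\in\ker\Theta^*$ is an eigenvector of $T_S$, so Assertion \ref{crit:spectre} fails; yet writing $\mm=\tfrac12 I+i\alpha\left(\begin{smallmatrix}0&1\\-1&0\end{smallmatrix}\right)$, Equation \ref{eq:mm} reduces to $\dot\alpha=-\alpha/2$, and every covariance matrix converges to the unique stationary one. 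What fails instead is uniqueness of the stationary \emph{state}: here $\call^*(\gamma_1)=0$, so $\tr{\rho\,\gamma_1}$ is conserved and the stationary states form a segment --- an obstruction living entirely in the odd part of $\rho$, invisible to $\mm$. The correct dividing line is that convergence to a unique stationary covariance matrix holds if and only if $\ker\Theta^*$ contains no eigenvector of $T_S$ for a nonzero eigenvalue and $\dim\left(\ker T_S\cap\ker\Theta^*\right)\leq 1$; when that intersection has dimension at least $2$ one picks two real vectors $a,b$ in it and $N=ab^{T}-ba^{T}$ again gives a stationary antisymmetric perturbation. So the degenerate case should not be repaired inside the covariance-matrix picture: the implication \ref{as:uniq}$\Rightarrow$\ref{crit:spectre} of the theorem needs a separate argument there (exhibiting a conserved odd observable), and the lemma itself should be restated with the corrected condition.
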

Note that this lemma already proves thath \ref{as:uniq} implies \ref{crit:spectre}, since the convergence and uniqueness for states implies the convergence and uniqueness for covariance matrices. 

\textit{Proof of the lemma.}
\begin{subproof}
 Write $G=-iT_S-1/2 \Theta^* \Theta$. $G$ is real since $\Theta$ and $T_S$ are in $i \R$ (recall that we are in the field representation).  is Let $\rho_1$ and $\rho_2$ be two initial states and let $\mm_1(t)$ and $\mm_2(t)$ be their covariance matrix after a time $t$. Then $\mm_1(t)-\mm_2(t)$ is the form $iR(t)$ where $R(t)$ is a real antisymmetric matrix and by Formula \ref{eq:mm}, 
 \[
 \frac{d}{dt} R(t)=G R(t)+R(t) G^*.
 \]
Hence, we are reduced to the study of the spectrum of $L: R \mapsto GR+R G^*$. 

Let $\sigma(G)\subset\C$ be the spectrum of $G$. Then the spectrum of $L$ is $\sigma(L)=\sigma(G)+\overline{\sigma(G)}$. 

The key point is the following: all elements of $\sigma(G)$ are of negative real part if and only if Assertion \ref{crit:spectre} is satisfied. 
\begin{subproof}
Indeed, if there exists $u \neq 0\in \C^{2L}$ with $\Theta^* u=0$ and $T_S u=\lambda u$, then $Gu=-i\lambda u$ so $i\lambda \in \sigma(G)$; conversely, if there exists $u\geq 0$ with $Gu=(r+i\lambda) u$ where $r,\lambda \in \R$ and $r>0$, then 
\[
\bra{u} G \ket{u}=(r+i\lambda) \norm{u}^2=i \bra{u}T_S \ket{u}-\frac{1}{2} \norm{\Theta^* u}^2
\]
so by identifing the real and imaginary part, $r\norm{u}^2=-\frac{1}{2} \norm{\Theta^* u}^2\leq 0$, so $r=0$, $\Theta^* u=0$ and $T_Su=i\lambda u$. This contradicts Assertion \ref{crit:spectre}.  
\end{subproof}

Now, let us assume that Assertion \ref{crit:spectre} is not satisfied. Then let $u \neq 0$ be such that $Gu=i\lambda u$. Let $a=\re u$, $b=\im u$, then $Ga =-\lambda b$ and $Gb=\lambda a$. Define $R=\ket{a}\bra{b}-\ket{b}\bra{a}$. Then $R$ is an antisymmetric real matrix and $L(R)=0$. Thus, if we take $\rho_1(0)$ to be a nondegenerate quasi-free state, $\mm_1(0)+i\varepsilon R$ is a covariance matrix for $\varepsilon$ small enough; take $\rho_2(0)$ to be the quasi-free state of covariance matrix $\mm_1(0)+i\varepsilon R$. Then $\mm_1(t)-\mm_2(t)=\varepsilon R$ is nonzero and constant, so $\mm_1(t)$ and $\mm2(t)$ cannot converge towards the same limit.
\end{subproof}

If we were only concerned by states that are even operators, we could stop here. Indeed, if there is convergence and uniqueness of the density matrix, there is convergence and uniqueness for any quasi-free state. But the linear span of quasi-free states is the set of even operators, so with Lemma \ref{lem:critMatrix}, we actually proved that Assertion \ref{crit:spectre} implies the convergence to a unique stationary state for any initial condition that is an even operator. 
\newline

Now, let us deal with initial states with a nonzero odd part. We will make use of the following criteria, from \cite{FagnolaRebolledoAlgebraicCondition}

\begin{prop}\label{prop:FR}
Let $\big(\Lambda^t\big)$ be a quantum dynamical semigroup on a finite-dimensional space $\hh$, with generator $\call(\rho)=-i[K, \rho]+\Phi^*(\rho)$, where $K=.-iH-1/2 \Phi(Id)$ and $\Phi^*(\rho)=\sum_{i=1}^n L_i \rho L_i^*$ for some operators $L_i$.  Assume that there exists a faithfull stationary state $\rho_\infty$. Define the fixed algebra $\ff(\Lambda)$ as the commutator of the set $\set{K, L_i, L_i^*, i=1... n}$ and the decoherence-free algebra $\nn(\Lambda)$ as the commutator of $\set{\left(\delta_H\right)^l(L_i), \left(\delta_H\right)^l(L_i^*), l=0, ..., \infty,\, i=1,...,n}$, where $\delta_H(A)=[H, A]$. Then there is convergence to a stationary state for any initial state if and only if $\nn(\Lambda)=\ff(\Lambda)$. The stationary state is moreover unique if and only if $\nn(\Lambda)=Id\, \C$. 
\end{prop}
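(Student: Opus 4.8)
The plan is to derive this criterion from the general ergodic theory of a quantum dynamical semigroup admitting a faithful invariant state, organised around a Jacobs--de Leeuw--Glicksberg splitting of $\bb(\hh)$ into a reversible part and a part that decays to zero. Throughout, the faithful invariant state $\rho_\infty$ is the crucial hypothesis: it endows $\bb(\hh)$ with the inner product $\scal{X, Y}_\infty=\tr{\rho_\infty X^* Y}$, and with respect to a suitable KMS-symmetrised version of this product the Heisenberg dual semigroup, which I write $\Lambda^{t*}$, becomes a contraction. Consequently every eigenvalue $z$ of the Heisenberg generator $\call^*(X)=i[H,X]+\sum_i\big(L_i^* X L_i-\tfrac12\{L_i^* L_i,X\}\big)$ satisfies $\Re z\leq 0$, and the eigenvalues lying on the imaginary axis (the peripheral spectrum) are semisimple. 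Since $\call$ and $\call^*$ have the same spectrum, convergence of $\rho(t)=\Lambda^t\rho(0)$ for every initial state is equivalent to the peripheral spectrum of $\call^*$ reducing to $\set{0}$, and the entire argument consists in translating this spectral condition into the algebraic identity $\nn(\Lambda)=\ff(\Lambda)$.

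First I would pin down the fixed-point algebra and the uniqueness statement. The fixed-point set $\ff(\Lambda)$ --- the operators $X$ with $\Lambda^{t*}(X)=X$ for all $t$, equivalently $\ker\call^*$ --- coincides with the commutant $\set{K,L_i,L_i^*}'$ and is a $*$-subalgebra. One inclusion is trivial; for the other I would use the Schwarz defect identity $\call^*(X^* X)-X^*\call^*(X)-\call^*(X^*)X=\sum_i[L_i,X]^*[L_i,X]\geq 0$. Testing it against $\rho_\infty$ for a fixed point $X$, and using $\tr{\rho_\infty\call^*(Y)}=\tr{\call(\rho_\infty)Y}=0$, forces $\sum_i\tr{\rho_\infty[L_i,X]^*[L_i,X]}=0$; faithfulness of $\rho_\infty$ then gives $[L_i,X]=0$ for all $i$, whence $[H,X]=0$ from $\call^*(X)=0$. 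In finite dimension $\dim\ker\call=\dim\ker\call^*=\dim\ff(\Lambda)$, and the invariant states are parametrised by the states of $\ff(\Lambda)$; since $\rho_\infty$ is faithful it lies in the interior of this simplex, so the stationary state is unique if and only if $\dim\ff(\Lambda)=1$, i.e. $\ff(\Lambda)=\C\,Id$.

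Next I would identify $\nn(\Lambda)$ with the linear span of the peripheral eigenvectors and show that $\Lambda^{t*}$ acts on it as a group of $*$-automorphisms. The key characterisation is that an operator $X$ lies in the multiplicative domain of $\Lambda^{t*}$ for every $t$ --- meaning $\Lambda^{t*}(X^* X)=\Lambda^{t*}(X)^*\Lambda^{t*}(X)$ and the analogue for $XX^*$ hold for all $t$ --- if and only if $X$ commutes with all the iterated commutators $(\delta_H)^l(L_i)$ and $(\delta_H)^l(L_i^*)$, so that $\nn(\Lambda)=\set{(\delta_H)^l(L_i),(\delta_H)^l(L_i^*)}'$. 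The ``only if'' direction comes from differentiating the Schwarz defect identity at $t=0$, which yields $[L_i,X]=0$, then applying the same computation to $\Lambda^{t*}(X)$ and differentiating repeatedly in $t$, producing successively $[(\delta_H)^l(L_i),X]=0$; the ``if'' direction follows by checking that on such $X$ the generator reduces to the derivation $X\mapsto i[H,X]$, so that $\Lambda^{t*}$ restricts there to the automorphism group $X\mapsto e^{itH}Xe^{-itH}$. A mean-ergodic argument, using the semisimplicity of the peripheral spectrum established in the first paragraph, then shows that this subalgebra is precisely the span of the peripheral eigenvectors.

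Finally I would assemble the equivalences. By the splitting, $\Lambda^{t*}(X)$ converges for every $X$ if and only if the peripheral spectrum is $\set{0}$, i.e. the automorphism group $X\mapsto e^{itH}Xe^{-itH}$ is trivial on $\nn(\Lambda)$, which means $\nn(\Lambda)\subseteq\ker\call^*=\ff(\Lambda)$; since $\ff(\Lambda)\subseteq\nn(\Lambda)$ always (fixed points are peripheral eigenvectors for the eigenvalue $0$), this is exactly the convergence criterion $\nn(\Lambda)=\ff(\Lambda)$. Combining with the first step, convergence together with uniqueness of the stationary state holds if and only if $\ff(\Lambda)=\C\,Id$, equivalently $\nn(\Lambda)=\C\,Id$, which is the ``return to equilibrium'' form of the statement. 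The main obstacle is the middle step: proving that the peripheral eigenspace is exactly the commutant of the iterated commutators $(\delta_H)^l(L_i)$, and in particular that $\Lambda^{t*}$ is a genuine $*$-automorphism group there rather than merely a contractive one-parameter group. This rigidity is precisely where faithfulness of $\rho_\infty$ is indispensable, as it is what upgrades the contraction property into stability of peripheral eigenvectors under multiplication and adjunction.
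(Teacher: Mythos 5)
The paper does not actually prove this proposition: it is imported as a black box from Fagnola and Rebolledo \cite{FagnolaRebolledoAlgebraicCondition} (building on Frigerio and Verri) and is only \emph{used} in the proof of Theorem \ref{theo:uniqueness}, so there is no internal proof to compare you against. Measured against the literature argument, your proposal is essentially a correct reconstruction of it: contraction of the Heisenberg semigroup with respect to the inner product $\scal{X,Y}=\tr{\rho_\infty X^* Y}$, semisimplicity of the peripheral spectrum, identification of $\ff(\Lambda)$ with $\ker \call^*$ via the dissipation identity and faithfulness, identification of $\nn(\Lambda)$ with the common multiplicative domain of the $\Lambda^{t*}$ and hence with the span of the peripheral eigenvectors, on which the semigroup acts as $X\mapsto e^{itH}Xe^{-itH}$; the finite-dimensional spectral dichotomy then yields both equivalences. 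One simplification: the KMS symmetrisation is unnecessary for what you use --- the Schwarz inequality $\Lambda^{t*}(X^*X)\geq \Lambda^{t*}(X)^*\Lambda^{t*}(X)$ together with invariance of $\rho_\infty$ already makes $\Lambda^{t*}$ a contraction in the GNS inner product, which gives $\Re z\leq 0$ and semisimplicity of peripheral eigenvalues directly.

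Three steps deserve tightening. First, in the fixed-point argument, $[L_i,X]=0$ alone reduces the dissipative part of $\call^*(X)$ to $\frac{1}{2}\sum_i [L_i^* L_i, X]$, not to zero; you must also apply the defect identity to the fixed point $X^*$ to obtain $[L_i^*,X]=0$ before concluding $[H,X]=0$. Second, to run your mean-ergodic step you need $\nn(\Lambda)$ to be invariant under $X\mapsto e^{itH}Xe^{-itH}$; this holds because $e^{-itH}\,\delta_H^l(L_i)\,e^{itH}$ lies in the linear span of the $\delta_H^m(L_i)$, after which $i\,\delta_H$ restricted to $\nn(\Lambda)$ is diagonalisable with purely imaginary eigenvalues, so $\nn(\Lambda)$ is indeed spanned by peripheral eigenvectors, and conversely faithfulness puts every peripheral eigenvector in the multiplicative domain (your ``rigidity'' remark is exactly right). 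Third, the last sentence of the proposition must be read, as you do, as ``convergence \emph{together with} uniqueness holds if and only if $\nn(\Lambda)=\C\, Id$'': uniqueness alone is equivalent to $\ff(\Lambda)=\C\, Id$, and since $\ff(\Lambda)\subseteq \nn(\Lambda)$ always (by the Jacobi identity), $\nn(\Lambda)=\C\, Id$ forces both. With these points made explicit, your proof is complete and matches the intended one.
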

Note that there is not uniqueness of the family of operators $(L_i)_{i=1}^n$ such that $\Phi^*(\rho)=\sum_i L_i \rho L_i^*$, but it can be shown that $\nn(\Lambda)$ and $\ff(\Lambda)$ does not depends on the choice of the $L_i$'s.

Let us put aside the requirement that there exists a faithful stationary state and study the algebra $\nn(\Lambda)$.

\begin{lemma}
If Assertion \ref{crit:spectre} is satisfied, then $\nn(\Lambda)=Id\, \C$.
\end{lemma}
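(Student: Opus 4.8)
The plan is to show that under Assertion \ref{crit:spectre}, the decoherence-free algebra $\nn(\Lambda)$ is trivial, by identifying which field operators lie in the commutant generated by the $L_i$ and their iterated commutators with $H_S$. First I would make explicit the jump operators: by Proposition \ref{prop:evolutionRho}, $\Phi^*(\rho)=\sum_{i,j}(\Theta M_B \Theta^*)_{i,j}\gamma_j \rho \gamma_i$ (working in the even case / with the $E_{BS}$ isomorphism, or absorbing the parity operators otherwise), and since $\Theta M_B \Theta^*$ is a positive operator on $\caly$ it can be diagonalized by a Bogoliubov transform. This writes $\Phi^*(\rho)=\sum_k L_k \rho L_k^*$ with each $L_k=\varphi(x_k)$ a field operator, where the $x_k$ are (up to scalars) an orthogonal family spanning $\ran(\Theta M_B \Theta^*)=\ran(\Theta)$ — here I would use that $M_B$ is the covariance matrix of a genuine state, hence invertible (or at least that its range restricted to $\ran \Theta^*$ is full), so that the $x_k$ span exactly $\ran \Theta$.

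The key computation is the effect of the iterated commutators $\delta_{H_S}$ on these field operators. Using the commutation relation stated just before Proposition \ref{prop:reduction}, namely $[F^* T F,\varphi(x)]=\varphi(Tx)$ when $\xi T \xi =-T$ (which holds for $T=T_S$), I get $\delta_{H_S}^l(\varphi(x_k))=\varphi\big(T_S^l\, x_k\big)$. Therefore the set of operators $\{(\delta_{H_S})^l(L_k),(\delta_{H_S})^l(L_k^*)\}$ is exactly the set of field operators $\varphi(y)$ with $y$ ranging over $\bigcup_{l\geq 0} T_S^l(\ran \Theta)$, together with their adjoints $\varphi(\xi y)$. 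The linear span of all such $y$ is precisely the Kalman space $V(T_S,\Theta)=\C^{2L}$ by Assertion \ref{crit:kalman} (equivalent to \ref{crit:spectre} by the first lemma). Hence the generating set for $\nn(\Lambda)$ contains $\varphi(y)$ for every $y\in\caly$, i.e. all field operators.

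Then I would finish by a standard irreducibility argument: $\nn(\Lambda)$ is the commutant of a set of operators that includes every field operator $\varphi(y)$, $y\in\caly$. Since the field operators generate the whole algebra $\bb(\hh)$ as a $*$-algebra (every $c_i,c_i^*$, hence every monomial, is a polynomial in the $\gamma$'s), an operator commuting with all of them must be a scalar multiple of the identity. Therefore $\nn(\Lambda)=\C\,\mathrm{Id}$, as claimed.

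The main obstacle I expect is the bookkeeping around the $L_i$: I must verify that the family $\{\varphi(x_k)\}$ obtained from diagonalizing $\Theta M_B \Theta^*$ really spans $\ran\Theta$ and not something smaller — this requires controlling $\ran(\Theta M_B \Theta^*)$ versus $\ran\Theta$, which is where the nondegeneracy/invertibility of $M_B$ on the relevant subspace enters. A secondary subtlety is the appearance of the parity operators $(-1)^{N_S}$ in the $E_{SB}$ case of Proposition \ref{prop:evolutionRho}; I would note (as the paper does) that they cancel when computing $\Phi^*(\rho)=\sum L_k\rho L_k^*$ up to conjugating each $L_k$ by $(-1)^{N_S}$, which does not affect the commutant structure since $(-1)^{N_S}$ is itself even and the generated algebra is unchanged. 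The spectral/Kalman equivalence is already handled by the preceding lemma, so that part is free.
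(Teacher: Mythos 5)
Your overall route is the same as the paper's: diagonalize $\Theta M_B\Theta^*$ to write $\Phi^*(\rho)=\sum_k L_k\rho L_k^*$ with $L_k$ a field operator, propagate by $\delta_{H_S}(\varphi(x))=\mathrm{const}\cdot\varphi(T_Sx)$ to get all of $\varphi\big(V(T_S,\Theta)\big)$ into the generating algebra, and conclude by irreducibility of the field operators. But the step you yourself flag as the main obstacle is where the argument genuinely breaks: the claim that $\ran(\Theta M_B\Theta^*)=\ran\Theta$ because ``$M_B$ is the covariance matrix of a genuine state, hence invertible.'' Covariance matrices of states are \emph{not} invertible in general: their eigenvalues lie in $[0,1]$ and the endpoints are attained exactly when the state is not faithful (the vacuum, or any zero-temperature bath, is the standard example, and such baths are used in the paper). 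For a singular $M_B$ whose kernel meets $\ran\Theta^*$, one has $\ran(\Theta M_B\Theta^*)\subsetneq\ran\Theta$, the eigenvectors $x_k$ with positive eigenvalue span too little, and your Kalman space built from them need not be all of $\C^{2L}$ — so the conclusion does not follow. Your fallback hypothesis (``its range restricted to $\ran\Theta^*$ is full'') is likewise not guaranteed.

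The repair is already half-present in your write-up: you note that the adjoints $L_k^*=\varphi(\overline{x_k})$ are also in the generating set, but you never use them. The paper's argument is precisely that for every $v\in\ran(\Theta M_B\Theta^*)$ both $\varphi(v)$ and $\varphi(\overline{v})$ lie in $\cala(\Lambda)$, hence so does $\varphi(\Re v)$, and in the Majorana basis $\Re M_B=\tfrac12 Id$ (since $M_B=\tfrac12 Id+iR_B$ with $R_B$ real antisymmetric) together with $\Theta=iW$, $W$ real, gives
\[
\Re\left(\Theta M_B\Theta^*\right)=\tfrac12\,\Theta\Theta^*,
\]
whose range is exactly $\ran\Theta$ with no hypothesis on $M_B$. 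With that substitution your proof closes; without it, it only covers faithful bath states. The remaining points (the commutator computation, the handling of $(-1)^{N_S}$ for the $E_{SB}$ isomorphism, the final irreducibility step) match the paper and are fine.
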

\textit{Proof of the lemma.}
\begin{subproof}
 We will write $\Phi^*$ in the form $\sum L_i \rho L_i^*$ with conveniently chosen $L_i$. Write $\cala(\Lambda)=\nn(\Lambda)'$ the unital algebra generated by the $\left(\delta_H\right)^l(L_i), \left(\delta_H\right)^l(L_i^*)$ (which is independent of the choice of the $L_i$). 
 
Assume that we choose the $E_{BS}$ isomorphism. We will first show that $\cala(\Lambda)$ contains $\varphi(u)=\sum_i u_i \gamma_i$ for any $u \in \ran \Theta$. 
\begin{subproof}

 Recall that by Proposition \ref{prop:evolutionRho} we have $ \Phi^*(\rho)=\sum_{1\leq i, j \leq 2L}  \left(\Theta M_B \Theta^*\right)_{i, j} \gamma_j \rho\, \gamma_i$.  Now, since we are in the Majorana basis, $M_B=\frac{1}{2}\left(i R_B-Id\right)$ where $R_B$ is a real antisymmetric matrix with $-Id \leq iR_B \leq Id$. 
 
The matrix $\Theta M_B \Theta^*$ is semi-definite positive; let $u_1, ..., u_r$ be its eigenvectors with positive eigenvalues $\lambda_1, ..., \lambda_r$. Then 
\[
\Phi^*(\rho)=\sum_{i=1}^r \lambda_i\, \varphi(u_i)\, \rho\, \varphi(\overline{u_i})\,.
\]
Thus, we can write $\Phi^*(\rho)=\sum_i L_i \rho L_i^*$ with $L_i=\sqrt{\lambda_i}\varphi(u_i)$. This implies that $\varphi(u_i)$ and $\varphi(\overline{u_i})$ are in $\cala(\Lambda)$. Thus, for any vector $v$ in the image of $\Theta M_B \Theta^*$, the operators $\varphi(v)$ and $\varphi(\overline{v})$ are in $\cala(\Lambda)$. Thus, $\varphi(\Re(v)) \in \cala(\Lambda)$, but since $\Re(\Theta M_B \Theta^*)=\frac{1}{2} \Theta \Theta^*$ this implies that $\varphi(u) \in \cala(\Lambda)$ for any $u \in \ran \Theta$.
\end{subproof}

 Now, for any $u \in \C^2L$, 
 \[
 \delta_{H_S}(\varphi(u))=[H_S, \varphi(u)]=2\,\varphi(T_Su).
 \]
Thus, for any $v \in V(T_S, \Theta)$, the operator $\Phi(v)$ is in $\cala(\Lambda)$. Since Assertion \ref{crit:kalman} is satisfied, it means that all the $\gamma_i$'s are in $\cala(\Lambda)$ and since $\cala(\Lambda)$ is a unital algebra, this implies that $\cala(\gamma)=\bb(\hh)$, hence $\nn(\Lambda)=Id\,\C$.

Now, if we choose the $E_{SB}$ isomorphism instead, by a similar proof we show that $\gamma_i (-1)^{N_S} \in \cala(\gamma)$ for all $i=1, ..., 2L$. But the $\gamma_i (-1)^N$'s also generate $\bb(\hh)$ as an algebra, so $\nn(\Lambda)=Id\,\C$ as well. 
\end{subproof}

This lemma proves that, provided there is a faithful stationary state, Assertion \ref{crit:spectre} implies Assertion \ref{as:uniq}. There is a last technicality to prove it when there is no faithful stationary state. 

 Up to a Bogoliubov transform, we can decompose $\hh$ as $\Gamma(\hh_A^0)\otimes \Gamma(\hh_C^0)$ as in Proposition \ref{prop:support}, so that the quasi-free stationary state $\rho_\infty$ is of support $\set{\ket{\Omega_A}\bra{\Omega_A}}\otimes \Gamma(\hh_C^0)$ and the restricted semigroup $\Lambda_C^t$ is a semigroup on the fermionic space $\Gamma(\hh_C^0)$ of the same form as $\Lambda$ with some matrices $T_S^C$ and $\Theta^C$.

Let us prove that there is convergence and uniqueness for $\Lambda_C$.
\begin{subproof}
  We already proved that Assertion \ref{crit:spectre} for $\Lambda$ implies that for any quasi-free state $\rho$ on $\hh$, $\Lambda^t(\rho)=\Lambda^t(\rho)$ converges to $\rho_\infty$. This implies the same for $\Lambda_C$, since it is only the restriction of $\Lambda$, which in turn implies that Assertion \ref{crit:spectre} is satisfied for $T_S^C$ and $\Theta^C$ by Lemma \ref{lem:critMatrix}. Since $\Lambda_C$ has a stationary state of full support, this implies convergence and uniqueness for any initial state for $\Lambda_C$. 
\end{subproof}
Thus, there is convergence to $\rho_\infty$ for any initial condition with support in $\set{\ket{\Omega_A}\bra{\Omega_A}}\otimes \Gamma(\hh_C^0)$ .
 
  Now, for any initial state $\rho$, with support possibly larger than $\set{\ket{\Omega_A}\bra{\Omega_A}}\otimes \Gamma(\hh_C^0)$, the covariance matrix of $\rho(t)$ converges to $\mm_\infty$, thus for any $\varepsilon >0$, for any $t$ large enough, $\tr{\rho(t) c^A_i \left(c^A_i\right)^*}=\mm_{i, i}(t)$ is at distance at most $\varepsilon/a$ of $1$. This implies that 
\[
\tr{\bra{\Omega_A} \rho(t) \ket{\Omega_A}}\geq 1-\varepsilon
\]
and so, if $\varepsilon <1/2$, 
\[
\norm{\rho(t)- \ket{\Omega_A}\bra{\Omega_A}\otimes\rho_C(t)  }_1 \leq 3\varepsilon
\]
where $\norm{\bullet}_1$ is the trace norm and
\[
\rho_C(t)=\frac{\bra{\Omega_A} \rho(t) \ket{\Omega_A}}{\tr{\bra{\Omega_A} \rho(t) \ket{\Omega_A}}}\,.
\]

Now for any $s$ sufficiently large, $\Lambda_s(\ket{\Omega_A}\bra{\Omega_A}\otimes\rho_B(t) )$ is at distance at most $\varepsilon$ of $\rho_\infty$, and since  $\Lambda_s$ is a contraction, this implies that $\norm{\rho(t+s)-\rho_\infty}\leq 4\varepsilon$. This ends the proof of the theorem. 
\end{proof}

In the case where stationary states are not unique, convergence can still be achieved under some assumptions:

\begin{theo}\label{theo:convergence}
The following propositon are equivalent:
\begin{enumerate}[label=({\arabic*}')]
\item For every initial state $\rho_S(0)$, $\rho_S(t)$ converges towards a stationary state $\Phi(\rho_S(0))$ (not necessarily unique, nor quasi-free).\label{as:conv}
\item $\ker(\Theta^*)$ contains at most one eigenspace of $T_S$. \label{crit:convSpectre}
\item The orthogonal of $V(T_S, \Theta)=Vect\left( \cup_{k=0}^{2L}{Ran }(\,(T_S)^k\, \Theta\,\right)$ is an eigenspace of $T_S$.\label{crit:convKalman}
\end{enumerate}
\end{theo}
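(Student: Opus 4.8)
The plan is to run the argument of Theorem~\ref{theo:uniqueness} again, weakening ``unique stationary state'' to ``convergence to some stationary state'' at each step. I would first dispose of the linear-algebraic equivalence \ref{crit:convSpectre}$\Leftrightarrow$\ref{crit:convKalman}, exactly as in the first lemma of that proof. Write $\Sigma$ for the set of eigenvalues $\nu$ of $T_S$ that admit an eigenvector inside $\ker\Theta^*$. Since $T_S=iR_S$ is Hermitian, hence normal, the maximal $T_S$-invariant subspace of $\ker\Theta^*$, namely $V(T_S,\Theta)^\perp$, splits as $\bigoplus_\nu\big(E_\nu\cap\ker\Theta^*\big)$ over the eigenspaces $E_\nu$ of $T_S$; condition \ref{crit:convKalman} (that $V(T_S,\Theta)^\perp$ lie in a single eigenspace) is then precisely $|\Sigma|\le1$, which is \ref{crit:convSpectre}. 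I would record here a consequence used below: both $\ran\Theta$ and $\ker\Theta^*$ are $\xi$-invariant (they are built from the real matrix $W$), and $\xi$ maps $E_\nu$ to $E_{-\nu}$; since $\xi$ preserves $V(T_S,\Theta)^\perp$, condition \ref{crit:convSpectre} forces the surviving eigenvalue to be $\nu=0$, i.e. $V(T_S,\Theta)^\perp\subseteq\ker T_S$.

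Next I refine Lemma~\ref{lem:critMatrix} at the level of covariance matrices. With $G=-iT_S-\tfrac12\Theta\Theta^*$ and $L\colon R\mapsto GR+RG^*$, the difference of two solutions of the master equation~\eqref{eq:mm} satisfies $\frac{d}{dt}R=L(R)$, and $\sigma(L)=\sigma(G)+\overline{\sigma(G)}$. The computation in Lemma~\ref{lem:critMatrix} identifies the purely imaginary eigenvalues of $G$ as $\{-i\nu:\nu\in\Sigma\}$, so the purely imaginary spectrum of $L$ is $\{\,i(\nu_2-\nu_1):\nu_1,\nu_2\in\Sigma\,\}$, which reduces to $\{0\}$ iff $|\Sigma|\le1$. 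Moreover $G+G^*=-\Theta\Theta^*\le0$, so $e^{tG}$ is a contraction and $e^{tL}$ is bounded; hence its eigenvalues on the imaginary axis, in particular $0$, are automatically semisimple. Therefore $\mm(t)$ converges for every initial covariance matrix iff \ref{crit:convSpectre} holds. This gives \ref{as:conv}$\Rightarrow$\ref{crit:convSpectre} at once: if $|\Sigma|\ge2$, two eigenvectors of $T_S$ in $\ker\Theta^*$ with distinct eigenvalues produce (as in Lemma~\ref{lem:critMatrix}, but now with a \emph{nonzero} purely imaginary eigenvalue of $L$) a genuinely oscillating, non-convergent covariance matrix from a small quasi-free perturbation, so \ref{as:conv} fails. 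It also settles \ref{crit:convSpectre}$\Rightarrow$\ref{as:conv} for every \emph{even} initial state, since these are spanned by quasi-free states.

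For arbitrary initial states I would use the Fagnola--Rebolledo criterion (Proposition~\ref{prop:FR}), where the target is now $\nn(\Lambda)=\ff(\Lambda)$ rather than $\nn(\Lambda)=\C\,Id$. As in Theorem~\ref{theo:uniqueness}, $\cala(\Lambda)=\nn(\Lambda)'$ is the algebra $\mathrm{CAR}\big(V(T_S,\Theta)\big)$ generated by the fields $\varphi(v)$, $v\in V(T_S,\Theta)$, using $[H_S,\varphi(u)]=2\varphi(T_Su)$ and $\Re(\Theta M_B\Theta^*)=\tfrac12\Theta\Theta^*$. The inclusion $\ff(\Lambda)\subseteq\nn(\Lambda)$ is automatic (an $x$ commuting with $K$ and the $L_i,L_i^*$ commutes with $H_S$, hence with every $(\delta_{H_S})^l(L_i)$ by the Jacobi identity). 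For the reverse inclusion I need $K\in\cala(\Lambda)$; since $\Phi(Id)=\sum_iL_i^*L_i\in\cala(\Lambda)$ already, this reduces to $H_S\in\mathrm{CAR}\big(V(T_S,\Theta)\big)$. Here \ref{crit:convSpectre} is used decisively: $T_S$ being normal reduces along $V(T_S,\Theta)\oplus V(T_S,\Theta)^\perp$, and by the first paragraph $V(T_S,\Theta)^\perp\subseteq\ker T_S$, so the part of $H_S$ carried by $V(T_S,\Theta)^\perp$ is the quadratic form of $T_S|_{V(T_S,\Theta)^\perp}=0$ and vanishes, leaving $H_S\in\mathrm{CAR}\big(V(T_S,\Theta)\big)$. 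Hence $\nn(\Lambda)=\ff(\Lambda)$.

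Finally, Proposition~\ref{prop:FR} needs a faithful stationary state, which may fail to exist, and I would remove this as at the end of Theorem~\ref{theo:uniqueness}. Take the quasi-free stationary state $\sigma$ of maximal support and apply Proposition~\ref{prop:support} to split $\hh=\Gamma(\hh_A^0)\otimes\Gamma(\hh_C^0)$, where the restricted semigroup $\Lambda_C$ is again of the same quadratic form and admits a faithful stationary state. Covariance convergence for $\Lambda$ restricts to covariance convergence for $\Lambda_C$, so \ref{crit:convSpectre} holds for $(T_S^C,\Theta^C)$; the algebraic step applied to $\Lambda_C$ together with Proposition~\ref{prop:FR} then yields convergence for every initial state supported in the support of $\sigma$. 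For a general $\rho$, covariance convergence forces $\tr{\bra{\Omega_A}\rho(t)\ket{\Omega_A}}\to1$, so $\rho(t)$ approaches $\ket{\Omega_A}\bra{\Omega_A}\otimes\rho_C(t)$ in trace norm, and a contraction argument (the last lines of Theorem~\ref{theo:uniqueness}) upgrades this to convergence of $\rho(t)$ itself. The main obstacle is the algebraic step of the third paragraph, and within it the identity $V(T_S,\Theta)^\perp\subseteq\ker T_S$: this is exactly where the weaker hypothesis \ref{crit:convSpectre} is just strong enough to force $H_S\in\cala(\Lambda)$, and it rests on the $\xi$-symmetry particular to these models; the reduction to a faithful stationary state then adds only the bookkeeping already met in Theorem~\ref{theo:uniqueness}.
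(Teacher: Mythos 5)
Your proof is correct and follows the route the paper itself prescribes (the paper's entire proof of Theorem~\ref{theo:convergence} is the remark that it ``can be proved exactly as Theorem~\ref{theo:uniqueness}''). You have in fact supplied the two adaptations that this one-line remark glosses over — the semisimplicity of the imaginary eigenvalues of $L$ via the contraction bound $G+G^*=-\Theta\Theta^*\leq 0$, and the $\xi$-symmetry argument showing $V(T_S,\Theta)^\perp\subseteq\ker T_S$ so that $H_S\in\cala(\Lambda)$ and hence $\nn(\Lambda)=\ff(\Lambda)$ — and both are sound.
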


This theorem can be proved exactly as Theorem \ref{theo:uniqueness}.

\subsection{The case of systems that are real in the creation/annihilation basis}\label{subsec:realcase}

The criterion can be simplified in the case where $T_S$ and $\Theta$ are real in the creation/annihilation basis. Indeed, these operators are real in the creation/annihilation basis if and only if they are of the following form in the Majorana basis:
\begin{align*}
T_S&=\begin{pmatrix}
0 & iC_T \\
-iC_T^* & 0
\end{pmatrix} &
\Theta&=\begin{pmatrix}
0 & i C_\Theta \\
-iC_\Theta^* & 0
\end{pmatrix}
\end{align*}
where $C_T$ is a real $L\times L$ matrix and $C_\Theta$ is a real $L \times K$ matrix. Thus,
\begin{align*}
T_S^{2k} \Theta&=i\begin{pmatrix}  0& \left(C_TC_T^*\right)^k C_\Theta \\ -\left(C_T^*C_T\right)^k C_\Theta \end{pmatrix} \\ 
T_S^{2k+1}\Theta&=\begin{pmatrix} \left(C_TC_T^*\right)^{k}C_T C_\Theta &0 \\0& \left(C_T^*C_T\right)^{k}C_T^*C_\Theta \end{pmatrix}\,.
\end{align*} 
Hence, Kalman's criterion is satisfied if and only if  
\begin{align*}
Vect\left(\cup_{k=0}^L \text{Ran }(\left(C_TC_T^*\right)^k C_\Theta\right)+Vect\left(\cup_{k=0}^L \left(C_TC_T^*\right)^{k}C_T C_\Theta\right) &=\C^L\\
\intertext{and}
Vect\left(\cup_{k=0}^L \text{Ran }(\left(C_T^*C_T\right)^k C_\Theta\right)+Vect\left(\cup_{k=0}^L \left(C_T^*C_T\right)^{k}C_T^* C_\Theta\right) &=\C^L\,.
\end{align*}

\subsection{The case of gauge-invariant systems}
The results presented above can be simplified in the case of gauge-invariant systems, where every Hamiltonian considered are second quantizations of operators and $\omega$ is itself gauge-invariant. In this section, we use the creation/annihilation basis because it is the most adapted to gauge-invariant systems.

We say that the model is gauge-invariant if $[N_B, \omega]=0$, $[N_S+N_B, V]=0$ and $[N_S, H_S]=0$. It means that we can write $H_S=C_S^* T_{S,0} C_S$, where $T_{S,0}$ is a $L\times L$ hermitian matrix and $V=C_S^* \Theta_0 C_B+C_B^* \Theta_0^* C_S$. Then, with the notation used in the previous sections, we have (in the creation/annihilation basis):
\begin{align*}
T_S&=\frac{1}{2} \begin{pmatrix} T_{S,0} & 0 \\ 0 & -\overline{T_{S,0}} \end{pmatrix} \\
\Theta&=\begin{pmatrix} \Theta_0 & 0 \\ 0 & -\overline{\Theta_0} \end{pmatrix} \\
M_{B}&=\begin{pmatrix} M_{B,0} & 0 \\ 0 & Id-\overline{M_{B,0}} \end{pmatrix}
\end{align*}
Hence, if we write the full covariance matrix of the system as 
\[
\mm(t)=\begin{pmatrix} \mm_0(t) & A_0(t) \\ -\overline{A_0(t)} & Id-\overline{\mm_0(t)} \end{pmatrix}
\]
then Equation \ref{eq:mm} can be decomposed into the following system: 
\begin{align}
\frac{d}{dt} \mm_0(t)&=-i[T_{S,0}, \mm_0(t)]-\frac{1}{2} \set{\Theta_0 \Theta_0^*, \mm_0(t)}+\Theta_0 M_{B, 0} \Theta_0^*\label{eq:mm0} \\
\frac{d}{dt}A_0(t) &=-i\left(T_{S,0} A_0(t)+A_0(t) \overline{T_{S, 0}}\right)+\frac{1}{2}\left(\Theta_0\Theta_0^* A_0(t)-A_0(t) \overline{\Theta_0\Theta_0^*}\right)\,.
\end{align}
If $\rho(0)$ is gauge-invariant, then $\rho(t)$ is gauge-invariant for all $t$. This implies that there exists a quasi-free gauge-invariant stationary state. Moreover, to check the uniqueness and convergence, it is sufficient to study the Kalman condition for $T_{S,0}$ and $\Theta_0$ only: 

\begin{theo}\label{theo:GIuniqueness}
For gauge-invariant models, the following assertion are equivalent: 
\begin{enumerate}[label=(\arabic*'')]
\item There exist a unique stationary state $\rho_\infty$; it is quasi-free, gauge invariant and $\rho(t) \ap \rho_\infty$ for any initial condition $\rho(0)$. \label{as:GIuniq}
\item $\ker \Theta_0^*$ contains no eigenvectors of $T_{S,0}$. \label{crit:GIspectre}
\item The space $V(T_{S,0}, \Theta_0)=\bigoplus \text{Ran }(\,(T_{S,0})^k\, \Theta_0\,)$ is the whole $\C^{L}$.\label{crit:GIkalman}
\end{enumerate}
\end{theo}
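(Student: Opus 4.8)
The plan is to reduce the gauge-invariant criterion to the general criterion of Theorem \ref{theo:uniqueness} by exploiting the block-diagonal form of $T_S$ and $\Theta$ in the creation/annihilation basis displayed just above the statement. As a preliminary, I would dispose of the equivalence \ref{crit:GIspectre} $\Leftrightarrow$ \ref{crit:GIkalman}, which is purely linear-algebraic and follows verbatim from the first lemma in the proof of Theorem \ref{theo:uniqueness}: since $T_{S,0}$ is Hermitian, $V(T_{S,0},\Theta_0)$ is the smallest $T_{S,0}$-stable subspace containing $\ran \Theta_0=(\ker \Theta_0^*)^\perp$, so its orthogonal complement is the largest $T_{S,0}$-stable subspace contained in $\ker \Theta_0^*$, and this complement is nonzero exactly when $\ker \Theta_0^*$ contains an eigenvector of $T_{S,0}$.

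The heart of the argument is the equivalence \ref{crit:GIkalman} $\Leftrightarrow$ \ref{crit:kalman}, i.e.\ that the reduced Kalman condition for the $L\times L$ pair $(T_{S,0},\Theta_0)$ matches the full Kalman condition of Theorem \ref{theo:uniqueness} for the $2L\times 2L$ pair $(T_S,\Theta)$. Because $T_S=\frac12\begin{pmatrix} T_{S,0} & 0 \\ 0 & -\overline{T_{S,0}}\end{pmatrix}$ and $\Theta=\begin{pmatrix}\Theta_0 & 0\\0 & -\overline{\Theta_0}\end{pmatrix}$ are block-diagonal for the splitting $\C^{2L}=\C^L\oplus\C^L$, every power $T_S^k\Theta$ is again block-diagonal, and up to nonzero scalars its two blocks are $T_{S,0}^k\Theta_0$ and $\overline{T_{S,0}}^{\,k}\,\overline{\Theta_0}$. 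Since the ranges of the two blocks lie in complementary coordinate subspaces, taking spans over $k$ yields $V(T_S,\Theta)=V(T_{S,0},\Theta_0)\oplus V(\overline{T_{S,0}},\overline{\Theta_0})$. Moreover $V(\overline{T_{S,0}},\overline{\Theta_0})=\overline{V(T_{S,0},\Theta_0)}$, because complex conjugation commutes with the formation of powers, products and spans; as conjugation preserves dimension, this conjugate space equals $\C^L$ precisely when $V(T_{S,0},\Theta_0)$ does. Hence $V(T_S,\Theta)=\C^{2L}$ if and only if $V(T_{S,0},\Theta_0)=\C^L$, which is the desired equivalence.

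To close the loop with the dynamical statement, I would invoke Theorem \ref{theo:uniqueness}, which gives \ref{crit:kalman} $\Leftrightarrow$ \ref{as:uniq}; it then remains only to pass between \ref{as:uniq} and \ref{as:GIuniq}. The implication \ref{as:GIuniq} $\Rightarrow$ \ref{as:uniq} is immediate, since \ref{as:GIuniq} is exactly \ref{as:uniq} together with the extra assertion that $\rho_\infty$ is gauge-invariant. For \ref{as:uniq} $\Rightarrow$ \ref{as:GIuniq}, I would use the remark preceding the theorem that a gauge-invariant stationary state exists (the set of gauge-invariant states being compact, convex and invariant); under the uniqueness in \ref{as:uniq} this state must coincide with $\rho_\infty$, which is therefore gauge-invariant, while its quasi-freeness is already furnished by Theorem \ref{theo:uniqueness}.

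I expect the block-diagonal reduction of the second paragraph to carry essentially all the content. The one point requiring care is that the two summands of $V(T_S,\Theta)$ have equal dimension: this is what makes the single condition ``$V(T_S,\Theta)=\C^{2L}$'' collapse onto the reduced condition ``$V(T_{S,0},\Theta_0)=\C^L$'' rather than onto two a priori independent requirements, and it is precisely here that the conjugation symmetry $V(\overline{T_{S,0}},\overline{\Theta_0})=\overline{V(T_{S,0},\Theta_0)}$ built into the gauge-invariant ansatz is used.
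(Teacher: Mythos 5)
Your proposal is correct and follows the same route as the paper: the paper's entire argument is the one-line remark that Assertion (3'') is equivalent to Assertion (3) of Theorem \ref{theo:uniqueness} for gauge-invariant systems, so that the result follows from that theorem together with the preservation of gauge invariance along the flow. Your block-diagonal computation of $V(T_S,\Theta)=V(T_{S,0},\Theta_0)\oplus\overline{V(T_{S,0},\Theta_0)}$ simply supplies the details the paper declares ``easy to prove,'' and it is sound.
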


Indeed, it is easy to prove that Assertion \ref{crit:GIkalman} is equivalent to Assertion \ref{crit:kalman} of Theorem \ref{theo:uniqueness} in the case of gauge-invariant systems. Note that if these assertions are verivied, then $A_0(t)$ converges to $0$ as $t\ap \infty$ and $\mm_0(t)$ converges to a solution of $-i[T_{S,0}, \mm_0]-\frac{1}{2} \set{\Theta_0 \Theta_0^*, \mm_0}+\Theta_0 M_{B, 0} \Theta_0^*=0$.

Theorem \ref{theo:convergence} also has an obvious simplification in the gauge-invariant case. 
\newline

As a conclusion, the study of gauge-invariant models is simplified: it is sufficient to study Equation \ref{eq:mm0} to study the convergence and to find the covariance matrix of the stationary state. 

Gauge-invariant models provides many interesting examples, as shown in the following section.

\section{Applications and examples in the gauge-invariant case}

In this section, we study a few examples of gauge-invariant systems that are exactly solvable. The first example demonstrate thermalisation (i.e. convergence to a Gibbs state for $H_S$). The second examples shows that thermalisation does not occurs in general, even when the bath is simple. In the third example, we study a model of fermionic chain, which can be seen as a spin chain as shown in Section \ref{sec:XYchain}.

\subsection{Thermalisation}
This example is similar to a model studied by Attal and Joye in \cite{AttalJoye}. It shows how a system may be forced to thermal equilibrium, but the setup is somewhat artificial:

We take the system $\hh_S$ to be a general fermionic space (of finite dimension) and $\hh_B$ a copy of $\hh_S$, with $T_{B,0}=T_{S,0}$. In each interactions, he systems interacts site by site, so $\Theta_0=I$. We suppose that the bath's subsistems are at thermal equilibrium, i.e.
\[
\omega=\rho_\beta=\frac{1}{Z_\beta} e^{-B^* T_{B,0} B}\,.
\]

Clearly, $\ker \Theta_0^*=0$ so the criteria of convergence and uniqueness of Theorem \ref{theo:uniqueness} is satisfied. Moreover, $\mm_\infty^0=M_B$ is a solution of the equation for a stationary state
\[
i[T_{S,0}, \mm^0]+\frac{1}{2}\set{I, \mm^0}= M_B^0\,.
\]
Hence $\rho_S=\rho_\beta$ is the unique stationary state: there is convergence to the thermal equilibrium at temperature $\beta$ for any inital condition $\rho(0)$.

\subsection{A simple bath}\label{subsec:onesite}
Let us assume that the bath is a Gibbs state at temperature $\beta$ with the most simple possible Hamiltonian: $H_B=N_B=\sum_{i=1}^K \left(c_{i}^B\right)^* c_i^B$. Then 
\[
M_{B,0}=\left(1-e^{-\beta}\right)Id_K 
\]
Then, writing Equation \ref{eq:mm0}, we see that the gauge-invariant quasi-free state with small covariance matrix 
\[
\mm_0=\left(1-e^{-\beta}\right)Id_L 
\]
is a stationary state. Note that it does not depends on $T_S$ or $\Theta$, though it commutes with $H_S$. This is somewhat counter-intuitive, since we would expect simple systems to thermalise for $T_S$. This is an effect of the high degeneracy of $H_S$ (it commutes with many non-trivial operators, including $N_S$). 
\newline
Here are two special cases:

\subsubsection{A fermionic chain interacting at one end}
Here is an example where the convergence and uniqueness criteria is satisfied: we take $\dim \hh_{B, 0}=1$ and the system is a chain of fermions interacting with the bath in one end of the chain. 
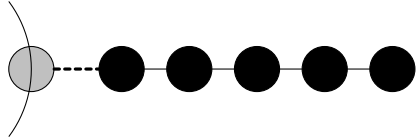
\begin{figure}[h]
\begin{center}
\begin{tikzpicture}[line cap=round,line join=round,x=0.3cm,y=0.3cm]
\draw [line width=0.4pt,fill=black,fill opacity=1.0] (-4,0) circle (0.3cm);
\draw [fill=black,fill opacity=1.0] (-7,0) circle (0.3cm);
\draw [fill=black,fill opacity=1.0] (-10,0) circle (0.3cm);
\draw [fill=black,fill opacity=1.0] (-13,0) circle (0.3cm);
\draw [fill=black,fill opacity=1.0] (-16,0) circle (0.3cm);
\draw [line width=0.4pt,fill=black,fill opacity=0.25] (-20,0) circle (0.3cm);
\draw [shift={(-25,0)}] plot[domain=-0.64:0.64,variable=\t]({1*5*cos(\t r)+0*5*sin(\t r)},{0*5*cos(\t r)+1*5*sin(\t r)});
\draw [line width=1.2pt,dash pattern=on 2pt off 2pt] (-18.99,0)-- (-17,0);
\draw (-15,0)-- (-5,0);
\end{tikzpicture}
\end{center}
\caption{A fermionic chain with one bath}
\end{figure}

Let us define 
\begin{align}
D=\begin{pmatrix}
0 & 1 &  & \\
0 & 0 & 1 & & 0\\
& 0 & 0 & 1 & \\
 & &\ddots & \ddots & \ddots 
\end{pmatrix} \label{eq:D}
\end{align}
and chose
\begin{align*}
T_{S,0}&=D+D^T & 
\Theta_0^*&=\begin{pmatrix} 1 & 0 & ... & 0 \end{pmatrix}\,.
\end{align*}

 Then $\left(T_S\right)^k \Theta$ is the form 
\[
\begin{pmatrix}
*\\
\vdots\\
1\\
0\\
\vdots
\end{pmatrix}
\]
where $*$ represents the $k$ first entries. Thus, Kalman's criterion is satisfied. 

\subsubsection{The star}

Now, here is an example where the criteria of convergence is satisfied but not the criteria of uniqueness: whe take $\hh_S$ to be a "star", with every site connected to a central site which interacts with the bath:
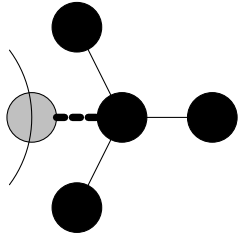
\begin{figure}[!h]
\begin{center}
\begin{tikzpicture}[line cap=round,line join=round,x=0.3cm,y=0.3cm]
\draw [fill=black,fill opacity=1.0] (-16,0) circle (0.33cm);
\draw [fill=black,fill opacity=1.0] (-18,4) circle (0.33cm);
\draw [fill=black,fill opacity=1.0] (-12,0) circle (0.33cm);
\draw [fill=black,fill opacity=1.0] (-18,-4) circle (0.33cm);
\draw [fill=black,fill opacity=0.25] (-20,0) circle (0.33cm);
\draw (-18,4)-- (-16,0);
\draw (-16,0)-- (-18,-4);
\draw (-16,0)-- (-12,0);
\draw [line width=2.8pt,dash pattern=on 3pt off 3pt] (-18.9,0)-- (-16,0);
\draw [shift={(-25,0)}] plot[domain=-0.64:0.64,variable=\t]({1*5*cos(\t r)+0*5*sin(\t r)},{0*5*cos(\t r)+1*5*sin(\t r)});
\end{tikzpicture}
\end{center}
\caption{A star-shaped system}
\end{figure}

We can for example define $T_{S,0}$ and $\Theta_0$ the following way: 
\begin{align*}
T_{S,0}&=\begin{pmatrix}
0 & 1 & \cdots &1 \\
1 & \\
\vdots &  & \scalebox{2}{0} \\
1
\end{pmatrix}
&
\Theta_0^*&= \begin{pmatrix} 1 & 0 & ... & 0 \end{pmatrix}.
\end{align*}

Then $\ker T_{S,0}$ is of dimension $L-1$ so $\Theta_0$ cannot be cyclic. However, its non zero eigenvalue is $\sqrt{n}$ with eigenvector
\[
\begin{pmatrix}		
\sqrt{n}\\		
1\\
\vdots\\
1
\end{pmatrix}
\] 
which is not in $\ker \Theta_0^*=\set{0}\times \C^{L-1}$, so the criteria of convergence from theorem \ref{theo:convergence} is satisfied: there is convergence to a (non-unique) stationary state for every initial state.

\subsection{The fermionic chain with interaction at both ends} \label{fermionic_chain}

We consider two bath interacting at each ends of a chain of sites with nearest neighbors interactions: 
\begin{figure}[h]
\begin{center}
\begin{tikzpicture}[line cap=round,line join=round,x=0.3cm,y=0.3cm]
\draw [line width=0.4pt,fill=black,fill opacity=0.25] (0,0) circle (0.3cm);
\draw [line width=0.4pt,fill=black,fill opacity=1.0] (-4,0) circle (0.3cm);
\draw [fill=black,fill opacity=1.0] (-7,0) circle (0.3cm);
\draw [fill=black,fill opacity=1.0] (-10,0) circle (0.3cm);
\draw [fill=black,fill opacity=1.0] (-13,0) circle (0.3cm);
\draw [fill=black,fill opacity=1.0] (-16,0) circle (0.3cm);
\draw [line width=0.4pt,fill=black,fill opacity=0.25] (-20,0) circle (0.3cm);
\draw [shift={(5,0)}] plot[domain=2.5:3.79,variable=\t]({1*5*cos(\t r)+0*5*sin(\t r)},{0*5*cos(\t r)+1*5*sin(\t r)});
\draw [shift={(-25,0)}] plot[domain=-0.64:0.64,variable=\t]({1*5*cos(\t r)+0*5*sin(\t r)},{0*5*cos(\t r)+1*5*sin(\t r)});
\draw [line width=1.2pt,dash pattern=on 2pt off 2pt] (-18.99,0)-- (-17,0);
\draw [line width=1.2pt,dash pattern=on 2pt off 2pt] (-1,0)-- (-3,0);
\draw (-15,0)-- (-5,0);
\end{tikzpicture}
\end{center}
\caption{A fermionic chain with two baths}
\end{figure}
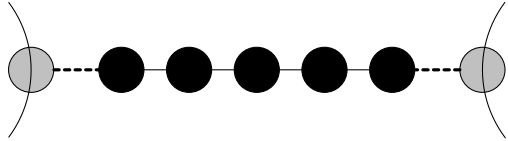

Thus, we take $K=2$ and
\begin{align*}
M_{B,0}&=\begin{pmatrix} n_1 & 0 \\ 0 & n_L \end{pmatrix} \\
\Theta_0&=
\begin{pmatrix}
\theta_1 & 0 \\
0 & 0 \\
\vdots & \vdots \\
0 & \theta_L
\end{pmatrix}\\
T_{S,0}&= D+D^T
\end{align*}
where $D$ is the upper-diagonal matrix defined in \ref{eq:D} and $0 \leq n_1, n_L \leq 1$, with $\theta_1, \theta_L \in (0, \infty]$. 

 This example has been solved by Karevski and Platini in \cite{KarevskiPlatini2009} in the case $\theta_1=\theta_L=1$ and what follows is just a generalisation of their computation in this slightly more general context. We are able to fully describe the stationary state. 

\begin{prop}
There is convergence to a unique stationary state, whose small covariance matrix is the form
\[
\mm_0=\begin{pmatrix}
p_1 & 0 & 0 & ... \\
0 & p_m &0 ... \\
0& 0 & p_m & ... \\
&&& \ddots& \\
0 & ...& &&p_m \\
0 & ... &&&& p_L
\end{pmatrix}+i c (D-D^T)
\]
where $p_1, p_m, p_L $ and $c$ are real numbers that are independent of $L$. They are defined as follows.
Let $s=4(\theta_1^2+\theta_L^2)+\theta_1^2\theta_L^2(\theta_1^2+\theta_L^2)$. Then 
\begin{align*}
p_1&=\frac{1}{s}\Big(\left(\theta_1^2\theta_L^4+\theta_1^4\theta_L^2+4\theta_1^2\right)n_1 +4\theta_L^2n_L\Big) \\
p_m&=\frac{1}{s}\Big(\theta_1^2\left( \theta_L^4+4\right)n_1+\theta_L^2\left(\theta_1^4 +4\right)n_L\Big) \\
p_L&= \frac{1}{s}\Big( 4\theta_1^2n_1+\left(\theta_L^2\theta_1^4+\theta_L^4\theta_1^2+4\Theta_L^4\right)n_L\Big)\\
c&= \frac{2}{s} \theta_1^2\theta_L^2 (n_1-n_L) \,.
\end{align*}
\end{prop}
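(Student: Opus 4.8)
The plan is to deduce convergence and uniqueness from the gauge-invariant criterion (Theorem \ref{theo:GIuniqueness}) and then to solve the stationary version of the covariance master equation \ref{eq:mm0} directly, using the fact that once uniqueness is known it suffices to exhibit \emph{one} Hermitian solution.

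First I would verify the Kalman condition \ref{crit:GIkalman} for the pair $(T_{S,0},\Theta_0)$. The range of $\Theta_0$ is $\mathrm{span}(e_1,e_L)$, and $T_{S,0}=D+D^T$ is the adjacency matrix of the path, so $T_{S,0}e_j=e_{j-1}+e_{j+1}$ (out-of-range terms dropped). Since $T_{S,0}$ is a Jacobi matrix with non-vanishing off-diagonal entries, $e_1$ alone is a cyclic vector, so $V(T_{S,0},\Theta_0)=\C^L$. Theorem \ref{theo:GIuniqueness} then yields a unique, quasi-free, gauge-invariant stationary state $\rho_\infty$ toward which every initial state converges; being gauge-invariant quasi-free, it is fully determined by its small covariance matrix $\mm_0$, the unique Hermitian solution of
\[
-i[T_{S,0},\mm_0]-\tfrac12\{\Theta_0\Theta_0^*,\mm_0\}+\Theta_0 M_{B,0}\Theta_0^*=0.
\]

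Next I would write this equation entrywise. Here $\Theta_0\Theta_0^*=\theta_1^2E_{11}+\theta_L^2E_{LL}$ and $\Theta_0 M_{B,0}\Theta_0^*=\theta_1^2n_1E_{11}+\theta_L^2n_LE_{LL}$ (matrix units $E_{jk}$) are diagonal and supported on the two end sites, while $[T_{S,0},\mm_0]_{jk}=(\mm_0)_{j-1,k}+(\mm_0)_{j+1,k}-(\mm_0)_{j,k-1}-(\mm_0)_{j,k+1}$. Motivated by the fact that the sources sit only at sites $1$ and $L$ and the coupling is nearest-neighbour, I would substitute the tridiagonal Hermitian ansatz of the statement: $(\mm_0)_{11}=p_1$, $(\mm_0)_{jj}=p_m$ for $2\le j\le L-1$, $(\mm_0)_{LL}=p_L$, $(\mm_0)_{j,j+1}=ic$, $(\mm_0)_{j+1,j}=-ic$, and $(\mm_0)_{jk}=0$ for $|j-k|\ge 2$.

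The key observation is that with this ansatz every equation indexed by $(j,k)$ whose indices avoid the two end sites holds identically: the diagonal bulk equations reduce to $ic-ic+ic-ic=0$, the nearest-neighbour bulk equations to $p_m-p_m=0$, and the farther-off-diagonal equations to cancellations of the constant value $ic$. Thus the only genuine constraints come from the four entries $(1,1)$, $(L,L)$, $(1,2)$ and $(L-1,L)$ (their transposes being automatic by Hermiticity), giving the linear system
\begin{align*}
\theta_1^2(n_1-p_1)&=2c, & \theta_L^2(n_L-p_L)&=-2c,\\
p_1-p_m&=\tfrac12\theta_1^2 c, & p_m-p_L&=\tfrac12\theta_L^2 c.
\end{align*}
Eliminating $p_1$ and $p_L$ leaves two equations for $p_m$ and $c$, whose solution is $c=\tfrac{2}{s}\theta_1^2\theta_L^2(n_1-n_L)$ with $s$ as defined, and back-substitution produces the stated $p_1,p_m,p_L$. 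The arithmetic is routine; the one point demanding care is the boundary bookkeeping, i.e. checking that the entries adjacent to the ends (such as $(2,2)$, $(L-1,L-1)$, $(2,3)$ or $(1,3)$) impose no further constraint beyond the four above, and this verification — elementary but the genuine crux — is where I expect the main effort to lie.
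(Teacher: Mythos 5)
Your proposal is correct and follows essentially the same route as the paper: verify Kalman's criterion to get uniqueness, substitute the tridiagonal ansatz $P+ic(D-D^T)$ into the stationary equation for $\mm_0$, observe that all bulk entries cancel, and reduce to the same four boundary equations (the paper's $a_1=a_2=a_3=a_4=0$), which yield the stated $p_1,p_m,p_L,c$. You are merely more explicit than the paper about the cyclicity of $e_1$ for the Jacobi matrix $T_{S,0}$ and about checking the entries adjacent to the ends, but the argument is the same.
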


{\bf Remarks:} 
\begin{enumerate}
\item The numbers $p_1, p_m, p_L$ are the average number of particles in their respective sites. As expected, they are barycenters of $n_1$ and $n_2$, the average number of particles in each bath. If $\theta_1=\theta_2$, the average number of particles in any site exept the first and the last of the chain is $p_m=(n_1+n_2)/2$. But if we take different values for $\theta_1$ and $\theta_2$, we may observe a strange behavior: if $-\sqrt{2}< \theta_1 <\sqrt{2}$ then increasing slightly $\theta_1$ makes $p_m$ closer to $n_L$. Indeed, $p_m=a(\theta_1, \theta_2)n_1+(1-a(\theta_1, \theta_2))n_L)$ where $a(\theta_1, \theta_2)=\theta_1^2\left( \theta_L^4+4\right)/s$ and if $\theta_1^2<0$ then 
\[
\frac{\partial}{\partial \theta_1} \left(\frac{a(\theta_1, \theta_2)}{1-a(\theta_1, \theta_2)}\right) <0\,.
\]

\item The number $c$ is the current of particles. It is independent of $L$, which shows that the system does not verify any kind of Fourier's law. This is not a surprise, because harmonics systems with invariance by translation as our are known to contradicts Fourier's law.
\end{enumerate}

\begin{proof}
Similarly to the fermionic chain interacting at one end, it is easy to show that Kalman's criterion is satisfied, so there is a unique stationary state. Hence it is sufficient to show that the small covariance matrix described in the proposition is a solution of 
\[
-i[T_{S,0}, \mm_0]-\frac{1}{2} \set{\Theta_0 \Theta_0^*, \mm_0}+\Theta_0 M_{B, 0} \Theta_0^*=0\,.
\]
Let $\mm_0=P+ic(D-D^T)$ be the covariance matrix described in the proposition. Write 
\begin{align*}
a_1&=-2c+\theta_1^2(n_1-p_1) & a_3&= p_1-p_m-c\frac{\theta_1^2}{2}\\
a_2&=2c+\theta_L^2(n_L-p_L) & a_4&= p_L-p_m+c\frac{\theta_L^2}{2}\,.
\end{align*}

Then a simple computation shows that the real part of the equation for $\mm_0$ is
\begin{align*}
-i[T_{S0}, ic(D-D^T)]-\frac{1}{2}\set{\Theta_0\Theta_0^*, P}+\Theta_0M_{B,0} \Theta_0^*&= 
\begin{pmatrix} a_1 & 0 & ... & 0\\
0 & 0 & ... \\
\vdots&& \ddots & 0 \\
0&...&0& a_2\\
\end{pmatrix}
\end{align*}
and the imaginary part is 
\begin{align*}
-i[T_{S0}\, , \, P]-\frac{1}{2} \set{\Theta_0\Theta_0^* \, , \, ic(D-D^T)}&=
 i \begin{pmatrix} 
 0 & -a_3 & 0 & ...&0 \\
 a_3 &0 \\
 0&  &\ddots & & \\
\vdots &  & &0 &a_4 \\
0 &      & &-a_4 & 0
\end{pmatrix}\,.
\end{align*}
Thus, $\mm_0$ is solution if and only if $a_1=a_2=a_3=a_4=0$. But $p_1, p_m, p_L$ and $c$ are precisely the solution of these equations, so $\mm_0$ is the covariance matrix of the stationary state. 
\end{proof}

{\bf Remark:} Note that the simple form of the solution is due to the invariance by translation of $T_S$. If we disturb $H$ a little (by example, we add $\varepsilon \ket{i}\bra{i}$ with $\varepsilon >0$ for some fixed $i \in {1,..., L}$), the solution is no more tri-diagonal.

\section{The XY-model} \label{sec:XYchain}

The one-dimensional XY model give rise to an important example of non Gauge-invariant fermionic system. The closed system version of this model was solved in \cite{Lieb1961}, \cite{Barouch1970a1971}. The continuous-time limit repeated interaction process applied on the XY-model is studied in Dharhi \cite{Dharhi2008}, where the convergence to a unique stationary state was proved in the case where the baths where in a non-degenerate state. The explicit form of the stationary state n the anisotropic case was derived in \cite{Platini2008}. It is also studied in \cite{Prosen2010}, in the context of quantum semigroups, without underlying repeated interaction process. 

\subsection{Description of the XY-model}\label{subsec:JordanWigner}

The XY-model is a chain of $L$ spin interacting with their nearest neighbors; it is represented by $\hh=\left(\C^2\right)^{\otimes L}$ with Pauli matrices $\sigma_n^x, \sigma_n^y, \sigma_n^z$ (with $\sigma_n^\gamma$ commuting with $\sigma_m^\beta$ for $n\neq m$.) The Hamiltonian is defined as
\begin{align}\label{xy}
H= -\frac{1}{2}\sum_{n=1}^{L-1}\left( \frac{1+\kappa}{2} \sigma_n^x \sigma_{n+1}^x + \frac{1-\kappa}{2} \sigma_n^y \sigma_{n+1}^y \right) -\frac{h}{2} \sum_{n=1}^L \sigma_z\,,
\end{align}
where $\kappa \in [0,1]$ (the choice of $(1+\kappa)/2$ and $(1-\kappa)/2$ as the coupling constant for $\sigma^x$ and $\sigma^y$ is general and will simplify the notations).

We will study a version of this model where it in interaction at both ends with two spin system: we take the continuous-time limit of repeated interactions, with $H_S$ taken as above and 
\[
V=\frac{1}{2}\theta_1\left( \frac{1+\kappa}{2} \sigma_{B1}^x \sigma_{1}^x + \frac{1-\kappa}{2} \sigma_{B1}^y \sigma_{1}^y \right)  +
\frac{1}{2}\theta_1\left( \frac{1+\kappa}{2} \sigma_{L}^x \sigma_{BL}^x + \frac{1-\kappa}{2} \sigma_{L}^y \sigma_{BL}^y \right)\,.
\]
where $\theta_1, \theta_2 \in (0, \infty)$. This model can be mapped to a fermionic model with the help of the Jordan-Wigner transform. 

\subsection{The Jordan-Wigner transform}

The setup of the Jordan-Wigner transform is the following: let $\ket{u}, u \in \{-1, 1\}^L$ be the "up z, down z" basis of $\hh$, characterized by $\sigma_n^z\ket{u}=u_n \ket{u}$. We define fermionic operators on $\hh$ the most natural possible way:
\[
c_i \ket{u}=\left \{
\begin{array}{l}
0 \text{ if } u_i=0 \\
\prod_{k=1}^{i-1} (-1)^{u_i} \ket{u_1, \cdots, u_i-1, \cdots u_L} \text{ if } u_i=1
\end{array} \right.\,.
\]
The operators $c_i, c_i^*$ make $\hh$ a fermionic space and they are related to the $\sigma_n^\gamma$ by many relations: define

\begin{align}
\sigma_n^{\pm}&=\frac{1}{2}(\sigma_n^y\pm i \sigma_n^y) \\
A_n&=\Pi_{j=1}^{n-1} (-\sigma_j^z)
\end{align}
then
\begin{align}
\sigma_n^z&=c_n^* c_n- c_n c_n^*=\frac{i}{2}(\gamma_n\gamma_{n+L}-\gamma_{n+L}\gamma_{n}) \\
\sigma_n^x&=- A_n (c_n^*+ c_n)=A_n\gamma_n\\
\sigma_n^y&= i A_n(c_n^*- c_n)=A_n\gamma_{n+L}\,.
\end{align}
and thus
\begin{align}
\sigma_n^x \sigma_{n+1}^x&=(c_n^*-c_n)(c_{n+1}^*+c_{n+1})=\frac{i}{2}(\gamma_{n+L}\gamma_{n+1}-\gamma_{n+1}\gamma_{n+L}) \\
\sigma_n^y \sigma_{n+1}^y&=- (c_n^*+c_n)(c_{n+1}^*-c_{n+1})=\frac{i}{2}(\gamma_n\gamma_{n+1+L}-\gamma_{n+1+L}\gamma_n)\,.
\end{align}

Note that this transformation is dependent in the way we ordered the subsystems of $\hh=(\C)^{\otimes L}$.

We see that the Hamiltonian of the XY-model writes
\[
H=\frac{1}{2}F^* T F+ c\,Id
\]
where $T$ is a $2L\times 2L$-matrix; in the creation-annihilation, it is the form 
\[
T=\begin{pmatrix}
A & B \\
-B & -A\\
\end{pmatrix}\,.
\]
To describe $A$ and $B$, let $D$ be the matrix with 1 on the upper diagonal and 0 elsewhere; then
\begin{align}
A&=- h\,Id -\frac{1}{2} (D+D^T)\\
B&=-\kappa/2 (D-D^T)\,.
\end{align}
In the Majorana basis, it is the form
\[
T=\frac{1}{2}\begin{pmatrix}
0 & i C \\
-iC & 0
\end{pmatrix}
\]
where 
\[
C=h\,Id+\frac{1-\kappa}{2} D+\frac{1+\kappa}{2} D^T.
\]

This Hamiltonian can then by diagonalized explicitely with the method of Subsection \ref{subsec:quadratic_hamiltonian}, but it is not needed here. 

Note that if we include longer-range interactions in $H_S$, like $\sigma_n^x \sigma_{n+2}^y$, the Hamiltonian obtained under the Jordan-Wigner transform is no longer quadratic and the following study breaks down.

\subsection{Repeated interaction on the XY-model}

Let us consider a spin chain in repeated interaction with two spin chains by its ends; since it is only in contact with the ends of these chains, we can assume they have only one spin each, thus the space of the bath will be $\hh_B=\hh_{B1}\otimes \hh_{B1}$ where $\hh_{B1}=\C^2$ represents the spin on the left and $\hh_{B2}=\C^2$ represents the state on the right.  We will show that there is convergence and uniqueness of the stationary state in the case $\kappa \neq 1$ or $h\neq 0$. Let $\sigma^\alpha_n$ be the $\alpha$-Pauli matrix for the $n$-th spin as in Subsection \ref{subsec:JordanWigner} and let $\sigma^\alpha_{k,B}$ be the $\alpha$-Pauli matrix for the bath (with $k=1,2$). The Hamiltonian of the system is $H_S=H$ as defined in Subsection \ref{subsec:JordanWigner} and we choose the interaction Hamiltonian to be
\[
V=-\frac{\theta_1}{2}\left(\frac{1+\kappa}{2} \sigma_{1,B}^x \sigma_1^x+\frac{1-\kappa}{2} \sigma_{B1}^y \sigma_1^x \right) -\frac{\theta_2}{2} \left(\frac{1+\kappa}{2} \sigma_{L}^x \sigma_{B2}^x+\frac{1-\kappa}{2} \sigma_{L}^y \sigma_{2,B}^y \right)\,.
\]
Let us consider the Jordan-Wigner transform on $\hh_{B1}\otimes \hh_S\otimes \hh_{B2}$ where the subsystems are in the same order as the chain. Then we can write the interaction in terms of the creation and annihilations operators on $B$ and $S$ :
\[
V=\frac{1}{2}\left(F_S^*  \Theta F_B+F_B^*\Theta^*F_S\right)
\]
where in the creation/annihilation basis, 
\[
\Theta=\begin{pmatrix}
A_\Theta & B_\Theta  \\
-B_\Theta & -A_\Theta
\end{pmatrix}
\]
with
\begin{align*}
A_\Theta&=-\frac{1}{2}\begin{pmatrix}
\theta_1 & 0\\
\vdots & \vdots\\
0 & \theta_L
\end{pmatrix}&
B_\Theta&=-\frac{\kappa}{2} \begin{pmatrix}
-\theta_1 & 0 \\
\vdots & \vdots \\
0 & \theta_L
\end{pmatrix}\,.
\end{align*}
In the Majorana basis, we have
\[
\Theta=\begin{pmatrix}
0 & iC_\Theta \\
-iC_\Theta & 0
\end{pmatrix}
\]
where 
\[
C_\Theta=-\frac{1}{2}\begin{pmatrix}
(1+\kappa)\theta_1 & 0\\
\vdots & \vdots\\
0 & (1-\kappa)\theta_L
\end{pmatrix}
\]
Note that since we are forced to put the tensor products in the order
 $\hh_{B1}\otimes \hh_{S}\otimes \hh_{B2}$,
  we are not using one of the standard isomorphisms $E_{SB}$ and $E_{BS}$ between $\hh_S\otimes \hh_B=\Gamma(\hh^0_S)\otimes \Gamma(\hh_S^0)$ and $\Gamma(\hh^0_B\oplus \hh_S^0)$
   but the isomorphism $E_{B1SB2}=E_{B1(SB2)}\circ E{SB2}$ (see subsection \ref{subsec:tensorProduct}).
    Thus, we have to assume that the state of the bath is of the form $\rho_B=\rho_{B1}\otimes \rho_{B2}$ with $\rho_{B1}$ and $\rho_{B2}$ quasi-free,
     otherwise the state $\rho\otimes \rho_B$ may not be quasi-free even when $\rho$ and $\rho_B$ are quasi-free. The CP map $\Phi^*$ in the quantum semigroup is then the form
     \[
     \Phi^*(\rho)=\sum_{i, j=1}^L [ \Theta_1 M_{B1} \Theta_1^*  ]_{i, j}\, \gamma_j  \rho \gamma_i+  [ \Theta_2 M_{B2} \Theta_2^*  ]_{i, j}\,\gamma_j (-1)^{N_S} \rho(-1)^{N_*S} \gamma_i.
     \]
The theorems of Section \ref{sec:repeated_interaction} are still true in this case. We can then study the convergence and uniqueness properties for this quantum semigroup. 

The following proposition have already been proved in slighltly less general forms in \cite{Dharhi2008} and in \cite{Prosen2010}.

\begin{prop}
In the interacting XY chain as defined above, when $\theta_1$ and $\theta_2$ are nonzero, there is convergence and uniqueness of the stationary state if $\kappa^2 \neq 1$ or $h \neq 0$.
\end{prop}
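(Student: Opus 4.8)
The plan is to reduce the statement to the Kalman condition of Theorem \ref{theo:uniqueness} and then exploit the nearest-neighbour structure of the chain. Since $T_S$ and $\Theta$ are real in the creation/annihilation basis, I would apply the simplification of Subsection \ref{subsec:realcase}. Writing $C_T=\tfrac12\big(h\,Id+\tfrac{1-\kappa}{2}D+\tfrac{1+\kappa}{2}D^T\big)$ for the real $L\times L$ matrix attached to $T_S$ and $R=\ran(C_\Theta)\subseteq\C^L$ for the range of the coupling, assertion \ref{crit:kalman} becomes the pair of controllability statements
\begin{align*}
\vect\Big(\bigcup_{k\geq0}\big(C_TC_T^*\big)^k\big(R+C_TR\big)\Big)&=\C^L,\\
\vect\Big(\bigcup_{k\geq0}\big(C_T^*C_T\big)^k\big(R+C_T^*R\big)\Big)&=\C^L.
\end{align*}
From the explicit form of $\Theta$ one reads that $R$ always contains $e_1$ (with weight proportional to $(1+\kappa)\theta_1\neq0$) and contains $e_L$ exactly when $(1-\kappa)\theta_L\neq0$, that is, when $\kappa\neq1$. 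The whole difficulty is therefore whether the chain can be filled starting, essentially, from its left end alone.

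First I would treat the generic case $\kappa^2\neq1$, i.e. $\kappa\in[0,1)$. Then $C_T$ is a genuine tridiagonal matrix with \emph{both} off-diagonals nonzero. Since $C_T$ has a nonzero $D^T$-component and $D^Te_1=e_2$, the seed $R+C_TR$ contains $e_1$ and $e_2$; moreover $C_TC_T^*$ is pentadiagonal with nonzero entries at distance two (proportional to $\tfrac{1-\kappa}{2}\cdot\tfrac{1+\kappa}{2}=\tfrac{1-\kappa^2}{4}\neq0$), so iterating it links $e_1,e_3,e_5,\dots$ and $e_2,e_4,\dots$. Having both parities in the seed, one reaches every $e_i$ and the span is all of $\C^L$; the transposed statement is identical (there $C_T^*e_1$ produces $e_2$ through the other off-diagonal). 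Thus the Kalman condition holds for every value of $h$, which matches the claim that $\kappa^2\neq1$ alone suffices.

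The delicate case, and the one that forces the hypothesis, is $\kappa=1$: this is where I expect the main obstacle. Now the $\sigma^y$-part of the coupling disappears, $R$ collapses to $\vect\{e_1\}$, and $C_T$ is proportional to $h\,Id+D^T$, which loses one off-diagonal; consequently the distance-two entries of $C_TC_T^*$ vanish and the naive first-order propagation from $e_1$ stalls. The key observation to make precise is that connectivity is restored at second order through the magnetic field: a direct computation gives $C_TC_T^*\propto h^2Id+h(D+D^T)+(Id-e_1e_1^{T})$, which is tridiagonal with both off-diagonal entries equal to $h$. Hence $C_TC_T^*$ is an \emph{irreducible} tridiagonal (Jacobi) matrix precisely when $h\neq0$, and then its iterates applied to the seed $\{e_1,e_2\}$ (note that $C_Te_1$ has a nonzero $e_2$-component) sweep out $\C^L$. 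The transposed statement is handled the same way, with $C_T^*C_T\propto h^2Id+h(D+D^T)+(Id-e_Le_L^{T})$ irreducible for $h\neq0$ and $e_1$ a cyclic vector. This gives the Kalman condition when $\kappa^2=1$ and $h\neq0$, and Theorem \ref{theo:uniqueness} then delivers convergence to a unique (necessarily quasi-free) stationary state.

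Since the proposition asserts only a sufficient condition, I would stop here; the computations above also make transparent why the excluded case $\kappa=1,\,h=0$ fails (then $C_TC_T^*$ is diagonal and the orbit of the seed never leaves $\vect\{e_1,e_2\}$), but establishing that converse is not required.
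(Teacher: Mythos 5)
Your proposal is correct and follows essentially the same route as the paper: reduce to the real-case form of the Kalman criterion, then for $\kappa^2\neq 1$ use the nonzero distance-two entries $\propto(1-\kappa^2)$ of $C_TC_T^*$ (and of $C_T^*C_T$) to propagate from the seed $\{e_1,C_Te_1\}$, and for $\kappa^2=1$, $h\neq0$ use that $C_TC_T^*$ becomes an irreducible tridiagonal matrix with off-diagonal entries proportional to $h$. The only difference is presentational: you spell out the $\kappa=1$ computation ($C_TC_T^*\propto h^2Id+h(D+D^T)+Id-e_1e_1^T$ and the cyclicity of $e_1$) which the paper leaves as ``easy to prove as above.''
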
 	
In the isotropic case $\kappa=0$, an explicit formula for the stationary state can be obtained. 

\begin{proof}
We are in the real case (see Subsection \ref{subsec:realcase}). Hence it is sufficient to study the range of $\left(C_TC_T^*\right)^k C_\Theta$ and $\left(C_TC_T^*\right)^kC_T C_\Theta$ on the one hand and of $\left(C_T^*C_T\right)^k C_\Theta$ and $\left(C_T^*C_T\right)^kC_T^* C_\Theta$ on the other hand. 

Let us first study the case where $\kappa^2 \neq 1$. Then $C_T$ is tri-diagonal, so $C_TC_T^*$ has coefficients of indices $(i, i+k)$ equal to zero when $k\geq 2$ and its coefficient of indices $(i, i+2$ are all equals to $(1-\kappa^2)/4$.  Thus, the first column of  $\left(C_TC_T^*\right)^k C_\Theta $ is the form
\[
\begin{pmatrix}
* \\
\vdots  \\
\left(\frac{1-\kappa^2}{4}\right)^k \theta_1  \\
0  \\
\vdots 
\end{pmatrix}
\]
with the last nonzero term in position $2n$. Now, the first column of $\left(C_TC_T^*\right)^kC_T C_\Theta$ is of the same form, but with last nonzero coefficient in position $2n+1$. Hence these vectors generate $\C^L$. Now $C_T^* C_T$ is of the same form  as $C_T C_T^*$, so a similar proof shows that the first columns of $\left(C_T^*C_T\right)^k C_\Theta$ and $\left(C_T^*C_T\right)^kC_T^* C_\Theta$ generate $\C^L$ as well. This proves Kalman's criterion. 
\newline

Now let us assume that $\kappa^2=1$. Up to exchanging $x$ and $y$, we can assume that $\kappa=1$. Then if $h\neq 0$, $C_TC_T^*$ is tri-diagonal with coefficient of indices $(i, i+1)$ equal to $h/4$, so it is easy to prove Kalman's criterion as above.   
\end{proof}

In the case $\kappa=0$ (the XX chain) the matrix $B_T$ and $B_\Theta$ are zero and the system is actually the Gauge-Invariant fermionic chain studied in Subsection \ref{fermionic_chain}. This has been remarked by Karevski and Platini in \cite{KarevskiPlatini2009}. The consequence is that, as for the fermionic chain, the magnetization currents in the XX chain does not depend on the length of the chain.

\nocite{*}
\bibliography{fermions}

\end{document}